\newtheorem{thm}{Theorem}[section]
\newtheorem{proposition}[thm]{Proposition}
\newtheorem{corollary}[thm]{Corollary}
\newtheorem{remark}[thm]{Remark}
\newtheorem{lemma}[thm]{Lemma}
\DeclareMathOperator*{\sgn}{sgn}
\DeclareMathOperator*{\E}{\mathbb{E}}
\DeclareMathOperator*{\PP}{\mathbb{P}}
\DeclareMathOperator*{\R}{\mathbb{R}}
\newcommand\blfootnote[1]{%
  \begingroup
  \renewcommand\thefootnote{}\footnote{#1}%
  \addtocounter{footnote}{-1}%
  \endgroup
}
\newcommand{\w}{\mathfrak{w}}
\newcommand{\rr}{\mathfrak{r}}
\begin{document}

\title{\textbf{Optimal execution and speculation \\with trade signals}}
\date{\vspace{-20pt}}
\author[a]{Peter Bank}
\author[b,c]{\'{A}lvaro Cartea}
\author[a,b]{Laura K\"orber}

\affil[a]{\footnotesize Institut f\"ur Mathematik, Technische Universit\"at Berlin, Berlin, Germany}
\affil[b]{\footnotesize Oxford-Man Institute of Quantitative Finance, University of Oxford, Oxford, UK}
\affil[c]{\footnotesize Mathematical Institute, University of Oxford, Oxford, UK}
\maketitle

\begin{abstract}
    We propose a price impact model where changes in prices are purely driven by the order flow in the market. The stochastic price impact of market orders and the arrival rates of limit and market orders are functions of the market liquidity process which reflects the balance of the demand and supply of liquidity.  Limit and market orders mutually excite each other so that liquidity is mean reverting.  We use the theory of Meyer-$\sigma$-fields to introduce a short-term signal process from which a trader learns about imminent changes in order flow. Her trades impact the market through the same mechanism as other orders.  With a novel version of Marcus-type SDEs  we efficiently describe the intricate timing of market dynamics at moments when her orders concur with that of others. In this setting, we examine an optimal execution problem and derive the Hamilton--Jacobi--Bellman (HJB) equation for the value function of the trader. The HJB equation is solved numerically and we illustrate how the trader uses the signals to enhance the performance of execution problems and to execute speculative strategies. 
\end{abstract}
\begin{description}
\item[Mathematical Subject Classification (2022)] 91B70, 93E20
\item[Keywords] Optimal execution, speculation, trade signal, Meyer $\sigma$-field, stochastic optimal control, Marcus SDEs
\end{description}
\section{Introduction}\blfootnote{P.~Bank and L.~K\"orber acknowledge support by Deutsche Forschungsgemeinschaft through IRTG 2544, P.~Bank also acknowledges support through SFB TRR 388, Projects B03 and B05. L.~K\"orber is grateful to the Oxford-Man-Institute of Quantitative Finance for their generous hospitality. We are also grateful to participants at the Mathematical Finance Seminar at Imperial College, the German Probability and Statistics Days in Essen, the SIAM Conference on Financial Mathematics and Engineering  in Philadelphia and the Women in Mathematical Finance Conference at Rutgers University.}%
In this paper, we propose a reduced-form model where the evolution of prices is determined by the order flow in the market. In our model, changes in the asset's price are caused by market order flow, which is modelled as a pure jump process. Thus, we do not require exogenous dynamics for the evolution of some unaffected fundamental price as is generally proposed in standard models including \cite{AlmgrenChriss:2000} and \cite{ObizhaevaWang:2013}. Here, the price impact of a market order depends on the market's liquidity which is given by the difference between the volume of posted limit orders and the volume of market orders and cancellations. When this difference is high, i.e., when the market is liquid, the price impact of the next market order is low; and when this difference is low, the price impact is high. Our model may be viewed as a permanent price impact model with finite \emph{market depth} similar to that in \cite{HubermanStanzl:04}. However, our impact specification depends on liquidity and we do not require an exogenous fundamental price.

For simplicity, we do not distinguish between the volumes posted on the bid or ask side of the book, a model extension we leave for future research. This allows us to specify the \emph{market's tightness} with a constant spread. The \emph{market's resilience} is captured by letting the arrival rates of market orders, cancellations and limit orders depend on the liquidity in the market: when the level of liquidity is low, the arrival rate of further market orders is low and the arrival rate of limit orders is high, and vice-versa when liquidity is high, see \cite{CarteaJaimungalWang:19} who provide empirical evidence of this effect based on data from the Nasdaq stock exchange. As a consequence, resilience is an endogenous feature in our model as opposed to models such as the one by \cite{ObizhaevaWang:2013} where resilience is an exponential relaxation of price impact towards zero.

We introduce a trader who uses market orders to complete an execution programme, where her objective is to maximize expected utility of terminal wealth and she does not provide liquidity to the book. The trader receives a private signal about the arrival of other limit and market orders and uses it to execute informed trades before other orders arrive in the book. Delicate timing-issues arise when, in reaction to a signal, the trader pre-emptively sends orders just before other orders arrive in the market, while having the option to trade after other orders arrive. As a result, our continuous-time market dynamics have to accommodate three simultaneous jump-shocks in a timely manner. For this, we present a novel Marucs-type SDE system with which we compute the dynamics of the trader's portfolio value. The system of SDEs highlights the positive effect of the arrival of liquidity and reveals the adverse effects of the trader's price impact and transaction costs. Interestingly, the wealth dynamics also reveal how, for a trader with a long position, the arrival of market buy orders is beneficial if their upwards pressure on market prices is not offset by their detrimental effect on market liquidity. An analogous effect is found for short positions and market sell orders.

As mentioned, the trader's market orders affect the price of the asset and the liquidity in the market in the same way as the \emph{external} orders sent by other market participants. Due to the resilience of the book in our model, the trader's orders will trigger more liquidity provision, an indirect market impact which may incentivise pump-and-dump schemes. In such cases,  to ensure a well-functioning market, we introduce a trading halt mechanism that imposes a minimum liquidity requirement for the market to continue operating so that the profitability of pump-and-dump schemes is limited. When a liquidity taking order exhausts liquidity below the minimum liquidity requirement,  trading is halted, and positions are executed in an ``auction'' where the price is randomly drawn from a Gaussian distribution.

In a market with trading halts, the trader's optimal execution problem is well-posed and the value function is non-degenerate. The optimization problem is a singular stochastic control problem when trading is continuous and it results in an impulse control problem when trading at a minimum lot size. Under some regularity assumptions, we prove continuity of the value function. This is non-trivial because traders may be confronted with abruptly differing liquidity evolutions even when starting with very similar initial liquidity. This is due to the jumps corresponding to external. As a result, pathwise closeness cannot be expected and the dynamics of liquidity have to be controlled carefully, which we do with the Marcus-SDE dynamics that provides a remarkably convenient tool for this stability analysis.

We apply the dynamic programming approach to derive the Hamilton–Jacobi–Bellman Quasi-Variational Inequality~(HJBQVI) for both continuous and discrete trading. As a novelty, our HJBQVI contains a double integral straddling a sup-operator that yields the optimal signal-based trade.

Mathematically, the trader's information flow with signal is modeled as a Meyer-$\sigma$-field, see \cite{Lenglart:80}. The technique of Meyer-$\sigma$-fields was first applied in stochastic optimization by \cite{ElKaroui:81} in the context of optimal stopping. \cite{BankBesslich:2020} apply the theory of Meyer-$\sigma$-fields in a singular stochastic control problem which is solved with convex analysis tools. In Merton's optimal investment problem, \cite{BankKoerber:2022} use a Meyer-$\sigma$-field to incorporate a short-term signal about jumps in the price of the asset and use dynamic programming methods to solve for the optimal investment strategy.

We use simulations to study the trader's execution strategies. The trader uses signals about liquidity taking orders to optimize the times of execution and the trading volume. Specifically, upon receiving the signal of an imminent arrival of a liquidity taking order, the trader may submit a market order before the external order arrives and impacts the price. As the bid-ask spread widens, a signal about liquidity provision becomes irrelevant for the execution programme of a trader because there is no incentive to execute a market order before liquidity increases in the book, so price impact of market orders decreases.

For a narrow bid-ask spread, we find that the trader uses the signal about liquidity provision to start speculative roundtrip trades when liquidity is low and completes the roundtrip after liquidity has recovered upon receiving a signal about a liquidity taking order.

Trading on information from signals increases the average terminal wealth and it also increases the variance of the distribution of terminal wealth. In other words, to extract value from the signal, the trader uses strategies that increase the expected gain and that increase the risk of the financial performance of the execution programme.

There exists a broad range of literature on informed trading. Important seminal works are those by \cite{Kyle:85} and \cite{Back:92}. More recent work on optimal trading with market signals is in \cite{CarteaJaimungal:16} who examine optimal execution with a general Markovian signals and derive closed form optimal strategies; see also \cite{CasgrainJaimungal:19} who incorporate latent factors. Similarly, \cite{LehalleNeumann:22} and \cite{NeumannVoss:22} study optimal trading with signals when there is transient price impact and \cite{Belaketal:18} use non-Markovian finite variation
signals. For a market maker, \cite{CarteaWang:19} show how to use a signal of the trend of the price of an asset in optimal liquidity provision strategies; similarly \cite{LehalleMounjid:2017} study optimal market maker strategies with a signal on liquidity imbalance. More recently, \cite{Carteaetal:22} use signatures of the market to generate signals and \cite{CarteaSanchez:22} study how a broker provides liquidity to informed and uninformed traders.

The feature that the trader's market orders influence the arrival rate of other limit and market orders in the same way as external orders is a novelty compared with the existing literature on stochastic optimal control with Hawkes processes. For instance, \cite{AlfonsiBlanc:2016} solve an optimal execution problem in a modified version of the model in \cite{ObizhaevaWang:2013} where the external order flow is modelled as a Hawkes process that is not influenced by the trader.
Similarly, \cite{CarteaJaimungalRicci:2018} study a framework where the fill rates of the trader's controlled limit order flow are driven by an external, uncontrolled Hawkes process. The work of \cite{CayeMuhlekarbe:2014} proposes a modification of the \cite{AlmgrenChriss:2000} model where the past orders of the trader have a self-exciting effect on the price impact. In \cite{FuHorstXia:2020}, the trader influences the base intensity of the mutually exciting external market order dynamics so that the order flow of the trader influences the intensity process in a different way from the external order flow.

The remainder of this paper is organized as follows: Section~\ref{chapter:model} presents the market model and introduces a trader who receives signals on the order flow. Section~\ref{chapter:optimization} examines the problem of optimal investment and execution in a market with trading halts and illustrates the performance of the optimal strategy with trading signals. Results and proofs are collected in the Appendix.
\section{The model}\label{chapter:model}
In this section, we present a model of stock prices where innovations in prices are driven by the flow of market orders. First, we introduce a market model where the arrival of market orders and limit orders is driven by a marked Poisson point process. The arrival rate of orders is a function of the liquidity in the market. At times of high liquidity, market orders arrive more frequently, while at times of low liquidity, the arrival rate of limit orders increases. This feature of the model introduces resilience in the supply and demand of liquidity. Next, we introduce a strategic trader who receives signals about order flow and executes market orders. In our model, all orders arriving in the market, external or those of the trader, have the same effect on the dynamics of liquidity and prices.

\subsection{The uncontrolled model}\label{section:uncontrolled model}
In the following, we present a model for the price dynamics $(P_t)_{t\geq 0}$ of a single asset where price changes are driven by the market order flow $(M_t)_{t\geq 0}$. Market liquidity $\lambda_t$ measures the capacity of the market to fill an incoming market order at every point in time $t\geq 0$. 
The change in liquidity is the difference between the volume of incoming and cancellations of limit orders $(L_t)_{t\geq 0}$ and the volume of incoming market orders. For simplicity, we assume that the liquidity of the buy and sell side are the same. So, the liquidity process $(\lambda_t)_{t\geq 0}$ satisfies the dynamics
\begin{align}\label{eq:dynamics_liquidity}
d\lambda_t&=dL_t-|dM_t|.
\end{align}
Many specifications for the order flows $L$ and $M$ are conceivable. As a tractable possibility we let them be driven by a marked Poisson point process and put
\begin{align}
dL_t&=\int_{E\times \mathbb{R}_+}\mathbbm{1}_{\{y\leq f(\lambda_{t-})\}}\rho(e)N(dt,de,dy),\label{eq:dynamics_LO}
\end{align}
and 
\begin{align}\label{eq:dynamics_MO}
dM_t&=\int_{ E \times \mathbb{R}_+}\mathbbm{1}_{\{y\leq g(\lambda_{t-})\}}\eta(e)N(dt,de,dy).
\end{align}
Here, $N$ is a marked Poisson point process on $[0,\infty) \times E \times \mathbb R_+$ with compensator $ dt \otimes \nu(de) \otimes dy$ where we assume $\nu(E)=1$. The mappings $\eta, \rho \in L^1(\nu)$ determine the volume of the order associated with a mark $e$ that the point process $N$ sets in the Polish mark space $E$, and $\nu(de)$ is the probability for this mark.  
When $\rho(e)>0$, a new limit order of size $\rho(e)$ is posted, and $\rho(e)<0$ corresponds to a cancellation of limit orders of size $|\rho(e)|$. Similarly, $\eta(e)>0$ corresponds to a buy market order of size $\eta(e)$ and $\eta(e)<0$ is a sell market order of size $|\eta(e)|$. Modeling the dynamics of limit and market orders with the same marked Poisson point process allows us to specify the dependence structure between the arrival of orders in a tractable way. More complex specifications are conceivable to address momentum in order flow or time-of-day effects but beyond the scope of this paper. It is worth pointing out, however, that the SDE dynamics of our model to be derived below are fully versatile and able to accommodate more involved order flow dynamics. 

To exclude the simultaneous arrival of limit orders and market orders (which would be cumbersome for bookkeeping), we assume \begin{align}\label{eq:noSimultaneousOrders}
    \eta(e)\rho(e)=0 \text{ for all } e\in E.
\end{align} 

The functions $f$ and $g$ specify how liquidity affects the posting frequency of limit and market orders. We assume that  
\begin{align}\label{asp:fandgLipschitzMonotone}
    \text{$f$ and $g$ are nonnegative, Lipschitz continuous and, respectively, falling and growing.}
\end{align} 
Lipschitz continuity of $f$ and $g$ yields in particular that the Hawkes-like liquidity dynamics~\eqref{eq:dynamics_liquidity}--\eqref{eq:dynamics_MO} have a unique solution; see Lemma~\ref{lemma:uniquesol_model} in the appendix. The monotonicity introduces resilience of liquidity provision. Indeed, it ensures that, as liquidity decreases, fewer market orders will arrive while limit orders will arrive more frequently --- and vice versa when liquidity increases, imposing some degree of self-stabilization for the liquidity balance $\lambda$.

Next, we specify the price dynamics of the asset as
\begin{align}\label{eq:dynamics_price}
  dP_t = I(\Delta_t M,\lambda_{t-}), 
\end{align}
with $\Delta_t M=M_t-M_{t-}$, for $M$ the càdlàg process solving dynamics \eqref{eq:dynamics_MO}, and with the price impact function $I(\cdot,\,\cdot)$ defined by 
\begin{align}\label{eq:definition_I}
I(\Delta,\lambda):=\sgn(\Delta)\int_0^{|\Delta|} \iota(\lambda-z)dz=\int_0^\Delta \iota(\lambda-|z|)dz,\quad \lambda\in\mathbb{R},
\end{align}
where $\sgn(\,\cdot\,)$ denotes the sign function and $\int_0^\Delta=-\int_0^{|\Delta|}$ for $\Delta<0$. Here, the function $I(\Delta,\lambda)$ describes the price impact of a market order of size $\Delta\in\mathbb{R}$ that arrives when the liquidity of the market is $\lambda$. The sign of $I(\Delta,\lambda)$ is determined by that of $\Delta$: buy orders increase the price and sell orders decrease it. 
The function $$\iota:\mathbb{R}\rightarrow\mathbb{R}_+ \text{ is Lipschitz continuous and non-increasing}$$ to ensure that market orders have less impact when liquidity is high. Similarly, the absolute value of the function $I$ is increasing in the size $|\Delta|$ because, everything else being equal, as the volume of market orders increases, so does the impact of the order on the price of the asset. Finally, definition \eqref{eq:definition_I} makes the price dynamics \eqref{eq:dynamics_price} consistent when orders are split, i.e.,  $I(\Delta,\lambda)=I(\Delta_1,\lambda)+I(\Delta_2,\lambda-|\Delta_1|)$ for $\Delta_1+\Delta_2=\Delta$ with $\sgn(\Delta_1)=\sgn(\Delta_2)$. In fact, when taking order splitting to the limit, this observation reveals that an equivalent specification of the price dynamics can be given in the form of the Marcus-type SDE system (cf.~\cite{Marcus:80}) 
\begin{align}\label{eq:infinitesimalPriceLiquidityDynamics}
     \diamond\;d\lambda_t = \diamond\, (dL_t-|dM_t|)\quad \text{and} \quad \diamond dP_t = \iota(\lambda_t) \diamond dM_t.
\end{align}

\begin{remark}\label{rem:Marcus}
    We briefly explain the Marcus-type SDE dynamics in~\eqref{eq:infinitesimalPriceLiquidityDynamics}. The symbol $\diamond$ indicates that, at any time $t$, a jump by the drivers (here $M$, its total variation $V[0,.](M)=\int_0^.|dM_t|$), and $L$) will make the system (here $(\lambda,P)$) jump, too. The system's jump is determined by the endpoint at $s=1$ of  a ``fictitious'' flow, that runs in line with the given dynamics on a separate time line $s \in [0,1]$ and starts with the pre-jump values. In our example, this flow is given by
    \begin{align}
        (l_0,p_0):=(\lambda_{t-},P_{t-}), \quad   dl_s = \Delta_t L \, ds-|\Delta_t M| \,ds, \quad dp_s =\iota(l_s) \Delta_t M ds,s \in [0,1].
    \end{align}
    Its solution is easily found to be 
    \begin{align}
        l_s=l_0+(\Delta_t L -|\Delta_t M|)s, \quad p_s=p_0+\int_0^s \iota(l_r) \Delta_t M dr, \quad s \in [0,1], 
    \end{align}
    and so, recalling that $\Delta_t L=0$ whenever $\Delta_t M \not=0$ by~\eqref{eq:noSimultaneousOrders}, the post-jump values become
    \begin{align}
        \lambda_t&:= l_1 = 
        \lambda_{t-}+\Delta_t L -|\Delta_t M|,\\
        P_t & := p_1 = 
        P_{t-}+\int_0^1\iota(l_{0}-|\Delta_t M|r) \Delta_t M dr =P_{t-}+I(\Delta_t M, \lambda_{t-}),
    \end{align}
    in exact accordance with~\eqref{eq:dynamics_liquidity} and~\eqref{eq:dynamics_price} .
\end{remark}

Our model exhibits a direct link between price volatility and market liquidity because price volatility is determined by the arrival rate and the size of price changes, both of which depend on the liquidity in the market. More precisely, market liquidity affects the arrival rate of market orders, i.e., the arrival rate of price changes, through the dynamics as in \eqref{eq:dynamics_MO} and the size of price changes in \eqref{eq:dynamics_price} through the price impact function $I$ as in \eqref{eq:definition_I}. For instance, when liquidity in the market decreases, the arrival rate of market orders decreases and the size of the price impact of market orders increases.
The next lemma makes this link between volatility and liquidity explicit and states an elasticity condition under which the volatility of prices decreases with market liquidity, see \cite{PastorStambaugh:03} for empirical evidence of this negative relation between price volatility and liquidity.
\begin{lemma}\label{lemma:link_volaliquidity}
The predictable quadratic variation of the price $P$ is given by
\begin{align}\label{eq:pred_quadvar_price}
    d\langle P \rangle_t = \sigma^2(\lambda_{t-})dt,
\end{align}
where
\begin{align}\label{eq:function_price_vola}
    \sigma(\lambda):=\left(g(\lambda)\int_E I(\eta(e),\lambda)^2\nu(de)\right)^{1/2}.
\end{align}
The function in \eqref{eq:function_price_vola} is strictly decreasing in $\lambda$ under the elasticity condition
\begin{align}\label{eq:elasticitycond}
0< \left.\frac{\partial_\lambda g(\lambda)}{g(\lambda)}\middle/ \middle(-\frac{\partial_\lambda \mathbb I^2(\lambda)}{\mathbb I^2(\lambda)}\right) < 1,
\end{align}
where $\mathbb I(\lambda):=\big(\int_E I(\eta(e),\lambda)^2\nu(de)\big)^{1/2}$ is the $L^2$-norm of the price impact size from market orders at liquidity level $\lambda$.
\end{lemma}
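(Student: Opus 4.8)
The two claims are essentially independent, so I would prove them in turn. For the formula \eqref{eq:pred_quadvar_price}--\eqref{eq:function_price_vola} the key observation is that, by \eqref{eq:dynamics_price} and \eqref{eq:definition_I}, $P$ is a pure-jump process whose jumps coincide with the market-order arrivals of $N$: at a point of $N$ carrying mark $e$ with $y\le f(\lambda_{t-})$ one has $\Delta_t M=\eta(e)$ by \eqref{eq:dynamics_MO}, and hence $\Delta_t P=I(\eta(e),\lambda_{t-})$, while limit-order arrivals ($\eta(e)=0$, by $\eta\rho=0$) leave $P$ unchanged. Since $N$ charges each time by at most one point, the optional quadratic variation is
\begin{align}
[P]_t=\sum_{s\le t}(\Delta_s P)^2=\int_0^t\!\!\int_{E\times\mathbb{R}_+}\mathbbm{1}_{\{y\le f(\lambda_{s-})\}}\,I(\eta(e),\lambda_{s-})^2\,N(ds,de,dy).
\end{align}
The integrand is predictable, so the predictable quadratic variation $\langle P\rangle$ is the compensator of $[P]$, obtained by replacing $N$ with its compensator $ds\otimes\nu(de)\otimes dy$; integrating the indicator over $y\in\mathbb{R}_+$ yields the factor $f(\lambda_{s-})$ and leaves exactly $\int_0^t\sigma^2(\lambda_{s-})\,ds$, which is \eqref{eq:pred_quadvar_price}.

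For the monotonicity I would work with $\sigma^2=f\,\mathbb I^2$ (both factors strictly positive), so that $\sigma$ is strictly decreasing iff $\sigma^2$ is, and test this through the logarithmic derivative
\begin{align}
\partial_\lambda\log\sigma^2(\lambda)=\frac{\partial_\lambda f(\lambda)}{f(\lambda)}+\frac{\partial_\lambda\mathbb I^2(\lambda)}{\mathbb I^2(\lambda)}.
\end{align}
Differentiating $\mathbb I^2(\lambda)=\int_E I(\eta(e),\lambda)^2\,\nu(de)$ under the integral sign, using $\partial_\lambda|I(\Delta,\lambda)|=\iota(\lambda)-\iota(\lambda-|\Delta|)\le 0$ (which follows from \eqref{eq:definition_I} and $\iota$ non-increasing), shows $\partial_\lambda\mathbb I^2(\lambda)\le 0$. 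Writing $A:=\partial_\lambda f/f>0$ and $B:=-\partial_\lambda\mathbb I^2/\mathbb I^2$, the well-posedness and positivity of the ratio in \eqref{eq:elasticitycond} force $B>0$, so that \eqref{eq:elasticitycond} reads $0<A/B<1$, i.e.\ $0<A<B$, which is exactly $A-B<0$. Hence $\partial_\lambda\log\sigma^2=A-B<0$ and $\sigma$ is strictly decreasing.

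Both computations are short, so the only genuine care lies in the stochastic-calculus bookkeeping of the first part: one must check that the integrand $\mathbbm{1}_{\{y\le f(\lambda_{s-})\}}I(\eta(e),\lambda_{s-})^2$ is predictable and $dt\otimes\nu\otimes dy$-integrable before identifying $\langle P\rangle$ as the compensator of $[P]$, and then cleanly integrate out the auxiliary variable $y$. Integrability amounts to $\mathbb I^2(\lambda)<\infty$, which I would obtain from the pointwise bound $|I(\eta(e),\lambda)|\le\iota(\lambda-|\eta(e)|)\,|\eta(e)|$ together with the model's standing integrability assumptions on $\eta$ and $\iota$. Once this is in place the monotonicity is a one-line algebraic reformulation of \eqref{eq:elasticitycond}, the only substantive input being that $\iota$ non-increasing makes the elasticity denominator positive, so that the bound ``$<1$'' genuinely encodes $\partial_\lambda\sigma^2<0$.
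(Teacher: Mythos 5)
Your proposal is correct and follows essentially the same route as the paper: identify $[P]$ as the sum of squared price jumps driven by $N$, pass to the predictable compensator by integrating out the auxiliary variable $y$ (which produces the factor $f(\lambda_{s-})$), and then observe that the elasticity condition is precisely the statement that the logarithmic derivative $\partial_\lambda f/f+\partial_\lambda\mathbb I^2/\mathbb I^2$ of $\sigma^2=f\,\mathbb I^2$ is negative. The paper's own proof is just a terser version of this; your added care about predictability, integrability of the integrand, and the sign of $\partial_\lambda\mathbb I^2$ fills in details the paper leaves implicit.
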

For a proof, see the Appendix~\ref{sec:uniquesol_model}.

The elasticity condition compares the relative change in the arrival rate $g$ of market orders with the absolute value of the relative change in the impact size of market orders under the $L^2$-norm. Thus, condition \eqref{eq:elasticitycond} says that when liquidity decreases, the price impact increases at a higher rate than the arrival rate of market orders decreases. Hence, price volatility rises with low liquidity because the increase of the price impact of market orders at low liquidity more than compensates for the decrease in the arrival rate of market orders.

\subsection{The controlled model with trading signal}\label{section:controlled_model}
 Next, we introduce a trader who follows an optimal execution programme. For simplicity, we assume that the trader sends market orders and does not provide liquidity to the market.

To describe the trader's information structure, we introduce the (right-continuous) filtration \begin{align*}
    \mathcal{F}_t=\sigma(N([0,s]\times E\times [0,y]), s\in[0,t], E\in\mathcal{E},y\in\mathbb{R}_+), \quad t\geq 0
\end{align*} generated by the point process $N$. The usual information framework of the predictable information flow $\mathcal{P}(\mathcal{F})$  contains only past information about the order flow and a trader with information $\mathcal{P}(\mathcal{F})$ can only trade after the arrival of external limit and market orders are revealed. Thus, she can only react to market events after they have happened.

By contrast, in our setting the trader receives a short-term signal about imminent changes in the order flow. She uses this information to anticipate jumps in liquidity and in prices, and executes signal-based trades before the external order arrives in the market. The signal is given by the process
\begin{align}\label{def:signal}
Z_t=\int_{\{t\}\times E \times \mathbb{R}_+}z(e,y)N(ds,de,dy), 
\quad t \geq 0,
\end{align}
where $N$ is the same marked Poisson point process that also drives the external order flows \eqref{eq:dynamics_LO} and \eqref{eq:dynamics_MO}. The function $z\in L^0(\nu(de)\otimes dy)$ determines what the trader learns about any mark $(e,y)$ set by $N$, with $z(e,y)=0$ amounting to the same as receiving no signal. Signals $\bar{z}$ different from zero will be received with frequency $\mu(d\bar{z}):=(\nu\otimes \mathrm{Leb}) \circ z^{-1}(d\bar{z})$ which we assume to be a $\sigma$-finite measure on $z(E \times \mathbb{R}_+)\setminus\{0\}$. We introduce the Meyer-$\sigma$-field (cf.~\cite[Def. 2]{Lenglart:80})
\begin{align}
\Lambda:=\mathcal{P}(\mathcal{F})\vee \sigma(Z),
\end{align}
which adds the information $\sigma(Z)$ of the signal $Z$ to the predictable information $\mathcal{P}(\mathcal{F})$. It is this $\sigma$-field which will formally describe the information flow based on which the investor chooses her strategy.

\begin{remark}\label{remark:signal_practice}
    A key competitive edge of high-frequency traders are strategies that use their speed advantage and that employ trading signals, which include those that use limit order book information to predict order flow, see \cite{Lewis:14}. Other signals are designed to anticipate large meta orders split and routed to different exchanges at slightly different points in time. In our case study below, we consider a signal of whether an impending next order will provide or will consume liquidity.
\end{remark}

The trader's strategy is described by a process $(C_t)_{t\geq 0}$ of locally bounded variation starting at $C_{0-}=0$. Its changes represent changes in the trader's inventory so that at any time $t$ the net number of shares sold or bought up to this moment is given by $C_t$. We assume admissible strategies $C$ to induce uniformly bounded inventories:
\begin{align}
    Q^C_t := q_0+C_t \text{ with } |Q^C_t| \leq \bar{q}\label{eq:inventorybound}
\end{align}
for some constant $\bar{q}\in [0,\infty)$ (which could be, for instance, the total number of shares outstanding). Trades can be of any size in $D=\mathbb{R}$, or --- in line with market practice --- a multiple of a minimal lot size $\delta>0$, so $ D=\{...,-2\delta,-\delta,0,\delta,2\delta,...\}$. The set of admissible controls is therefore
\begin{align}\label{eq:definition_admcontrols}
\mathcal{C}(q)=\left\{ C\text{ $D$-valued, $\Lambda$-measurable of bounded total variation with \eqref{eq:inventorybound}, $C_{0-}=0$}\right\}.
\end{align}
Notice that we allow $C$ to be merely l\`adl\`ag and so there may be `left jumps' $\Delta^l_t C:=C_t-C_{t-}$ and `right jumps' $\Delta^r_t C:=C_{t+}-C_t$ at the same time~$t\geq 0$. They correspond to the proactive signal-based trades and reactive state-based trades we will introduce shortly.

The following lemma provides a representation of the trader's strategy and clarifies how she uses the signal $Z$; its proof is given in the Appendix~\ref{sec:DecompositionStrategies}.
\begin{lemma}\label{Lemma:decomposition_q}
A process $C=(C_t)_{t \in [0,T]}$ is a $\Lambda$-measurable process of locally bounded variation if and only if it admits the decomposition
\begin{align}\label{eq:representation_Q}
C_{t}&=C^c_{t}+\sum_{0\leq s \leq t} \Delta^l_s C+\sum_{0\leq s < t} \Delta^r_s C, \quad t \in [0,T],
\end{align}
where $C^c$ is a continuous, adapted process of locally bounded variation, $\Delta^r C$ is an adapted process of bounded variation, and where 
\begin{align}\label{eq:representation_deltaleft}
\Delta^l_t C&= 
\int\limits_{ \{t\}\times \{z\not=0\}}\Gamma_s(z(e,y))N(ds,de,dy) + \Delta_t C^{l,p} \text{ for } t \in [0,T] 
\end{align}
almost surely for some predictable, right-continuous pure-jump process $C^{l,p}$ of bounded variation and some $\mathcal{P}(\mathcal{F})\otimes\mathcal{B}(\mathbb{R})$- measurable field  $\Gamma$ satisfying  
\begin{align}\label{eq:integrability_gamma}
\int\limits_{[0,T]\times \{z\not=0\}}|\Gamma_s(z(e,y))|N(ds,de,dy)<\infty\quad  \text{almost surely.}
\end{align}
Almost surely, predictable left-jumps only happen when there are no marks: 
\begin{align}\label{eq:marksnotcharged}
\PP\left[\left\{t \in [0,T]\,:\, \Delta_t C^{l,p}\not=0 \right\} \subset \left\{t \in [0,T]\,:\, N(\{t\}\times E \times [0,\infty))=0\right\}\right]=1.
\end{align}
\end{lemma}
Thus, upon observing a signal $Z_t\not=0$, the trader sends the order $\Delta^l_t C=\Gamma^l_t(Z_t)$ before the full information about external orders becomes known to all market participants and its effect materialises, i.e., liquidity and prices change as a result. Hence, we call such a trade \emph{signal-based}. 

In contrast, the right jump $\Delta^r_t C$ is the trader's action after the arrival of an external order in the market. It incorporates the full information about the post-shock market state, i.e., the market state after the arrival of external orders. In addition, $\Delta^r_t C$ can also represent an order sent out of the trader's own volition, typically motivated by the state of her execution programme. The latter also applies to the predictable left trades $\Delta^{l,p}_t C$. Therefore, we call these trades \emph{state-based}.

\begin{remark}\label{remark:hawkes}
In our model, the controlled orders $\Delta^l_t C$, $\Delta^r_tC$, $dC^c_t$ affect market liquidity and, with it, the arrival rate of external shocks, in the same way as the external limit orders, market orders and cancellations. This is a key difference between our model and those proposed in the existing literature on stochastic control for Hawkes processes, see e.g. \cite{AlfonsiBlanc:2016}, \cite{CarteaJaimungalRicci:2018}, and \cite{FuHorstXia:2020}. 
\end{remark}

Let us now introduce the controlled dynamics of the state process 
\begin{align*}
S^C_t:=(\lambda^C_t,Q^C_t,P^C_t,X^C_t), \quad t\geq 0,
\end{align*} 
starting at $S^C_{0-}:=(\lambda,q,p,x)$. Here, $\lambda^C_t$ is the current level of market liquidity, $Q^C_t$ is the trader's current inventory, $P^C_t$ is the current asset price, and $X^C_t$ is the trader's cash process resulting from her strategy $C$.

When no external market or limits orders arrive and when there are no jumps in the trader's strategy, the state process satisfies the continuous dynamics 
\begin{align}\label{eq:cont_dynamics_S}
    dS^C_t&=(-|dC^c_t|,dC^c_t,\iota(\lambda^C_t)dC^c_t,-P^C_tdC^c_t-\zeta|dC^c_t|) \\&\text{ when }\Delta^l_t C=\Delta^r_t C=0, N(\{t\}\times E\times \mathbb{R}_+)=0,
\end{align}
where $\zeta\geq 0$ is the half-spread quoted in the book.

Next, we describe how the state changes when there are jumps in the order flow. A jump can result, for instance, from an order of size $\Delta>0$ sent by the trader. Thus, the asset price changes by $I(\Delta,\lambda)$ where $\lambda$ denotes the pre-trade liquidity level, see \eqref{eq:dynamics_price}.
When the trader sends an order of size $\Delta>0$, the price impact from her trade affects the trader's post-trade cash position in a nonlinear way. To describe this, it is convenient to introduce the function 
\begin{align}\label{eq:definition_Xi}
\Xi(\Delta,\lambda):=\int_0^{|\Delta|}I\left(z,\lambda\right)dz,
\end{align}
which reflects the impact costs of the trade $\Delta$ when liquidity is $\lambda$. So the total costs of buying $\Delta\geq 0$ shares when the pre-transaction mid-price  is $p$ amount to $(p+\zeta)\Delta+\Xi(\Delta,\lambda)$; similarly, the revenue from selling $\Delta\geq 0$ shares is $(p-\zeta)\Delta-\Xi(\Delta,\lambda)$.
As in the definition of $I$ in~\eqref{eq:definition_I}, the function $\Xi$ is consistent with order splitting, i.e.,  $\Xi(\Delta,\lambda)=\Xi(\Delta_1,\lambda)+\Xi(\Delta_2,\lambda-|\Delta_1|)$ for $\Delta_1+\Delta_2=\Delta$ with $\sgn(\Delta_1)=\sgn(\Delta_2)$. Similar to the infinitesimal version of the price and liquidity dynamics~\eqref{eq:infinitesimalPriceLiquidityDynamics}, this splitting yields a Marcus-SDE style description also for the investor's cash position; cf.~\eqref{eq:MarcusDynamicsCashControlled} below.

Finally, the precise timing when market orders, limit orders, and cancellations arrive is important and not interchangeable. This is because the different orders arrive when the level of liquidity is~$\lambda^C_{t-}$, $\lambda^{C}_{t-}-|\Gamma_t(Z_t)|$ and $\lambda^{C}_t:=\lambda^{C}_{t-}-|\Gamma_t(Z_t)|-|\Delta_t M^C|+\Delta_t L^C$, respectively, and so the impact on the price of the asset varies and the effect on the trader's cash process is different, too. We explain this step-by-step, see Figure \ref{fig:triplejump} for an illustration.
\begin{center}
      \begin{tikzpicture}
\draw[black] (0,0)-- (1,0)--(1,1)--(0,1)--(0,0);
\draw[black] (6,0)-- (6,1)--(7,1)--(7,0)--(6,0);
\draw[black] (10,0)-- (10,1)--(11,1)--(11,0)--(10,0);
\node at (0.5,1.6) {Pre-trade};
\node at (0.5,1.3) {state};
\node at (0.5,0.5) {$S^C_{t-}$};
\node at (6.5,1.6) {Post-shock};
\node at (6.5,1.3) {state};
\node at (6.5,0.5) {$S^C_{t}$};
\node at (10.5,1.6) {Post-trade};
\node at (10.5,1.3) {state};
\node at (10.5,0.5) {$S^C_{t+}$};
\node at (1.9,-0.4) {Signal-based trade};
\node at (2,-0.8) {$\Delta^l_t C$};
\node at (5,-0.4) {External orders};
\node at (5,-0.8) {$\Delta_t M^C, \Delta_t L^C$};
\node at (8.5,-0.4) {State-based trade};
\node at (8.5,-0.8) {$\Delta^r_t C$};
\node at (3.5,-1.9) {Signal $Z_t$ informs trader};
\draw[->] (1.1,0.5) -- (5.9,0.5);
\draw[->] (7.1,0.5) -- (9.9,0.5);
\draw[->] (8.5,-0.2) -- (8.5,0.4);
\draw[->] (5,-0.2) -- (5,0.4);
\draw[->] (2,-0.2) -- (2,0.4);
\draw[->] (2,-1.6) -- (2,-1.1);
\draw[-] (2,-1.6) -- (5,-1.6);
\draw[-] (5,-1.1) -- (5,-1.6);
\end{tikzpicture}
\captionof{figure}{Evolution of our state process $S^C$}
\label{fig:triplejump}
\end{center}
\paragraph{From pre-trade state $S^C_{t-}$ to post-shock state $S^C_t$.}

We assume the pre-trade state $\mathbf{s}_-:=S^C_{t-}$ is
\begin{align}\label{eq:pretradestate}
\mathbf{s}_-=(\boldsymbol{\lambda}_-,\mathbf{q}_-,\mathbf{p}_-,\mathbf{x}_-):=(\lambda^C_{t-},Q^C_{t-},P^C_{t-},X^C_{t-}).
\end{align}
The impending external limit and market orders at time $t$ are
\begin{align}
   \rho:=\Delta_t L^C&=\int_{\{t\}\times E\times \R_+}\mathbbm{1}_{\{y\leq f(\lambda^C_{s-})\}}\rho(e)N(ds,de,dy),\label{eq:dynamics_LO_controlled}\\
   \eta:=\Delta_t M^{C}&=\int_{ \{t\}\times E\times \mathbb{R}_+}\mathbbm{1}_{\{y\leq g(\lambda^C_{s-})\}}\eta(e)N(ds,de,dy).\label{eq:dynamics_MO_exog_controlled}
\end{align}
\begin{remark}
    Equations~\eqref{eq:dynamics_LO_controlled} and~\eqref{eq:dynamics_MO_exog_controlled} specify the dynamics of incoming limit and market orders analogously to~\eqref{eq:dynamics_LO} and~\eqref{eq:dynamics_MO}. However, their jump frequencies are now determined by the liquidity $\lambda^C_{t-}$. This is another way by which the investor affects market dynamics because her market orders will attract more limit orders, an effect that may be exploited in a pump-and-dump scheme.
\end{remark}
As derived in Lemma~\ref{Lemma:decomposition_q}, when the trader receives a signal $Z_t$, she trades the quantity $\gamma:=\Gamma_t(z)$. Without a signal, she can still opt to send the predictable order $\gamma:=\Delta^{l,p}_t C$. In either case, market liquidity updates to 
\begin{align}\label{eq:lambdaleft}
    \boldsymbol{\lambda}(\gamma,\eta,\rho;\mathbf{s}_-):=\boldsymbol{\lambda}_-- |\gamma|-|\eta|+\rho,
\end{align}
and the trader's inventory becomes
\begin{align}
    \mathbf{q}(\gamma,\eta,\rho;\mathbf{s}_-):=\mathbf{q}_-+\gamma.
\end{align}
Due to price impact, the price changes to
\begin{align}
    \mathbf{p}(\gamma,\eta,\rho;\mathbf{s}_-):=\mathbf{p}_-+I\left(\gamma,\boldsymbol{\lambda}_-\right)+I\left(\eta,\boldsymbol{\lambda}_--|\gamma|\right),
\end{align}
where the market order of size $\eta$ arrives when liquidity is $\boldsymbol{\lambda}_--|\gamma|$. 
The trader's cash position becomes
\begin{align}
    \mathbf{x}(\gamma,\eta,\rho;\mathbf{s}_-):=\mathbf{x}_--\mathbf{p}_-\gamma-\zeta|\gamma|-\Xi(\gamma,\boldsymbol{\lambda}_-).
\end{align}
Thus, we define the state-update function $\mathfrak{s}$ as 
\begin{align}\label{eq:def_mathfrak_s}
    \mathfrak{s}(\gamma,\eta,\rho;\mathbf{s}_-):=(\boldsymbol{\lambda},\mathbf{q},\mathbf{p},\mathbf{x})(\gamma,\eta,\rho;\mathbf{s}_-),
\end{align}
to write the post-shock state as
\begin{align}
    S^C_t:=\mathfrak{s}\left(\Gamma_t(Z_t)\mathbbm{1}_{\{Z_t\not=0\}}+\Delta^{l,p}_tC,\Delta_tM^{C},\Delta_tL^C;S^C_{t-}\right).
\end{align}
\paragraph{From post-shock state $S^C_t$ to post-trade state $S^C_{t+}$.}

After the realised external shock and the post-shock state $S^C_t$ become fully known to the trader, she executes the state-based trade $\Delta^r_t C$ and the post-trade state is defined as
\begin{align}
    S^C_{t+}:=\mathfrak{s}(\Delta^r_t C,0,0;S^C_{t}),
\end{align}
where $\mathfrak{s}$ is as in \eqref{eq:def_mathfrak_s}. 

It is convenient to have an equivalent alternative description of the above state dynamics in the form of an SDE-system. For this one can use the framework of~\cite{Chevyrev:24} which introduces a fictitious time interval $[0,1]$ at every jump time to specify a ``manual'' describing the SDE-sytem's reaction to jumps. Our idea is to use this fictitious time interval to coordinate the three jumps that can affect the market at a given time: a (typically) signal-based trade $dC^l_t$, either a limit order/cancellation $dL^C_t$ or an exogenous market order $dM^C_t$ (recall~\eqref{eq:noSimultaneousOrders}), and a state-based trade $dC^{r,c}_t=dC^{r}_t+dC^c_t$ (to which we have added the continuous part of the control for the sake of simplicity and completeness; cf. also Lemma~\ref{Lemma:decomposition_q}). 

Recall that the exogenous orders only depend on the pre-jump liquidity $\lambda^C_{t-}$:
\begin{align}
    dL^C_t&=\int_{E\times \R_+}\mathbbm{1}_{\{y\leq f(\lambda^C_{t-})\}}\rho(e)N(dt,de,dy),\label{eq:MarcusDynamics_LO_controlled}\\
   dM^{C}_t&=\int_{ E\times \mathbb{R}_+}\mathbbm{1}_{\{y\leq g(\lambda^C_{t-})\}}\eta(e)N(dt,de,dy).\label{eq:MarcusDynamics_MO_exog_controlled}
\end{align}   
In line with the order splitting impact specification, we let a Marcus-style fictitious flow (cf.\ Remark~\ref{rem:Marcus}) continuously drive liquidity $\lambda^C$, stock and cash position $(X^C, Q^C)$, and price $P^C$ by these shocks in the following way:
\begin{align}  
\lambda^C_{0-} &=\lambda,  & \diamond\; d\lambda^C_t &= \diamond \left(  (-|dC^{l}_t|)\vec{+}(dL^C_t-|dM^C_t|)\vec{+}(-|dC^{r,c}_t|)\right),\label{eq:MarcusDynamicsLiquidityControlled}\\
   X^C_{0-} &=x,  &\diamond\; dX^C_t & = -P^C_t \diamond \left(dC^{l}_t \vec{+} 0 \vec{+} dC^{r,c}_t\right)-
   \zeta \diamond \left(|dC^{l}_t|\vec{+}0\vec{+}|dC^{r,c}_t|\right), \label{eq:MarcusDynamicsCashControlled}\\
   Q^C_{0-} &=q,  &\diamond\; dQ^C_t &= \diamond \left( dC^l_t \vec{+} 0 \vec{+} dC^{r,c}_t\right),\label{eq:MarcusDynamicsPositionControlled}\\
   P^C_{0-} &=p,  &\diamond\; dP^C_t &= \iota(\lambda^C_t) \diamond \left(dC^{l}_t \vec{+} dM^C_t \vec{+} dC^{r,c}_t\right).\label{eq:MarcusDynamicsPriceControlled}
\end{align}
Here, the $\vec{+}$-operator means that the jumps it connects are to be run through in the order indicated (with $0$ signifying a pause) when following the fictitious Marcus flow to determine the resulting jumps; cf.\ Remark~\ref{rem:Marcus}. This is done with the understanding that the first of the three $\vec{+}$-summands is taken care of over the first third of the fictitious time interval $[0,1]$, the second one over the second third, and the third one in the remaining third. This way liquidity $\lambda^C$ impacts prices $P^C$ in sync with all incoming orders, while the external orders have no impact on the investor's cash or stock positions as indicated by the $\vec{+}0$-pause terms in the middle of the drivers of both $X^C$ and $Q^C$. The ``intermediate'' values $\lambda^C_t$, $Q^C_t$, $X^C_t$, $P^C_t$ at time $t$ (rather than their right limits at time $t+$) can be obtained from these dynamics by evaluating the Marcus-style dynamics when its fictitious flow at time $t$ has run two thirds of its course (i.e., when at time $s=2/3$ two of the three jump contributions at time $t$ have been taken care of). This way we get the SDE system to fully describe also the dynamics of such l\`adl\`ag processes.

The advantages of the SDE-system description are borne out by the following result which describes the dynamics of the realizable portfolio value
\begin{align}\label{eq:liquidationWealth}
    W^C_t:=\w(X^C_t,Q^C_t,P^C_t,\lambda^C_t), \quad t \in [0,T],
\end{align}
where the function 
\begin{align}\label{eq:liquidationWealthFunction}
    \w(x,q,p,\lambda):=x+qp-\left(\zeta|q|+\Xi(q,\lambda)\right)
\end{align}
adjusts the mark-to-market or book value $x+qp$  of the agent's cash and stock position $(x,q)$ by the transaction costs $\zeta|q|$ and the impact costs $\Xi(q,\lambda)$ that would be incurred if the stock position was liquidated at present liquidity $\lambda$.

\begin{proposition}\label{prop:wealthDynamics}
    The realizable portfolio value $W^C$ associated with a strategy $C \in C$ follows for $t \in [0,T]$ the dynamics
    \begin{align}
            {W}^C_t =\;&\w(x,q,p,\lambda)\\
    & + \int_0^t \left(I(|{Q}^C|,{\lambda}^C)-|{Q}^C|\iota({\lambda}^C)\right) \diamond \left(0\vec{+}d{L}^C\vec{+}0\right)\\
     & - \int_0^t \left(I(|{Q}^C|,{\lambda}^C)-2\mathbbm{1}_{\{{Q}^C>0\}}|{Q}^C|\iota({\lambda}^C)\right) \diamond \left(0\vec{+}d{M}^{C,+}\vec{+}0\right)\label{eq:WealthDynamics}\\&-\int_0^t\left(I(|{Q}^C|,{\lambda}^C)-2\mathbbm{1}_{\{{Q}^C<0\}}|{Q}^C|\iota({\lambda}^C)\right) \diamond \left(0\vec{+}d{M}^{C,-}\vec{+}0\right)\\
    & - \int_0^t 2\left(I(|{Q}^C|,{\lambda}^C)-|{Q}^C| \iota({\lambda}^C)+\zeta\right) \left(\mathbbm{1}_{\{{Q}^C>0\}}\diamond d{Q}^{C,+}+
    \mathbbm{1}_{\{{Q}^C<0\}}\diamond d{Q}^{C,-}\right),
    \end{align}
  where we use the notation $dM^C=dM^{C+}-dM^{C-}$ and  $dQ^C=dQ^{C,+}-dQ^{C-}
  $ for the Hahn-decompositions of $M^C$ and $Q^C$.
\end{proposition}
The proof of this result is deferred to  Appendix~\ref{proof:propWealthDynamics}.

The above result does not depend on our choice for the dynamics of $L^C$ and $M^C$, but only on the market mechanics we have specified above. As a result, the following observations hold true generically. For instance,   we have $I(|{Q}^C|,{\lambda}^C)-|Q^C|\iota(\lambda^C)=\int_0^{|Q^C|}\iota(\lambda^C-z)-\iota(\lambda^C)\geq 0$ because $\iota$ is non-increasing, and so added liquidity (``$d{L}^{C,+}>0$'') boosts the realizable portfolio value while any cancellation (``$d{L}^{C,-}>0$'') has the opposite effect. For the same reason, expanding an existing position (e.g., ``$d{Q}^{C,+}>0$'' when ${Q}^C>0$) also reduces the realizable portfolio value, an effect amplified by the spread $\zeta$. External orders in opposition to the investor's position (e.g., ``$d{M}^{C,+}>0$'' when ${Q}<0$) penalize the liquidation value in proportion to the liquidation price impact $I(|{Q}^C|,{\lambda}^C)$. Only external market orders in the direction of the investor's current position can result in both positive or negative consequences. For instance, an external buy order ``$d{M}^{C,+}>0$'' will be welcome for a trader with a long position ${Q}^C>0$, yet only if its favorable price impact dominates its detrimental effect on market liquidity in the sense that $2|{Q}^C|\iota({\lambda}^C)>I(|{Q}^C|,{\lambda}^C)+2\zeta$. 

The above dynamics show furthermore that direct market manipulation is not profitable. In particular, immediate round trips will result in a loss. Note that this finding does not contradict \cite{HubermanStanzl:04} since they consider impact factors that are independent of the trader's orders. 

While there is no way to turn a profit by one's own trading, the trader can still try to game the system through her short- to medium-term influence on market liquidity. Indeed, this is made possible by our choice for the dynamics of incoming limit and market orders~\eqref{eq:dynamics_LO_controlled} and~\eqref{eq:dynamics_MO_exog_controlled}: In this specification, the jump frequencies of these external orders are determined by the liquidity $\lambda^C$, which the trader can influence. This is an indirect way in which the trader affects market dynamics because the market orders issued by her will attract more limit orders, an effect that could be exploited in a pump-and-dump scheme; see Section~\ref{section:circuitbreaker} below for further considerations in that respect. 

As a final observation, the investor's influence on liquidity does not allow her to directly trigger or preempt impending exogenous orders by making $\lambda^C$ jump. This is because the dynamics of $L^C$ and $M^C$ are solely governed by the pre-jump liquidity level~$\lambda^C_{-}$ that cannot be changed in an ad hoc manner.

\section{The optimal investment and execution problem}\label{chapter:optimization}
In this section, we show how a trader who receives the private signal $Z$ executes a position with market orders over some finite time horizon $[0,T]$. The trader's performance criterion is the expected utility of terminal wealth.

The self-exciting nature of our system requires special care to avoid blow-ups. Thus, we introduce a lower bound on liquidity so that liquidity taking orders (i.e., market orders and limit order cancellations) are limited by the supply of liquidity in the market. The lower bound on liquidity can be interpreted as a threshold that triggers a trading halt imposed by the exchange to ensure a well-functioning market. As we show in this section's main result Theorem~\ref{thm:continuity_all}, the trading halt ensures that the value function of the optimization problem is non-degenerate. Finally, we illustrate the market model with signals and showcase the performance of the optimal strategy.

\subsection{Trading halts for sufficient market liquidity}\label{section:circuitbreaker}

In practice, we observe that trading in a stock is temporarily halted if its price dynamics become too volatile or undergo an abrupt change exceeding a predefined level; see, e.g., \cite{ChenPetukhovXingWang}. Such a mechanism is called a (single stock) circuit breaker or, more simply, a trading halt. According to \cite{Kyle:88}, the purpose of such mechanisms is, in particular, to reduce price volatility and to protect customers from illiquid, disorderly markets. We note that in our reduced form model the liquidity level $\lambda$ can be viewed as a measure for market quality. Consequently, a trading halt is triggered when a market participant's order lets market quality deteriorate beyond a given threshold $\underline{\lambda}$, the liquidity trigger, i.e., at time
\begin{align}\label{eq:tauC}
    \tau^C:=\inf\left\{ t\geq 0: \min\lbrace\lambda^C_{t-}-|\Delta^l_t C|,\lambda^C_t\rbrace<\underline{\lambda}\right\},
\end{align}
with $\inf \emptyset =+\infty$.  In this case, the market order or the cancellation of a limit order that reduces liquidity to the level $\underline{\lambda}$ is partially executed with the available liquidity in the market $\lambda_t-\underline{\lambda}$ and trading is halted temporarily. Any orders that are sent thereafter cannot be executed and thus will have no effect on the state process. 

\begin{remark}
We briefly comment on how our stylized trading halt mechanism relates to the ones used in exchanges such as NYSE or Nasdaq to prevent excessive market movements. At the time of writing,  trading in a stock is halted in Nasdaq when within five minutes its price moves by more than 10\%.  

Lemma~\ref{lemma:link_volaliquidity} shows how volatility is linked to our market liquidity process $\lambda$ and clarifies how our liquidity-based trigger uses a bound on volatility to tame market fluctuations. Moreover, the proper functioning of a market necessitates a healthy amount of liquidity and the price band trigger of Nasdaq can also be viewed as a proxy indicator for a situation for poor liquidity levels. In our reduced-form model, such a dire market condition can be directly related to its liquidity process $\lambda^C$, so a passage time like $\tau^C$ is a natural time to halt trading. Of course, constructing this liquidity indicator from real market data is challenging (for example, what levels of the limit order books to use? How to address the difference of liquidity posted on bid vs.\ ask side?) and so the price band rule of Nasdaq can be viewed as a more readily implementable substitute for a liquidity oriented trigger of trading halts such as ours. Also, while not directly price oriented, our liquidity-based trading halts will often coincide with big price changes that breach a price band. 

We also observe that monitoring a rolling five minute price window leads to a path-dependency which severely complicates any control problem in this context by making the state-process infinite-dimensional. Our choice of liquidity-based trading halt trigger thus also significantly improves mathematical tractability.
\end{remark}

To formalize the dynamics of the state process with trading halt, let us denote it by
$$\tilde{S}^C_t:=(\tilde{\lambda}^C_t,\tilde{Q}^C_t,\tilde{P}^C_t,\tilde{X}^C_t).$$ Clearly, while the trading halt has not been triggered, the state process in the market with and without trading halt coincide so that 
\begin{align}
    \tilde{S}^C_t=S^C_t \text{ for } 0\leq t<\tau^C \wedge T.
\end{align} 
When a trading halt is triggered at time $\tau^C \leq T$, we proceed analogously to Section~\ref{section:controlled_model} and put
\begin{align}
    \tilde{S}^C_{\tau^C} := \tilde{\mathfrak{s}}(\Delta^l_{\tau^C}C,\Delta_{\tau^C} M^C,\Delta_{\tau^C} L^C;\tilde{S}^C_{\tau^C-}) \text{ and } \tilde{S}^C_{\tau^C+} := \tilde{\mathfrak{s}}(\Delta^r_{\tau^C}C,0,0;\tilde{S}^C_{\tau^C}) \text{ on } \{\tau^C \leq T\},
\end{align}
where $\tilde{\mathfrak{s}}$ is a suitably amended state update function that caps orders when they trigger the trading halt. Specifically, this capping is accomplished by
\begin{align}\label{eq:def_upsilon}
\Upsilon(\Delta,\lambda):=\begin{cases}
\sgn(\Delta)(\lambda-\underline{\lambda})^+,&\text{if }\lambda-|\Delta|<\underline{\lambda},\\\hfill
\Delta,&\text{if }\lambda-|\Delta|\geq\underline{\lambda},
\end{cases}
\end{align}
and, consequently,
\begin{align}
&\tilde{\mathfrak{s}}(\Delta,\eta,\rho;\textbf{s}_-):=(\tilde{\boldsymbol{\lambda}},\tilde{\mathbf{q}},\tilde{\mathbf{p}},\tilde{\mathbf{x}})(\Delta,\eta,\rho;\boldsymbol{\lambda}_-,\mathbf{q}_-,\mathbf{p}_-,\mathbf{x}_-)
\end{align}
with 
\begin{align} \tilde{\boldsymbol{\lambda}}&=
          \left(\boldsymbol{\lambda}_- -|\Delta|+\mathbbm{1}_{\{\boldsymbol{\lambda}_--|\Delta|\geq \underline{\lambda}\}}(-|\eta|+\rho)\right)\vee\underline{\lambda},\\
     \tilde{\mathbf{q}}&=\textbf{q}_-+\Upsilon(\Delta,\boldsymbol{\lambda}_-),\\
     \tilde{\mathbf{p}}&=\textbf{p}_-+I\left(\Upsilon(\Delta,\boldsymbol{\lambda}_-),\boldsymbol{\lambda}_-\right)+\mathbbm{1}_{\{\boldsymbol{\lambda}_--|\Delta|\geq \underline{\lambda}\}}I\left(\Upsilon(\eta,\boldsymbol{\lambda}_--|\Delta|),\boldsymbol{\lambda}_--|\Delta|\right),\\
     \tilde{\mathbf{x}}&=\textbf{x}_--\textbf{p}_-\Upsilon(\Delta,\boldsymbol{\lambda}_-)-\zeta|\Upsilon(\Delta,\boldsymbol{\lambda}_-)|-\Xi(\Upsilon(\Delta,\boldsymbol{\lambda}_-),\boldsymbol{\lambda}_-).
\end{align}
Trading is halted when market liquidity becomes too low, thus
\begin{align}
    \tilde{S}^C_t := \tilde{S}^C_{\tau^C+} \text{ for } \tau^C<t\leq T.
\end{align}
For the sake of completeness, let us also define the external market order flow $\tilde{M}^{C}_t$ in the market with trading halt as
\begin{align}
 \label{eq:MO_with_circuitbreaker}
    \tilde{M}^{C}_t&:=
        \int_{[0,t\wedge \tau^C]\times E\times\R_+}\Upsilon(\eta(e),\tilde{\lambda}^C_{s-}-|\Gamma_s(z(e,y))|)\mathbbm{1}_{\{y\leq g(\tilde{\lambda}^C_{s-})\}}N(ds,de,dy),
\end{align}
and proceed similarly for the liquidity providing limit orders 
\begin{align}\label{eq:LO_with_circuitbreaker}
    \tilde{L}^{C,+}_t&:=\int_{[0,t\wedge \tau^C]\times E\times\R_+}\rho^+(e)\mathbbm{1}_{\{\tilde{\lambda}^C_{s-}-|\Gamma_s(z(e,y))|\geq\underline{\lambda}\}\cap\{y\leq f(\tilde{\lambda}^C_{s-})\}}N(ds,de,dy),
\end{align}
and cancellations
\begin{align}\label{eq:cancel_with_circuitbreaker}
    \tilde{L}^{C,-}_t&:=\int_{[0,t\wedge \tau^C]\times E\times\R_+}\Upsilon(\rho^-(e),\tilde{\lambda}^C_{s-}-|\Gamma_s(z(e,y))|)\mathbbm{1}_{\{y\leq f(\tilde{\lambda}^C_{s-})\}}N(ds,de,dy).
\end{align}
The order flow is frozen after $\tau^C$ and at time $\tau^C$ the function $\Upsilon$ in \eqref{eq:MO_with_circuitbreaker} and \eqref{eq:cancel_with_circuitbreaker} and the indicator in \eqref{eq:LO_with_circuitbreaker} ensure that trading is halted after liquidity has been depleted beyond $\underline{\lambda}$. 

Let us finally record that $(\lambda^C,\tilde{\lambda}^C, \tilde{Q}^C,\tilde{X}^C,\tilde{P}^C,\tilde{L}^C,\tilde{M}^C)$ jointly solve the SDE-sytem:
\begin{align}
     {\lambda}^C_{0-} &= \lambda, & \diamond\; d\lambda^C &=  \diamond \; \left((-|dC^l|)\vec{+}(d\tilde{L}^C-|d\tilde{M}^C|)\vec{+}(-|dC^{r,c}|)\right),\\
    \tilde{\lambda}^{C}_{0-}&=\lambda', & \diamond \;d\tilde{\lambda}^C &= -\mathbbm{1}_{\{ \lambda^C \geq \underline{\lambda}\}} \diamond \left((-|dC^l|)\vec{+}(d\tilde{L}^C-|d\tilde{M}^C|)\vec{+}(-|dC^{r,c}|)\right),\\
        \tilde{Q}^C_{0-}&=q, & \diamond\; d\tilde{Q}^C&= \mathbbm{1}_{\{\lambda^C \geq \underline{\lambda}\}} \diamond dC, \\
    \tilde{X}^C_{0-}&=x, & \diamond\;d\tilde{X}^C &= -\tilde{P}^C \diamond d\tilde{Q}^C-\zeta \diamond |d\tilde{Q}^C|,\\
    \tilde{P}^C_{0-}&=p, & \diamond\; d\tilde{P}^C &= \iota(\tilde{\lambda}^C)\mathbbm{1}_{\{\lambda^C\geq\underline{\lambda}\}} \diamond \left(dC^l\vec{+}d\tilde{M}^C\vec{+}dC^{r,c}\right),\\ 
    \tilde{L}^C_{0-}&=0, & d\tilde{L}^C&=\mathbbm{1}_{\{y\leq f(\lambda^{C}_{-}), \; \lambda^{C}_- \geq \underline{\lambda}\}}\rho(e)N(dt,de,dy),\\
    \tilde{M}^C_{0-}&=0, & d\tilde{M}^C&=\mathbbm{1}_{\{y\leq g(\lambda^{C}_{-}), \; \lambda^{C}_-\geq\underline{\lambda}\}}\eta(e)N(dt,de,dy).
\end{align}
Also here the ``intermediate'' values $\lambda^C_t$, $\tilde{\lambda}^C_t$, $\tilde{Q}^C_t$, $\tilde{X}^C_t$, $\tilde{P}^C_t$ at time $t$ (rather than $t+$) can be obtained from these dynamics by considering the Marcus-style dynamics at the moment when its fictitious flow at time $t$ has run two thirds of its course (i.e., when, in the notation of Remark~\ref{rem:Marcus}, at fictitious time $s=2/3$ two of the three jump contributions at time $t$ have been taken care of).

At the terminal time $T$, any remaining inventory $\tilde{Q}^C_{T+}$ has to be liquidated. If trading has been halted or will be during this liquidation, an auction determines the new mid-price from which the remaining liquidation is executed with price impact as before. For simplicity, we assume that this new mid-price deviates from the pre-halt mid-price by a shock
\begin{align}\label{eq:YProperties}
    Y \text{ independent of $\mathcal{F}_T$ with  }  \mathcal{L}_Y(a):=\log \E[\exp(a Y)]<\infty \text{ symmetric in } a \in \R.
\end{align}
For instance, any centered and independent Gaussian $Y$ will do. As a result,   our trader's terminal wealth is
\begin{align}\label{eq:terminalWealth}
   \tilde{W}^C_{T+}:=\w\left(\tilde{X}^C_{T+},\tilde{Q}^C_{T+}, \tilde{P}^C_{T+},\tilde{\lambda}^C_{T+}\right)+Y\rr(\tilde{Q}^C_{T+},\tilde{\lambda}^C_{T+}).
\end{align}
Here, the function $\w$ of~\eqref{eq:liquidationWealthFunction} provides the cost-adjusted liquidation value  of the investor's position. The $Y$-term adjusts this further to account for the part 
of the investor's position that goes into auction and is thus additionally exposed to the mid-price shock $Y$. This part is given by the function 
\begin{align}\label{eq:2bAuctioned}
    \rr(q,\lambda):= \left(|q|-(\lambda-\underline{\lambda})^+\right)^+.
\end{align}
Indeed, if at liquidity level $\lambda$ a trade of size $q$ has to be executed, this will be possible without triggering a halt only if $|q|$ does not exceed the available liquidity $(\lambda-\underline{\lambda})^+$. Otherwise, this available liquidity is exhausted and the remaining bits of the order $\rr(q,\lambda)$ are auctioned.

\subsection{The utility maximization problem}\label{section:utilityMaximization}

Let us consider a trader who maximises her expected utility from terminal wealth over the trading horizon $[0,T]$ as given by the performance criterion
\begin{align}
    J(C):=\mathbb{E}[U(\tilde{W}^C_{T+})] \quad \text{ with } \quad U(w)=-\exp(-\alpha w),
\end{align}
where $\tilde{W}^C_{T+}$ in~\eqref{eq:terminalWealth} specifies the investor's terminal wealth and $\alpha>0$ measures her (constant) absolute risk aversion. 

For continuity of the resulting value function, the trader can take at any time the remaining liquidity without triggering a trading halt and to trade exactly to the extreme inventory levels $-\bar{q}$ and $\bar{q}$. Obviously, this is not a problem when any order size is allowed,  but when trading is in lots of size $\delta$, this may not be possible. Thus, we let 
\begin{align}
    D(\lambda,q):=\mathbb{R} \quad \text{ or } \quad D(\lambda,q):=\{...,-2\delta,-\delta,0,\delta,2\delta,...\}\cup \{\lambda-\underline{\lambda},\underline{\lambda}-\lambda,\bar{q}-q,-\bar{q}-q\}
\end{align}
and define the set of admissible strategies when starting with liquidity $\lambda$ and in position $q$ as
\begin{align}\label{def:tildeC}
    \tilde{\mathcal{C}}(\lambda,q):=\big\{C &\text{ $\Lambda$-measurable of bounded variation with $C_{0-}=0$ satisfying~\eqref{eq:inventorybound} } \\&\text{ and } \Delta^l_t C \in D(\lambda^C_{t-},Q^C_{t-}), \;\Delta^r_t C \in D(\lambda^C_{t},Q^C_{t}) \text{ for } t \in [0,T]\big\}.
\end{align}

Now, the value function can be defined as
\begin{align}\label{eq:value_fct}
v(T,\mathbf{s})=v(T,\lambda,q,p,x):=\sup_{C\in\tilde{\mathcal{C}}(\lambda,q)}\E\left[U(\tilde{W}^C_{T+})\right].
\end{align}
\begin{thm}[Regularity of the value function]\label{thm:continuity_all}
The value function $v$ in~\eqref{eq:value_fct} for our utility maximization problem with trading halt has the multiplicative structure
\begin{align}\label{eq:multiplicativeValueFunction}
   v(T,\lambda,q,p,x)=\exp(-\alpha(x+pq))v_0(T,\lambda,q),
\end{align}
where $v_0=v(.,.,.,0,0)$ is the value function for initial data $x=0$, $p=0$.
Moreover, it is non-degenerate in the sense that it is strictly less than $U(+\infty)=0$. 

Suppose that the distributions of cancellations $\nu \circ (\rho^-)^{-1}$ and of external market orders $\nu \circ |\eta|^{-1}$ at most have an atom at $0$ and both exhibit a finite second moment. Assume, in addition, that the posting of market orders is controlled in the sense that their maximum arrival rate is finite, i.e., $g(\infty)<\infty$. Finally, let the size of limit orders be controlled in the sense that 
\begin{align}\label{eq:rhoExponentialIntegrable}
    \int_E \exp(c\rho^+(e))\nu(de)<\infty
    \quad\text{ for all }c\in [0,\infty).
\end{align}
Then the value function $v=v(T,\lambda,q,p,x)$ is continuous on $[0,\infty) \times [\underline{\lambda},\infty) \times [-\bar{q},\bar{q}] \times \R^2$.
\end{thm}

We defer the proof of this result to the Appendix~\ref{sec:regularity}. One cannot follow standard arguments in our setting because the external order flow, at least pathwise, will not depend continuously on the control: For instance, small differences in liquidity levels can allow external orders to affect one strategy, but not another one which has almost the same liquidity level. This may further widen the difference in liquidity for these strategies, making it ever more difficult for one of them to, for instance, copy the other's trades and do so under similar market conditions and thus with similar profitability. For reasons like this, the relevant analysis is rather challenging and requires some care; see in particular Lemma~\ref{lem:strategyComparison} below.

Let us briefly comment on the extra assumptions we need for our continuity result for the value function. Continuity of the distributions of cancellation and market order sizes is needed because with an atom say at $1$ lot, a trader with liquidity $\underline{\lambda}+1$ has no risk of seeing a halt triggered by an external order whereas there is a strictly positive frequency of this disruption happening when liquidity is just below this level. The integrability and frequency assumptions on external order flow are of little practical concern and are made for technical reasons as we need to bound the order flow fluctuations; see in particular the proof of Corollary~\ref{cor:continuityLiquidity}.

In Section~\ref{section:HJB}, we show how one can use standard techniques from dynamic programming and the theory of Hamilton--Jacobi--Bellman (HJB) equations to derive the numerical approximations of solutions to the trader's problem which we describe in the next section.

\subsection{Performance of signal-based trading strategies}\label{section_illustration}
In the following, we report the results from a numerical scheme to solve the dimension-reduced HJB \eqref{eq:HJB_reduced} from Subsection \ref{chapter:HJB} below and illustrate the trader's signal-based strategy in an optimal acquisition problem --- Subsection \ref{chapter:numericalscheme} provides details of the numerical scheme we used.

In our benchmark, the trader receives, with probability $\hat{p}$,  a private signal about limit orders, market orders, and cancellations of limit orders. Compared to a trader without the signal, the trader with the signal optimizes her execution times and trading volumes by splitting the parent order into fewer and larger child orders than those of the trader without signal. She executes her orders when a signal about a liquidity taking order arrives.
Signals about liquidity provision are not relevant for the execution programme of a risk averse trader
because there is no incentive to execute before the liquidity providing order increases the liquidity
in the book and price impact of market orders decreases.

Speculative trades are not profitable in the benchmark due to the size of the bid-ask spread. Whereas when the bid-ask spread is narrow, speculation based on signals is more profitable and the trader generally initiates speculative trades upon receiving a signal about liquidity provision and the level of liquidity is low. After waiting for liquidity to improve and the cost of price impact to decline, she unwinds the speculative position upon receiving a signal about the imminent arrival of a liquidity taking order.
\paragraph{Parameter specification --- Benchmark case.}
In the simulations, we consider six types of external market orders: buy orders of one, two, and three lots; and sell orders of one, two and three lots. Similarly, limit orders and their cancellations are of size one, two, and three lots. 

We choose the arrival rates $f$ and $g$ as
\begin{align}\label{eq:def_fg}
f(\lambda)=\theta_f\exp(-\kappa_f\,\lambda)\quad \text{and}\quad g(\lambda)=\theta_g\exp(\kappa_g\,\lambda),
\end{align}
where $\theta_f=40$, $\theta_g=20$,  and $\kappa_f=\kappa_g=0.01$, i.e., if liquidity $\lambda$ is at level 0, the market expects to receive 20 external market orders, 30 limit orders, and 10 cancellations of limit orders over the trading horizon of length $T=1$. 

The price impact function is given by \begin{align}\label{eq:iota_casestudy}
\iota(\lambda)=\theta_\iota+\kappa_\iota\lambda,
\end{align}
where $\theta_\iota=0.01$ and $\kappa_\iota=-0.0002$. 

For numerical reasons, we introduce an upper bound $\overline{\lambda}$  for the liquidity parameter $\lambda$ so that we obtain a bounded domain $[\underline{\lambda},\overline{\lambda}]$ for $\lambda$, where $\underline{\lambda}$ is the lower bound from the trading halt mechanism as in \eqref{eq:tauC}.
We set the lower bound at $\underline{\lambda}=-40$ and the upper bound at $\overline{\lambda}=40$, such that $\iota(\lambda)\geq 0$ for every $\lambda\in[\underline{\lambda},\overline{\lambda}]$. 
Moreover, we consider the inventory $q$ in the bounded domain $[-\overline{q},\overline{q}]$ with $\overline{q}:=10$.

With these parameter values, the elasticity condition \eqref{eq:elasticitycond} holds. More precisely, for the given specification of $\iota$ as in \eqref{eq:iota_casestudy}, the denominator $-\frac{\partial_\lambda \mathbb I^2(\lambda)}{\mathbb I^2(\lambda)}$ in the elasticity condition \eqref{eq:elasticitycond} lies in the range $\approx [0.022, 0.177]$ for $\lambda\in[-40,40]$. Hence, condition~\eqref{eq:elasticitycond} requires $\smash{\kappa_f=\frac{\partial_\lambda g(\lambda)}{g(\lambda)}<0.022}$, and thus yields an upper bound for this exponential arrival rate of market orders. \label{page:comment_elasticity_casestudy}

The independent auction mark-up $Y$ on the mid-price in~\eqref{eq:terminalWealth} is $N(0,\sigma^2)$-distributed with  
$\sigma=0.3$. The bid-ask spread is set to one cent, i.e., $2\zeta=0.01$.

Finally, the trader's risk aversion is $\alpha=0.1$.

\paragraph{Signal design.} The signal is given by the variable $Z_t$. A value of $Z_t=1$ signals an incoming limit order and a value of $Z_t=-1$ signals liquidity taking, i.e., an incoming market order or the cancellation of a limit order. With fixed probability $\hat{p}\in[0,1]$, the signal alerts the trader of an imminent order and so an external order has a $1-\hat{p}$ chance of taking the trader by surprise.

Thus, the signal informs the trader about the sign of the imminent change in liquidity, but does not provide any information about the size of an order, i.e., the trader anticipates the direction, but not the magnitude of changes in liquidity. Particularly, in the case of a signal $Z_t=-1$ about liquidity taking, the trader does not know whether a buy market order, a sell market order, or the cancellation of a limit order will be posted.

Note that definition \eqref{def:signal} allows for a huge variety of conceivable signals and the above specification is considered to simplify the implementation.

\paragraph{Formal model specification as in Section \ref{section_illustration}.}\label{page:formalmodel}
In this part, we follow the notations in Chapter \ref{chapter:model}. For simplicity, we only treat the benchmark; the specification of the remaining settings are similar.

First, the mark space is given by $E=\{-3,-2,-1,1,2,3\}\times\{-3,-2,-1,1,2,3\}\times\{-1,1\}$. The components $e_1$ and  $e_2$ of a mark $e=(e_1,e_2,e_3) \in E$ describe the volume of a possible market and of a limit order, respectively;  component $e_3$ decides which of these two possibilities materializes. The mappings $\eta$ and $\rho$ are $$\eta(e)=e_1\mathbbm{1}_{\{-1\}}(e_3)\quad\text{and}\quad \rho(e)=e_2\mathbbm{1}_{\{1\}}(e_3) \text{ for } e=(e_1,e_2,e_3) \in E.$$ 
The model is driven by the homogeneous Poisson point process $N(dt,de,dy)$ with compensator $dt\otimes\nu(de)\otimes dy$. Here, we have $\nu(de)=P^1(de_1)\otimes P^2(de_2)\otimes P^3(de_3)$, where $P^1$ is the distribution of the mark that determines the market order size, $P^2$ is the distribution of the mark that determines the limit order size and $P^3$ is a coin flip deciding whether a market or a limit order arrives in the book.

In our case study, we use the specifications
\begin{align}
      &P^1(\{-3\})=P^1(\{3\})=0.15,\;\; P^1(\{-2\})=P^1(\{2\})= P^1(\{-1\})=P^1(\{1\})=0.175,\\
      &P^2(\{-3\})=0.1,\;\, P^2(\{-2\})=P^2(\{-1\})=0.075,\;\, P^2(\{1\})=P^2(\{2\})=P^2(\{3\})=0.25,\\
      &P^3(\{-1\})=P^3(\{1\})=0.5.
 \end{align}

Here, as the arrival rates of market and limit orders are halved by the coin flip, we double $f$ and $g$ as in \eqref{eq:def_fg}, i.e., take $\hat{f}=2f$ and $\hat{g}=2g$, to ensure that, in expectation, the market still receives 20 external market orders, 30 limit orders, and 10 cancellations of limit orders over the trading horizon $[0,1]$.

The signal process $Z$ is
\begin{align}
    Z_t&:=\int_{\{t\}\times E \times \R_+}z(e,y)N(ds,de,dy) \text{ with }z(e,y):=(e_3,y).
\end{align}
We disintegrate for $z=z(e,y)\in\{-1,1\}\times \mathbb{R}_+$ the jump measure
\begin{align}
    \nu(de)\otimes dy=\int_{\{-1,1\} \times [0,\infty)}K(z;de,dy)\mu(dz),
\end{align}
into 
\begin{align}
    \mu(dz)&= (P^3(\{-1\})\text{Dirac}_{-1}+P^3(\{1\})\text{Dirac}_{1})(dz_1)\otimes dy,\\
    K(z_1,y;.,.)&= P^1\otimes P^2\otimes \text{Dirac}_{z_1}\otimes \text{Dirac}_y,
\end{align}
where we generically denote $z=(z_1,y)$.

\paragraph{Value function and certainty equivalent.} We implement the numerical scheme from subsection~\ref{chapter:numericalscheme} to solve the HJB~\eqref{eq:HJB_reduced}. Figure \ref{fig:value_fct_p02} shows the value function $v$ as in \eqref{eq:value_fct} for $p=x=0$ and when the probability of receiving a signal is $\hat{p}=0.2$.
\begin{figure}
\centering
\begin{minipage}{.5\textwidth}
\hspace{-1.5cm}
  \includegraphics[scale=0.7]{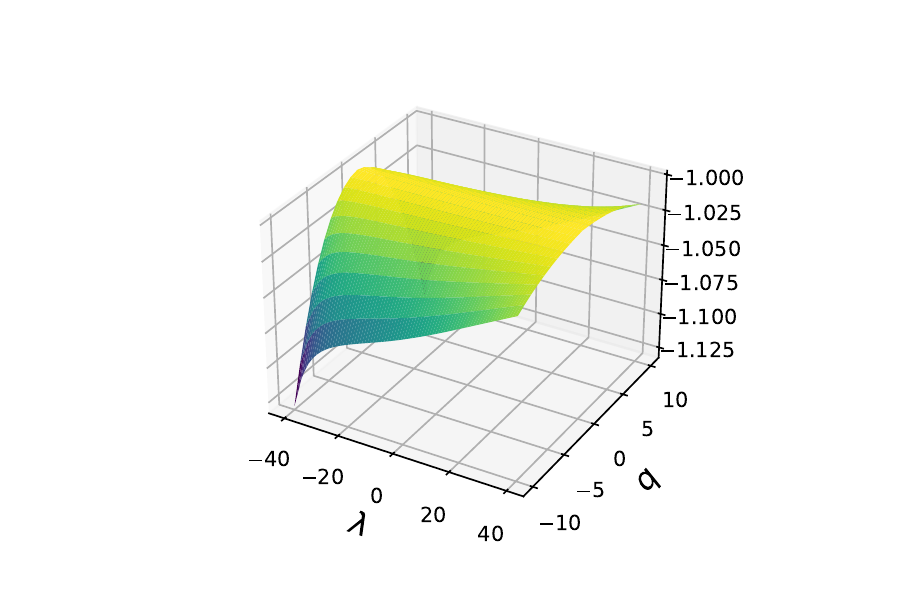}
    \caption{Value function $v$ as in \eqref{eq:value_fct} for $p=x=0$.}
    \label{fig:value_fct_p02}
\end{minipage}%
\begin{minipage}{.5\textwidth}
  \hspace{-1.5cm}
  \includegraphics[scale=0.7]{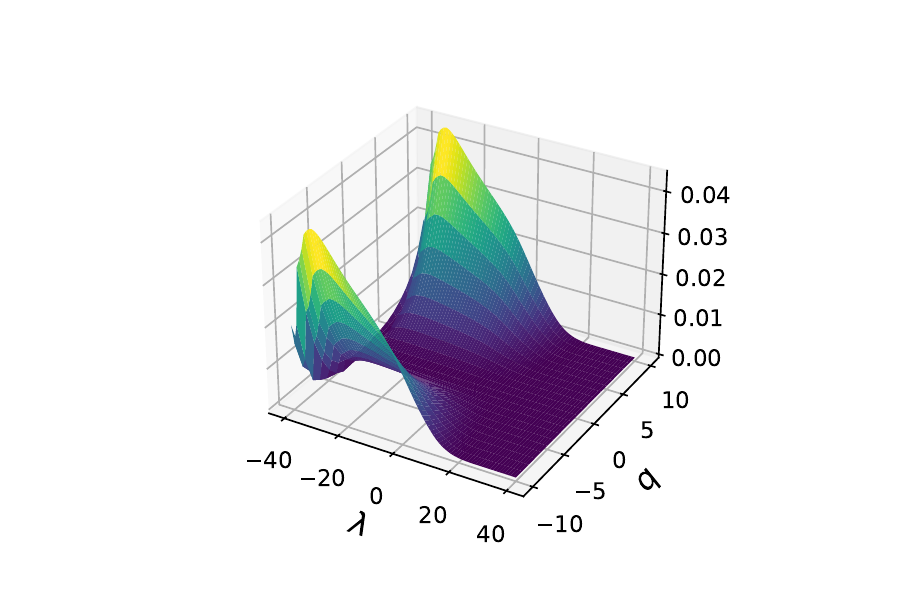}
\caption{Certainty Equivalent for a trader who receives a signal with probability $\hat{p}=0.2$ compared to a trader without signal for spread~$0.01$.}
\label{fig:VF_certaintyequiv}
\end{minipage}
\end{figure}
The value function is symmetric about $q=0$ due to the symmetry of \eqref{eq:definition_I} and \eqref{eq:definition_Xi} with respect to buy and sell orders. Hence, the expected utility of terminal wealth for an acquisition or liquidation programme is the same --- everything else being equal. 

Moreover, the value function is increasing in the liquidity $\lambda$ because execution costs decrease as the liquidity in the market increases. When liquidity approaches the liquidity trigger 
$\underline{\lambda}$, the value function is the steepest because of the additional risk of entering an auction through a trading halt.

Similarly, the value function is particularly steep for large values of $|q|$ because a large execution programme is linked to higher risk in price and liquidity changes.

Figure \ref{fig:VF_certaintyequiv} illustrates the signal-specific certainty equivalent of the value function when the probability of receiving a signal is $\hat{p}=0.2$. 
It corresponds to the additional amount of initial wealth that is needed when the trader does not receive a signal to achieve the same expected utility as in the case with signal, i.e., the certainty equivalent $CE$ is such that 
\begin{align}\label{eq:def_CE_1}
v_{\hat{p}=0}(T,q,\lambda,p,x+CE)=v_{\hat{p}=0.2}(T,q,\lambda,p,x).
\end{align}
We rearrange \eqref{eq:def_CE_1} to write 
\begin{align}\label{eq:def_CE}
CE=-\frac{1}{\alpha}\log\left(\frac{v_{\hat{p}=0.2}(T,q,\lambda,p,x)}{v_{\hat{p}=0}(T,q,\lambda,p,x)}\right).
\end{align}

The certainty equivalent is symmetric about $q$ and does not depend on $p$ or $x$. Moreover, the signal is worth the most for large values of inventory, long or short. At most, the trader can make up to four cents from the information of the signal, which corresponds to four times the bid-ask spread, see Figure \ref{fig:VF_certaintyequiv}. 

The certainty equivalent is decreasing in the liquidity component $\lambda$ because as liquidity decreases, the execution times become more relevant because price impact is more costly. 

Moreover, the certainty equivalent increases slightly towards the liquidity trigger $\underline{\lambda}=-40$. Here, the signal warns the trader of potentially reaching the lower bound $\underline{\lambda}$ and reduces the risk of executing a final trade in an auction.

For liquidity greater than twenty, the certainty equivalent is zero, i.e., the signal does not add any value because, with and without a signal, a trader will immediately complete the execution programme.

\paragraph{Benchmark --- The signal optimises times of execution and trading volume.}
\begin{figure}[t!]
    \centering
\includegraphics[scale=0.5]{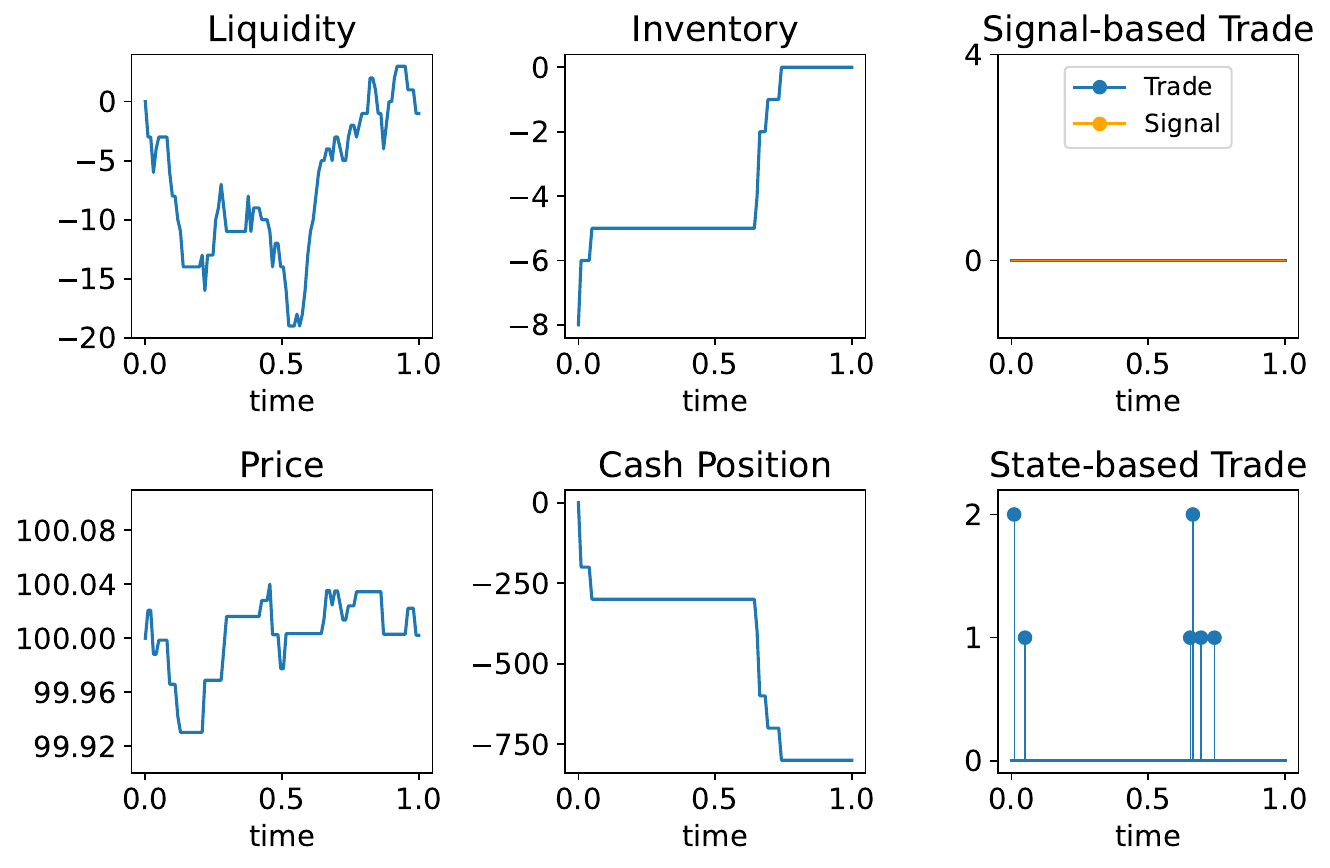}
    \caption{Optimal acquisition of eight lots. Pathwise plot when trader does not receive a signal and with spread 0.01.}
    \label{fig:paths_execution_0}
\end{figure}
\begin{figure}[t!]
    \centering
\includegraphics[scale=0.5]{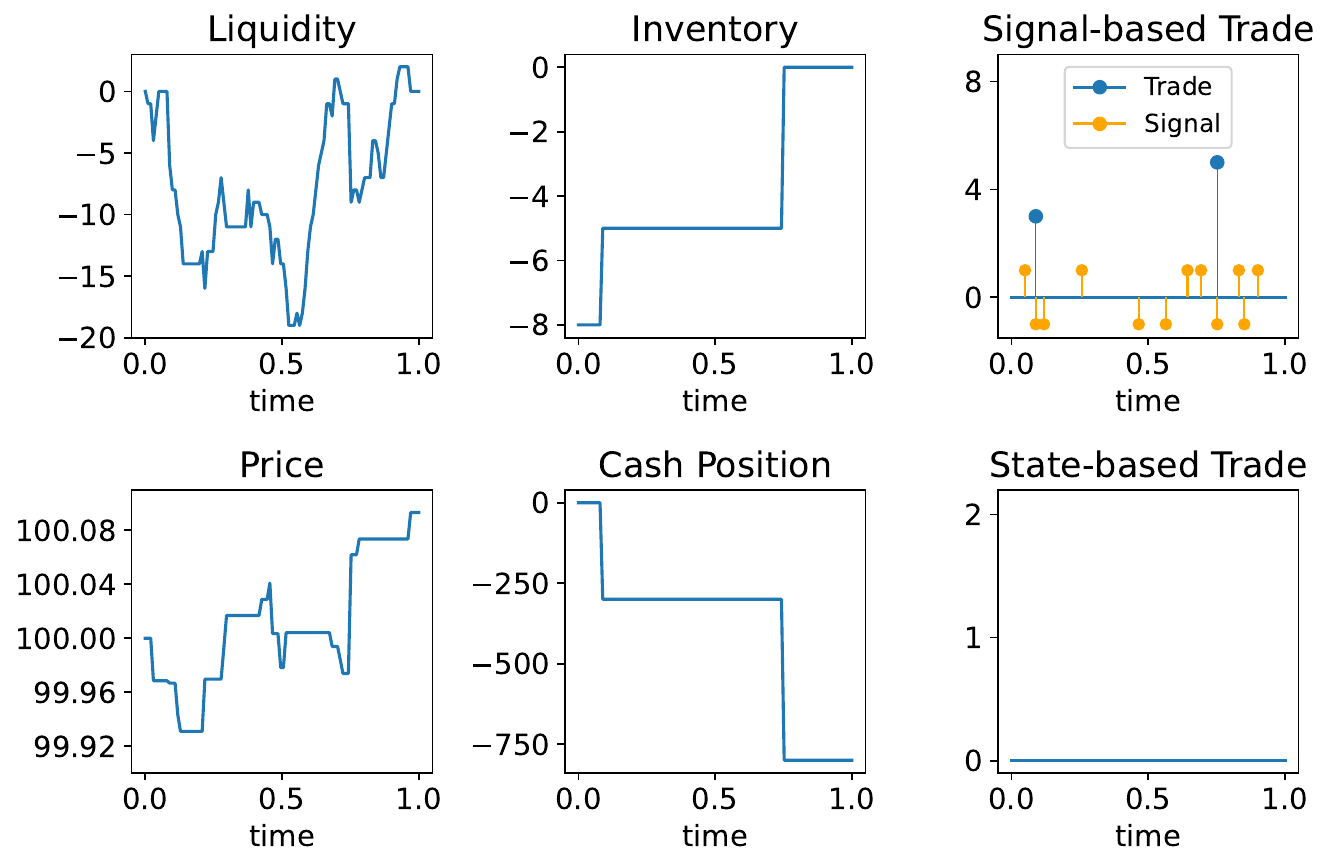}
    \caption{Optimal acquisition of eight lots. Pathwise plot when trader receives signal with probability $\hat{p}=0.2$ and with spread 0.01.}
\label{fig:paths_execution_02}
\end{figure}
\begin{figure}[t!]
\centering\vspace{-1cm}
\begin{minipage}{.49\textwidth}
\includegraphics[scale=0.5]{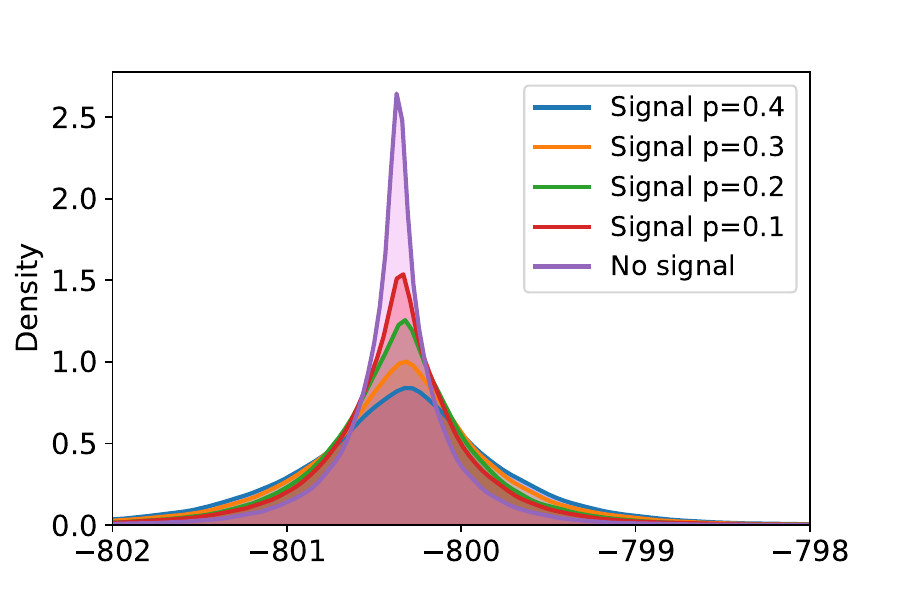}
\end{minipage}
\begin{minipage}{.49\textwidth}\vspace{21pt}
   \includegraphics[scale=0.5]{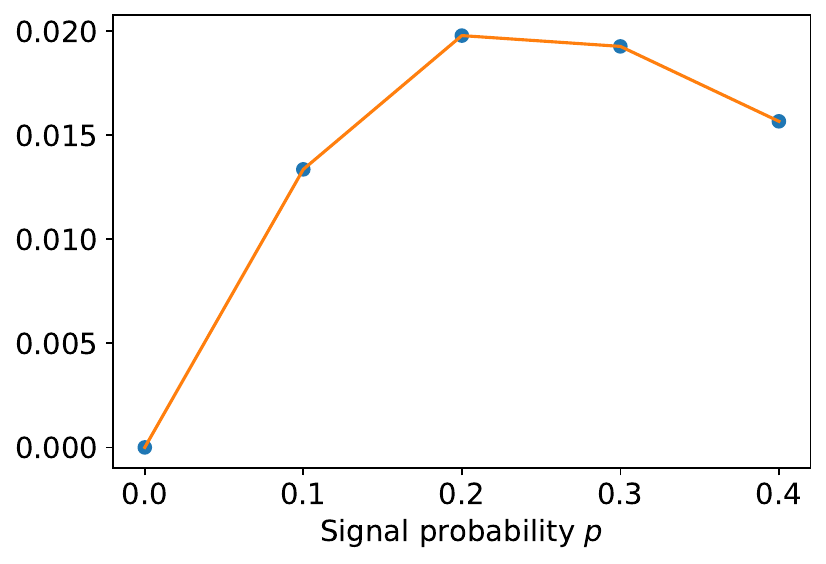}
\end{minipage}
\caption{Optimal acquisition of eight lots. Performance of trader without signal compared to a trader who receives a signal with probability $\hat{p}\in\{0,1,0.2,0.3,0.4\}$ and with spread 0.01.\\
    Left: Distribution of terminal wealth.\; Right: Signal-Sharpe-ratio.}
    \label{fig:hist_ssr_benchmark}
\end{figure}
In the benchmark with spread 0.01, we look at the optimal acquisition problem where the trader acquires eight lots of the asset over the time horizon $[0,T]$, so the initial inventory is $q=-8$. The trader can buy and sell any multiple of lots and can execute speculative trades by trading away from the target of buying only eight lots over the trading window.

When there is no private signal, the trader sends child orders of at most two lots when liquidity is sufficiently high, see Figure \ref{fig:paths_execution_0}. However, when the trader receives private signals, she times her executions with those of the signal with information about liquidity taking orders and optimizes the trading volume of her trades, see Figure \ref{fig:paths_execution_02} (simulated with same seed as that for results in Figure \ref{fig:paths_execution_0}). Here, she executes her acquisition programme by sending two child orders of sizes three and five lots upon receiving a signal about liquidity taking when the level of liquidity is high. When liquidity is low during $t\in[0.15,0.7]$, the trader does not act on the signal and prefers to wait for the liquidity to increase. Note that only the signals about liquidity taking orders are relevant in her execution programme.

\paragraph{More signals lead to lower execution costs and riskier strategies.}
Figure \ref{fig:hist_ssr_benchmark} illustrates the performance of signal-based strategies depending on the probability of receiving a signal. 

Figure \ref{fig:hist_ssr_benchmark} (left) shows the distribution of terminal wealth for different probabilities of receiving a signal compared to the distribution without a signal. When the probability of receiving a signal increases, the densities shift to the right, i.e., the signal increases the expected terminal wealth by optimizing the execution. On the other hand, as the probability of receiving a signal increases, also the variance of terminal wealth increases because the trader takes more risks due to the additional information through the private signal. More precisely, knowing that she will be informed about liquidity shocks through the signal, the trader waits longer for liquidity to recover from shocks before trading to minimize price impact costs. However, this is linked to more risk because unsignaled external orders can still arrive which increases the variance of the trader's terminal wealth.

To quantify the value of the private signal $Z$, we introduce the Signal-Sharpe-ratio (SSR)
\begin{align}
   SSR(Z):= \frac{\bar{W}(Z)-\bar{W}(0)}{\sigma(Z)},
   \end{align}
   where 
   \begin{align*}
    \bar{W}(Z):=\frac{1}{n_{sim}}\sum_{j=1}^{n_{sim}}\tilde{}W^j_{T+}(Z),\text{  }\bar{W}(0):=\frac{1}{n_{sim}}\sum_{j=1}^{n_{sim}}\tilde{W}^j_{T+}(0)
    \end{align*}
    and  
    \begin{align*}\sigma(Z):=\sqrt{\frac{1}{n_{sim}}\sum_{j=1}^{n_{sim}}(\tilde{W}^j_{T+}(Z)-\bar{W}(Z))^2}.
\end{align*}
Here, $\tilde{W}^j_{T+}(Z)$ resp.\ $\tilde{W}^j_{T+}(0)$ denote the terminal wealth for scenario $j\in\{1,...,n_{sim}\}$ with signal $Z$ and without signal.

Precisely, the SSR is the excess return $\bar{W}(Z)-\bar{W}(0)$ for a trader with signal $Z$ compared to a trader without signal weighted by the risk $\sigma(Z)$ of a trader with signal $Z$.

Figure \ref{fig:hist_ssr_benchmark} (right) shows the SSR as a function of the probability $\hat{p}$ of receiving a signal. The SSR increases up to $\hat{p}=0.2$ and slowly decreases thereafter because the increase in variance dominates the increase in expected terminal wealth, which leads to a non-monotonicity in Figure~\ref{fig:hist_ssr_benchmark}~(left). Note that the mean-variance optimization is not the objective of the trader who maximizes expected utility of terminal wealth; see~\eqref{eq:value_fct}.
\paragraph{Signal is more valuable for speculation as bid-ask spread narrows.}
Next, consider a bid-ask spread of size 0.002, i.e., the bid-ask spread is one fifth of that in the benchmark. In this case, the signal incentivises speculative trades because, everything else being equal, the costs to execute round-trip trades are lower.

The speculative round-trip trades start after receiving a signal on liquidity provision, i.e., when the trader knows through the signal that the speculation can be unwound at better liquidity, see Figure~\ref{fig:paths_execution_smallspread}. After triggering liquidity provision through her own market order, the trader waits for liquidity to arrive in the market to unwind the speculative position with better liquidity upon receiving a signal about a liquidity taking order.

\begin{figure}[t!]
    \centering
\includegraphics[scale=0.5]{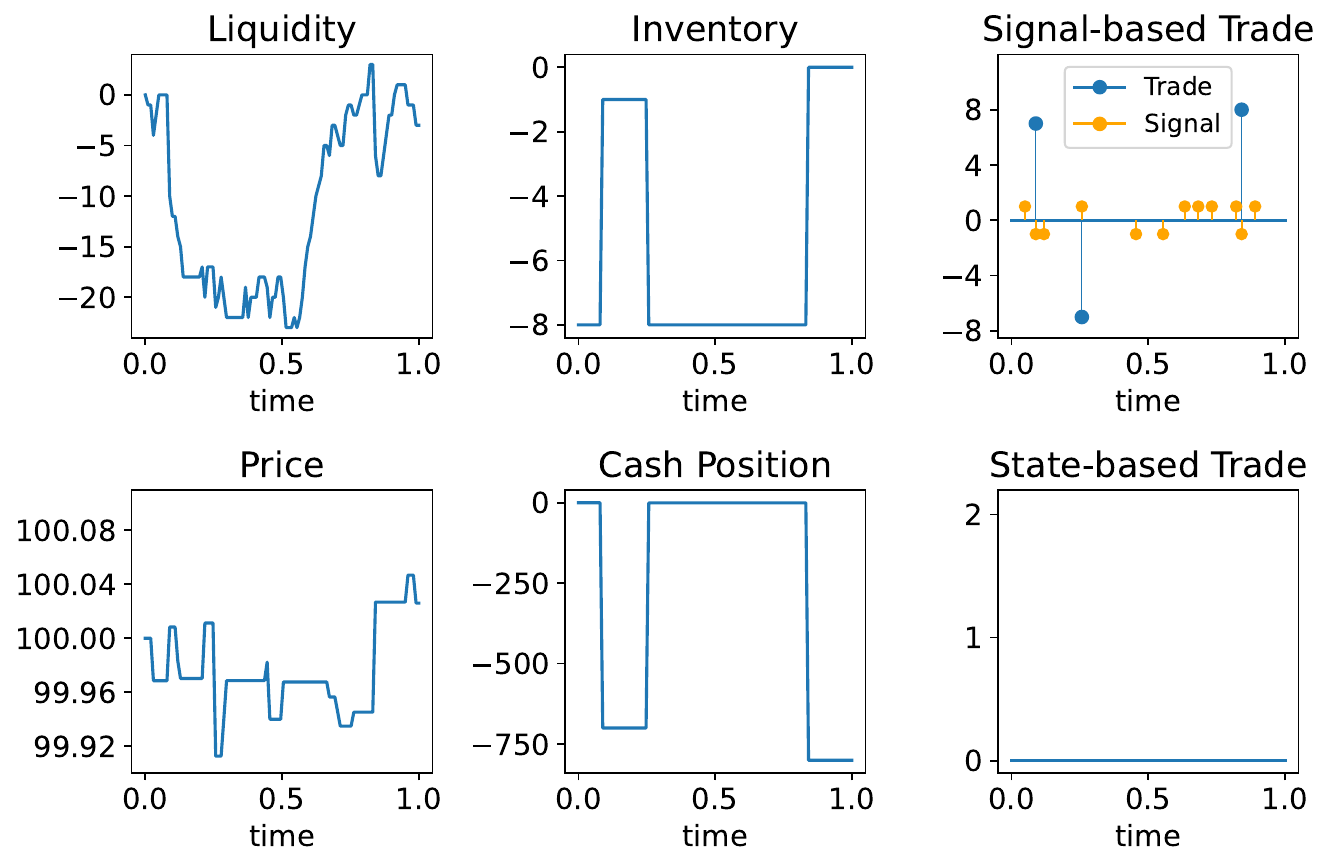}
    \caption{Optimal acquisition of eight lots. Pathwise plot when trader receives a signal with probability $\hat{p}=0.2$ and with spread $0.002$.}
    \label{fig:paths_execution_smallspread}
\end{figure}

\noindent Next, we study the number of paths with speculative trades among $n_{sim}=100,000$ simulated paths with bid-ask spread 0.01 and 0.002 in the acquisition ($q=-8$) and pure speculation ($q=0$) scenario where the paths for wide and narrow spreads are simulated with the same seed.\footnote{We say that a path contains a speculative trade if the trader trades away from the target of buying eight lots over the time horizon $[0,1]$ by trading with both buy and sell market orders.}

When the spread is wide, i.e., when the spread is $0.01$, there is no speculation because round-trip trades are too costly. On the other hand, when the spread is narrow, i.e., when the spread is $0.002$, there is speculation in about $22\%$ of paths in the acquisition problem~($q=-8$) and in about $11\%$ of paths in the pure speculation scenario ($q=0$).\label{page:22percent} There are more speculative paths in the execution example because the trader's execution of trades triggers liquidity provision and speculative trades become profitable. Especially, when the trader trades towards a position close to zero very early in the trading window, i.e., completes the acquisition problem early on, the remaining time horizon is long enough so that a speculative roundtrip starting close to zero is profitable; and vice versa if the trader completes the acquisition at a later point in the trading window.

Finally, without signal, there is no speculation for both the wide and the narrow bid-ask spread.

\begin{figure}[t!]
\vspace{-1cm}
\begin{minipage}{0.48\textwidth}
\centering
\vspace{1.4cm}
    \includegraphics[scale=0.5]{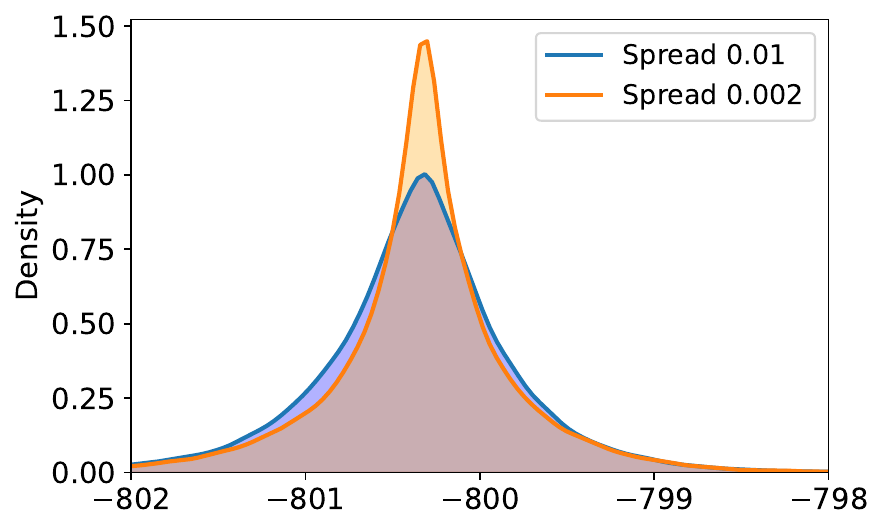}\vspace{0.5cm}
    \caption{Optimal acquisition of eight lots. Distribution of terminal wealth when trader receives a signal with probability $\hat{p}=0.2$ with a spread of $0.01$ and $0.002$.}
    \label{fig:hist_spread}
\end{minipage}
\begin{minipage}{0.48\textwidth}
\hspace{-1.8cm}
    \includegraphics[scale=0.7]{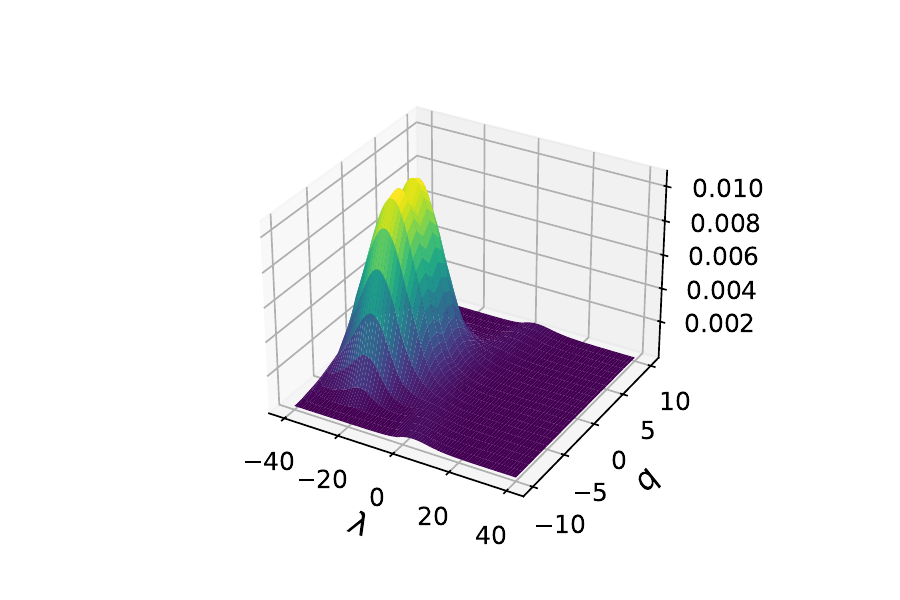}\vspace{-0.5cm}
    \caption{Certainty equivalent for a small spread of $0.002$ minus the certainty equivalent for a spread of $0.01$.}
    \label{fig:CE_spread}
\end{minipage}
\end{figure}Figure \ref{fig:hist_spread} shows the distribution of terminal wealth in the optimal acquisition problem of $q=-8$ when the trader receives a private signal with probability $\hat{p}=0.2$, and the bid-ask spread is 0.01 and 0.002. With a narrow spread, the terminal wealth is, on average, higher because the trader pays less spread and profits from speculative trades. The terminal wealth increases the most in the interval $[-801,-800.5]$, see Figure \ref{fig:hist_spread}, because low level of liquidity, which cause smaller values of terminal wealth for a spread of $0.01$ due to price impact costs, can lead to profitable roundtrips when the spread is at $0.002$.
Similarly, the variance of terminal wealth decreases because in general, the trader executes more orders to bring her inventory to zero and only deviates from this target in about $22\%$ of cases. 

Similarly, Figure \ref{fig:CE_spread} illustrates the certainty equivalent as in \eqref{eq:def_CE} for spread 0.002 minus the certainty equivalent as in \eqref{eq:def_CE} for spread 0.01. The certainty equivalent improves when liquidity is low and when the trader's inventory is close to zero which is when speculative trades are executed. 
\begin{figure}[t!]
\centering
\includegraphics[scale=0.5]{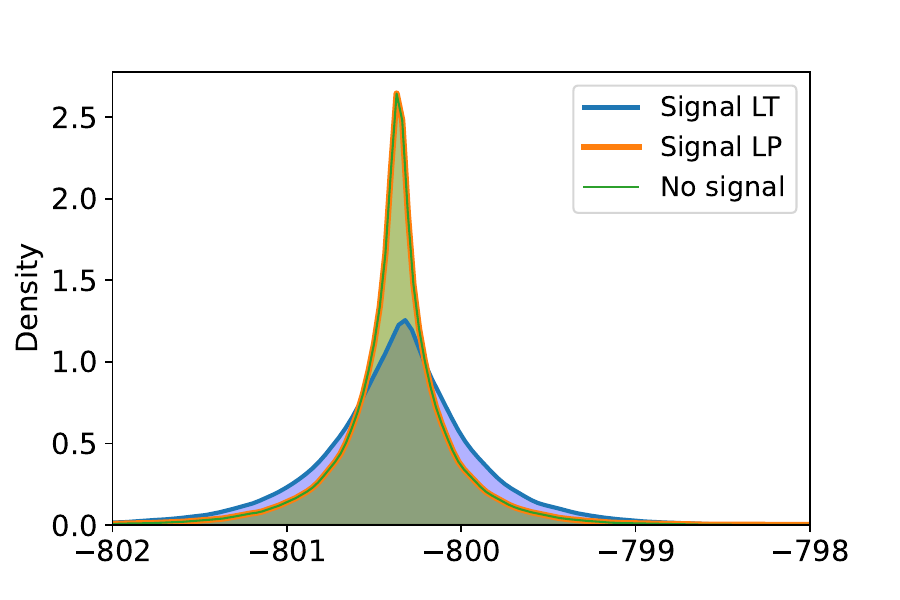}
  \caption{Optimal acquisition of eight lots. Distribution of terminal wealth for trader who receives a signal on liquidity taking~(LT) with probability $\hat{p}=0.2$ vs a trader who receives a signal on liquidity provision~(LP) with probability $\hat{p}=0.2$; spread is 0.01.}
 \label{fig:hist_sigMOLO}
 \end{figure}
\paragraph{Signal on liquidity taking orders vs signal on liquidity providing orders.}

Finally, we compare the performance of the strategy of a trader who receives a signal about the arrival of liquidity taking orders against the strategy of a trader who receives a signal about the arrival of liquidity provision. 

The trader with a signal on liquidity taking uses the signal to optimize the execution times and trading volumes, which increases the terminal wealth and its variance, see Figure \ref{fig:hist_sigMOLO}. Whereas, similarly to Figure \ref{fig:paths_execution_02}, the trader does not use the signal about liquidity provision for her execution so her terminal wealth coincides with that of the trader who receives no signal; see Figure \ref{fig:hist_sigMOLO}.

\section{The Hamilton--Jacobi--Bellman equation and its numerical solution}\label{section:HJB}

In this section, we present the derivation of the HJB and we present the numerical scheme we used for our illustrations in the preceding section.

\subsection{The Hamilton--Jacobi--Bellman equation}\label{chapter:HJB}

In the following, we derive the HJB equation that is satisfied by the value function $v$ as in \eqref{eq:value_fct}. Particularly, we investigate for which conditions the value process $V^C_t:=v(\bar{T}-t,\tilde{S}^C_{t+})$, $t\in[0,\bar{T}]$ with arbitrary, but fixed time horizon $\bar{T}$, satisfies super-martingale dynamics for every admissible $C\in\tilde{\mathcal{C}}(\lambda,q)$ and martingale dynamics for some  strategy $C^*\in\tilde{\mathcal{C}}(\lambda,q)$ (which will then be optimal). 

We adopt the notation from the representation of $C$ in Lemma~\ref{Lemma:decomposition_q} and suppose that the value function $v=v(T,\lambda,q,p,x)$ is sufficiently smooth to apply Itô's formula. We then can write at least formally for $0\leq t \leq\tau^C\wedge \bar{T}$ the dynamics of $V^C$ as
\begin{equation}\label{eq:Itoformula_V}
\begin{aligned}
    &dV^C_t=-\frac{\partial v}{\partial T}(\bar{T}-t,\tilde{S}^C_{t-})dt+\left(\frac{\partial v}{\partial p}(\bar{T}-t,\tilde{S}^C_{t-})\iota(\tilde{\lambda}^C_{t-})-\frac{\partial v}{\partial x}(\bar{T}-t,\tilde{S}^C_{t-})\tilde{P}^C_{t-}+\frac{\partial v}{\partial q}(\bar{T}-t,\tilde{S}^C_{t-})\right)dC^c_t\\
    &+\left(\frac{\partial v}{\partial x}(\bar{T}-t,\tilde{S}^C_{t-})\zeta-\frac{\partial v}{\partial \lambda}(\bar{T}-t,\tilde{S}^C_{t-})\right)|dC^c_t|\\
    &+\hspace{-9pt}\int\limits_{E\times \{z\not=0\}}\hspace{-9pt}\Big[v\!\left(\bar{T}\!-\!t,\tilde{\mathfrak{s}}\left(\Gamma_t(z(e,y)),\eta(e)\mathbbm{1}_{\{y\leq f(\tilde{\lambda}^C_{t-})\}},\rho(e)\mathbbm{1}_{\{y\leq f(\tilde{\lambda}^C_{t-})\}};\tilde{S}^C_{t-}\right)\right)\!-\!v\!\left(\bar{T}\!-\!t,\tilde{S}^C_{t-}\right)\Big]{N}(dt,de,dy)\\
    &+\hspace{-9pt}\int\limits_{E\times \{z=0\}}\hspace{-9pt}\Big[v\!\left(\bar{T}\!-\!t,\tilde{\mathfrak{s}}\left(0,\eta(e)\mathbbm{1}_{\{y\leq g(\tilde{\lambda}^C_{t-})\}},\rho(e)\mathbbm{1}_{\{y\leq f(\tilde{\lambda}^C_{t-})\}};\tilde{S}^C_{t-}\right)\!\right)\!-\!v\!\left(\bar{T}\!-\!t,\tilde{S}^C_{t-}\right)\!\Big]N(dt,de,dy)\\&
    +\left[v\left(\bar{T}-t,\tilde{\mathfrak{s}}(\Delta^{l,p}_t C,0,0;\tilde{S}^C_{t-})\right)-v\left(\bar{T}-t,\tilde{S}^C_{t-}\right)\right]\\&
    +\left[v\left(\bar{T}-t,\tilde{\mathfrak{s}}(\Delta^r_t C,0,0;\tilde{S}^C_{t})\right)-v\left(\bar{T}-t,\tilde{S}^C_{t}\right)\right],
\end{aligned}
\end{equation}
where we used that the predictable left-jumps $\Delta^{l,p}_t C$ cannot coincide with mark times; cf.~\eqref{eq:marksnotcharged}.

For $V^C$ to be a supermartingale for \emph{any} admissible $C$, we thus see from the last two lines that 
\begin{align}\label{eq:impulse_part}
    \sup\limits_{\Delta\in  D(\lambda,q)}\Big\lbrace v(T,\tilde{\mathfrak{s}}(\Delta,0,0;s))-v(T,s)\Big\rbrace\leq 0,\quad  (T,s)\in[0,\infty)\times[\underline{\lambda},\infty)\times[-\bar{q},\bar{q}]\times\mathbb{R}^2,
\end{align}
must be satisfied. Note that in case $ D(\lambda,q)=D=\mathbb{R}$ (when continuous trading is admissible) the above condition still ensures that contributions from the $dC^c_t$ terms in $dV^C_t$ lead to supermartingale dynamics. Indeed, upon formal differentiation of $v(T,\tilde{\mathfrak{s}}(\Delta,0,0;s))-v(T,s)$ from above and from below at $\Delta=0$, the condition entails in particular $\frac{\partial v}{\partial x}\zeta-\frac{\partial v}{\partial \lambda} \pm \left(\frac{\partial v}{\partial p} \iota(\lambda)-\frac{\partial v}{\partial x}p+\frac{\partial v}{\partial q}\right)\leq 0$. As a result, the netted integrands of $dC^{c,\pm}$ in~\eqref{eq:Itoformula_V} will be nonpositive as required.   

Next, we compensate the $N$-integrals which then give local martingales and drifts accumulating with rates
\begin{align}
    &\hspace{-45pt}\int\limits_{\hspace{30pt}(E\times \R_+)\cap \{z(e,y)\neq 0\}}\hspace{-45pt}\Big[v\!\left(\bar{T}\!-\!t,\tilde{\mathfrak{s}}\left(\Gamma_t(z(e,y)),\eta(e)\mathbbm{1}_{\{y\leq g(\tilde{\lambda}^C_{t-})\}},\rho(e)\mathbbm{1}_{\{y\leq f(\tilde{\lambda}^C_{t-})\}};\tilde{S}^C_{t-}\right)\right)\!-\!v\!\left(\bar{T}\!-\!t,\tilde{S}^C_{t-}\right)\Big]\nu(de)\otimes dy \label{eq:integral_zneq0}\\&+\hspace{-30pt}\int\limits_{\hspace{30pt}(E\times \R_+)\cap \{z(e,y)=0\}}\hspace{-45pt}\left[v\!\left(\bar{T}-t,\tilde{\mathfrak{s}}\left(0,\eta(e)\mathbbm{1}_{\{y\leq g(\tilde{\lambda}^C_{t-})\}},\rho(e)\mathbbm{1}_{\{y\leq f(\tilde{\lambda}^C_{t-})\}};\tilde{S}^C_{t-}\right)\right)\!-\!v\!\left(\bar{T}-t,\tilde{S}^C_{t-}\right)\right]\nu(de)\otimes dy \label{eq:integral_z=0}.
\end{align} 
We use the disintegration 
\begin{align}\label{eq:disintegration}
\nu(de)\otimes dy=\int_{z(E\times\R_+)\setminus\{0\}}K(\bar{z};de,  dy)\mu(d\bar{z}),
\end{align}
with $\mu=(\nu\otimes \mathrm{Leb}) \circ (z)^{-1}$ to rewrite and estimate the first integral in \eqref{eq:integral_zneq0} as follows:
\begin{align*}
&\hspace{-39pt}\int\limits_{\hspace{44pt}(E\times \R_+)\cap \{z(e,y)\neq 0\}}\hspace{-51pt}\left[v\!\left(\bar{T}\!-\!t,\tilde{\mathfrak{s}}\left(\Gamma_t(z(e,y)),\eta(e)\mathbbm{1}_{\{y\leq g(\tilde{\lambda}^C_{t-})\}},\rho(e)\mathbbm{1}_{\{y\leq f(\tilde{\lambda}^C_{t-})\}};\tilde{S}^C_{t-}\right)\right)\!-\!v\!\left(\bar{T}\!-\!t,\tilde{S}^C_{t-}\right)\right]\nu(de)\otimes dy\\
    &=\hspace{-10pt}\int\limits_{z(E\times \R_+)\setminus\{0\}}\hspace{-23pt}\int\limits_{\hspace{30pt}(E\times \R_+)\cap \{z(e,y)=\bar{z}\}}\hspace{-30pt}\Big[v\left(\bar{T}-t,\tilde{\mathfrak{s}}\left(\Gamma_t(\bar{z}),\eta(e)\mathbbm{1}_{\{y\leq g(\tilde{\lambda}^C_{t-})\}},\rho(e)\mathbbm{1}_{\{y\leq f(\tilde{\lambda}^C_{t-})\}};\tilde{S}^C_{t-}\right)\right)\\[-10pt]
    &\hspace{5.5cm}-v\left(\bar{T}-t,\tilde{S}^C_{t-}\right)\Big]K(\bar{z};de,  dy) \mu(d\bar{z})\\
    &\leq \hspace{-10pt}\int\limits_{z(E\times \R_+)\setminus\{0\}}\sup_{\gamma\in D(\tilde{\lambda}^C_{t-},\tilde{Q}^C_{t-})}\int\limits_{(E\times \R_+)\cap\{z(e,y)=\bar{z}\}}\hspace{-20pt}\Big[v\left(\bar{T}-t,\tilde{\mathfrak{s}}\left(\gamma,\eta(e)\mathbbm{1}_{\{y\leq g(\tilde{\lambda}^C_{t-})\}},\rho(e)\mathbbm{1}_{\{y\leq f(\tilde{\lambda}^C_{t-})\}};\tilde{S}^C_{t-}\right)\right)\\[-10pt]
    &\hspace{7cm}-v\left(\bar{T}-t,\tilde{S}^C_{t-}\right)\Big]K(\bar{z};de,  dy) \mu(d\bar{z}).
\end{align*}
Finally, we collect all the $dt$-terms from the dynamics of $dV^C_t$ and have the tight upper bound
\begin{align}\label{eq:collected_dtterms_HJB}
    &-\frac{\partial v}{\partial T}(\bar{T}-t,\tilde{S}^C_{t-})\\
    &+\!\!\!\!\int\limits_{(E\times \R_+)\cap \{z(e,y)=0\}}\hspace{-30pt}\left[v\left(\bar{T}-t,\tilde{\mathfrak{s}}\left(0,\eta(e)\mathbbm{1}_{\{y\leq g(\tilde{\lambda}^C_{t-})\}},\rho(e)\mathbbm{1}_{\{y\leq f(\tilde{\lambda}^C_{t-})\}};\tilde{S}^C_{t-}\right)\right)-v\left(\bar{T}-t,\tilde{S}^C_{t-}\right)\right](\nu(de)\otimes dy) \nonumber\\
    &+\!\!\!\!\int\limits_{z(E\times \R_+)\setminus\{0\}}\hspace{-5pt}\sup_{\gamma\in D(\tilde{\lambda}^C_{t-},\tilde{Q}^C_{t-})}\int\limits_{(E\times \R_+)\cap\{z(e,y)=\bar{z}\}}\hspace{-15pt}\Big[v\left(\bar{T}-t,\tilde{\mathfrak{s}}\left(\gamma,\eta(e)\mathbbm{1}_{\{y\leq g(\tilde{\lambda}^C_{t-})\}},\rho(e)\mathbbm{1}_{\{y\leq f(\tilde{\lambda}^C_{t-})\}};\tilde{S}^C_{t-}\right)\right)\\[-10pt]
    &\hspace{6cm}-v\left(\bar{T}-t,\tilde{S}^C_{t-}\right)\Big]K(\bar{z};de,  dy) \mu(d\bar{z}) \nonumber.
\end{align}
To satisfy the martingale optimality principle, the terms in \eqref{eq:collected_dtterms_HJB} must be smaller than or equal to zero for any admissible strategy $C\in\tilde{\mathcal{C}}(\lambda,q)$ and zero for some (optimal) strategy $C^*$. With \eqref{eq:impulse_part}, we thus obtain the HJBQVI
\begin{align}
&\max\bigg\lbrace-\frac{\partial v}{\partial T}(T,s)+\hspace{-15pt}\int\limits_{(E\times \R_+)\cap \{z(e,y)=0\}}\hspace{-22pt}\left[v\left(T,\tilde{\mathfrak{s}}\left(0,\eta(e)\mathbbm{1}_{\{y\leq g(\lambda)\}},\rho(e)\mathbbm{1}_{\{y\leq f(\lambda)\}};s\right)\right)-v\left(T,s\right)\right]\nu(de)\otimes dy\nonumber\\
    &\quad+\hspace{-18pt}\int\limits_{z(E\times \R_+)\setminus\{0\}}\hspace{-8pt}\sup_{\gamma\in  D(\lambda,q)}\hspace{-35pt}\int\limits_{\hspace{30pt}(E\times \R_+)\cap\{z(e,y)=\bar{z}\}}\hspace{-48pt}\left[v\left(T,\tilde{\mathfrak{s}}\left(\gamma,\eta(e)\mathbbm{1}_{\{y\leq g(\lambda)\}},\rho(e)\mathbbm{1}_{\{y\leq f(\lambda)\}};s\right)\right)\!-\!v\left(T,s\right)\right]K(\bar{z};de,  dy) \mu(d\bar{z});\nonumber\\
&\quad\sup\limits_{\Delta\in  D(\lambda,q)}\Big\lbrace v(T,\tilde{\mathfrak{s}}(\Delta,0,0;s))-v(T,s)\Big\rbrace\bigg\rbrace=0,\label{eq:HJB}
\end{align}
for $(T,s)=(T,\lambda,q,p,x)\in[0,\infty)\times [\underline{\lambda},\infty)\times[-\bar{q},\bar{q}]\times\R\times\R$. Due to trading being halted when liquidity falls below $\underline{\lambda}$, the value function evaluated in $\lambda<\underline{\lambda}$ is
\begin{align*}
   v(T,\lambda,q,p,x)=
       U\left(\w(x,q,p,\underline{\lambda})\right)\exp\left(\mathcal{L}_Y(\alpha q)\right),
\end{align*}
$\text{for }(T,q,p,x)\in[0,\infty)\times[-\bar{q},\bar{q}]\times\R\times\R$, and the initial condition for $s=(\lambda,q,p,x)\in [\underline{\lambda},\infty)\times [-\bar{q},\bar{q}]\times\mathbb{R}^2$ with zero time to go $T=0$ is
\begin{align}
   v(0,\lambda,q,p,x)=\begin{cases}
       U\left(\w(x,q,p,\lambda)\right)\exp\left(\mathcal{L}_Y(\alpha\rr(q,\lambda))\right),&\text{ if }\lambda\geq \underline{\lambda},\\
       U\left(\w(x,q,p,\underline{\lambda})\right)\exp\left(\mathcal{L}_Y(\alpha q)\right),&\text{ if }\lambda<\underline{\lambda}.
   \end{cases}
\end{align}

To reduce the dimensions of the HJB equation, we recall the multiplicative structure~\eqref{eq:multiplicativeValueFunction} of the value function.
For notational simplicity, we introduce 
\begin{align}
\gamma^\lambda&:=\Upsilon(\gamma,\lambda),\\
    \eta^{\lambda,\gamma}(e,y)&:=\Upsilon(\eta(e),\lambda-|\gamma^\lambda|)\mathbbm{1}_{\{y \leq g(\lambda)\}},\\
    \rho^{\lambda,\gamma}(e,y)&:=\Big(\rho^+(e)\mathbbm{1}_{\{\lambda-|\gamma^\lambda|\geq\underline{\lambda}\}}+\Upsilon(\rho^-(e),\lambda-|\gamma^\lambda|)\Big)\mathbbm{1}_{\{y \leq f(\lambda)\}},\\
       \lambda^{\lambda,\gamma}(e,y)&:=
    \hfill\lambda-|\gamma|-|\eta(e)|\mathbbm{1}_{\{y \leq g(\lambda)\}}+\rho(e)\mathbbm{1}_{\{y \leq f(\lambda)\}},\\
    \Delta^\lambda&:=\Upsilon(\Delta,\lambda),\\
    \w_0(q,\lambda)&:=\w(0,q,0,\lambda)=-\zeta|q|-\Xi(q,\lambda),
   \end{align}
and write the reduced HJB for $v_0(T,\lambda,q)$ with $(T,\lambda,q)\in[0,\infty)\times [\underline{\lambda},\infty)\times [-\bar{q},\bar{q}]$ when $\alpha>0$ as
\begin{equation}\label{eq:HJB_reduced}
\begin{aligned}
\max&\bigg\lbrace -\frac{\partial v_0}{\partial T}(T,\lambda,q)+\hspace{-20pt}\int\limits_{(E\times \R_+)\cap \{z(e,y)=0\}}\hspace{-30pt}\bigg[v_0\Big(T,\lambda^{\lambda,0}(e,y),q\Big)\Big|U\Big(I(\eta^{\lambda,0}(e,y),\lambda)q\Big)\Big|-v_0(T,\lambda,q)\bigg]\nu(de)\otimes dy\\&+\int\limits_{z(E\times \R_+)\setminus\{0\}}\hspace{-5pt}\sup_{\gamma\in  D(\lambda,q)}\bigg\lbrace\hspace{-50pt}\int\limits_{\hspace{50pt}(E\times \R_+)\cap\{z(e,y)=\bar{z}\}}\hspace{-40pt}\bigg[v_0\Big(T,\lambda^{\lambda,\gamma}(e,y),q+\gamma^\lambda\Big)\cdot\\&\hspace{3cm}\Big|U\Big(\w_0(\gamma^\lambda,\lambda)+\Big(I\left(\gamma^\lambda,\lambda\right)+I\big(\eta^{\lambda,\gamma}(e,y),\lambda-|\gamma^\lambda|\big)\Big)(q+\gamma^\lambda)\Big)\Big|\\[-5pt]&\hspace{4cm}-v_0(T,\lambda,q)\bigg] K(\bar{z};de,  dy)\bigg\rbrace\mu(d\bar{z});\\
&\sup\limits_{\Delta\in  D(\lambda,q)}\Big\lbrace  v_0(T,\lambda-|\Delta|,q+\Delta^\lambda)\Big|U\Big(\w_0(\Delta^\lambda,\lambda)+I\left(\Delta^\lambda,\lambda\right)(q+\Delta^\lambda)\Big)\Big|\\&\qquad\qquad-v_0(T,\lambda,q)\Big\rbrace\bigg\rbrace=0.
\end{aligned}
\end{equation}
For $\lambda<\underline{\lambda}$, the function $v_0(T,\lambda,q)$ takes the value
\begin{align}
   v_0(T,\lambda,q)=
       U\left(\w_0(q,\lambda)\right)\exp\left(\mathcal{L}_Y(\alpha q)\right),\quad (T,q)\in[0,\infty)\times [-\bar{q},\bar{q}],
\end{align}
and the initial condition for $(\lambda,q)\in \R\times [-\bar{q},\bar{q}]$ with zero time to go is
\begin{align}
   v_0(0,\lambda,q)=\begin{cases}
       U\left(\w_0(q,\lambda)\right)\exp\left(\mathcal{L}_Y(\alpha \rr(q,\lambda))^+)\right),&\text{ if }\lambda\geq \underline{\lambda},\\
       U\left(\w_0(q,\underline{\lambda})\right)\exp\left(\mathcal{L}_Y(\alpha q)\right),&\text{ if }\lambda<\underline{\lambda}.
   \end{cases}
\end{align}
Similarly, one obtains the reduced HJB for $w$ when $\alpha=0$.\\
The suggested optimal strategy can as usual be constructed in feedback form by determining the maximizer in the suprema of the HJB \eqref{eq:HJB_reduced}.

\subsection{The Numerical Scheme}\label{chapter:numericalscheme}

In the following, we present a numerical finite-difference scheme to approximate the solution of the dimension reduced HJB equation in \eqref{eq:HJB_reduced} for $\alpha>0$; a similar scheme can be derived for $\alpha=0$. 

For this, we first discretise the value function's domain $[0,T]\times[\underline{\lambda},\infty) \times [-\bar{q},\bar{q}]$. For the time dimension $T'\in[0,T]$, we take $\mathbb{T}_{\delta T}$ as the grid on $[0,T]$ with step size $\delta T>0$, where $T=N_1\delta T$ for some $N_1\in\mathbb{N}$.
For the liquidity dimension $\lambda\in[\underline{\lambda},\infty)$, we choose the grid $\mathbb{R}_{\delta \lambda}^{\overline{\lambda}}$ discretising $[\underline{\lambda}-\delta \lambda,\overline{\lambda}]$ with stepsize $\delta \lambda>0$ where $\overline{\lambda}=\underline{\lambda}+N_2\delta \lambda$ for some $N_2\in\mathbb{N}$ is a conveniently chosen maximum liquidity level. To remain within a bounded domain, we assume $\lambda$ equals $\underline{\lambda}-\delta \lambda$ after trading was halted and define the numerical scheme below accordingly. For the inventory levels $q\in\mathbb{R}$, we introduce the bounded grid $\mathbb{R}_{\delta q}^{\underline{Q},\overline{Q}}$ on $[\underline{Q},\overline{Q}]$ with step size $\delta q>0$ where $\underline{Q}=N_3\delta q$ and $\overline{Q}=N_4\delta q$ for some constants $N_3,N_4\in\mathbb{Z}$, $N_3<N_4$ and $[\underline{Q},\overline{Q}]\subseteq [-\overline{q},\overline{q}]$. 
For the inventory $q$ to remain within the grid $\mathbb{R}_{\delta q}^{\underline{Q},\overline{Q}}$ and for $\lambda$ to remain above $\underline{\lambda}-\delta \lambda$, we define the set of actions$$D^{\underline{Q},\overline{Q}}(q,\lambda):=\Big\{n\cdot\delta q\text{ with }n\in\mathbb{Z} :q+n\,\delta q\in\mathbb{R}_{\delta q}^{\underline{Q},\overline{Q}}\text{ and }\lambda-|n\delta|\geq\underline{\lambda}-\delta \lambda\Big\}, \qquad q\in\mathbb{R}_{\delta q}^{\underline{Q},\overline{Q}},\lambda\in\mathbb{R}_{\delta \lambda}^{\overline{\lambda}}.$$

To propagate the value function, it is convenient to introduce the following operator acting on functions $h^{T'}: \mathbb{R}_{\delta \lambda}^{\overline{\lambda}}\times \mathbb{R}_{\delta q}^{\underline{Q},\overline{Q}}\rightarrow\mathbb{R}$ with $T'\in\mathbb{T}_{\delta T}$:
\begin{align*}
&\mathcal{L}^{\delta T, \delta \lambda}(q,\lambda,h^{T'}):=h^{T'}(\lambda,q)+\delta T\bigg(\Delta_{z=0}h^{T'}(\lambda,q)+\hspace{-10pt}\int\limits_{z(E\times \mathbb{R}_+)\setminus \{0\}}\sup\limits_{\gamma\in D^{\underline{Q},\overline{Q}}(q,\lambda)}\hspace{-5pt}\Delta_{\text{z=}\bar{z}}h^{T'}(\lambda,q,\gamma,\psi)\;\;\mu(d\bar{z})\bigg),
\end{align*}
where  
\begin{align*}
&\Delta_{\text{z=0}}h^{T'}(\lambda,q):=\hspace{-20pt}\int\limits_{(E\times \mathbb{R}_+)\cap\{z(e,y)=0\}}\hspace{-26pt}\Big[\bar{h}^{T'}\big(\lambda^{\lambda,0}(e,y)\vee (\underline{\lambda}-\delta \lambda),q\big)\big|U\big(I(\eta^{\lambda,0}(e,y),\lambda)q\big)\big|-h^{T'}(\lambda,q)\Big]\nu(de)\otimes dy
\end{align*}
describes the propagation while there is no signal to act upon, and
\begin{align*}
 & \Delta_{\text{z=}\bar{z}}h^{T'}(\lambda,q,\gamma):=\\
 &\hspace{-40pt}
\int\limits_{\hspace{30pt}(E\times \mathbb{R}_+)\cap\{z(e,y)=\bar{z}\}}\hspace{-25pt}\bigg[\bar{h}^{T'}\big(\lambda^{\lambda,\gamma}(e,y)\vee (\underline{\lambda}-\delta \lambda),q+\gamma^\lambda\big)\cdot\;\\[-5pt]&\hspace{0.6cm}\Big|U\Big(\w_0(\gamma^\lambda,\lambda)-\big(I(\gamma^\lambda,\lambda)+I\big(\eta^{\lambda,\gamma}(e,y),\lambda-|\gamma^\lambda|\big)\big)(q+\gamma^\lambda)\Big)\Big|-h^{T'}(\lambda,q)\bigg] K(\bar{z};de,dy)
\end{align*}
to describe the propagation when an order $\gamma$ is placed upon observing a signal $\bar{z}$. For  values $\lambda\notin \mathbb{R}_{\delta \lambda}^{\overline{\lambda}}$ we linearly interpolate to obtain
\begin{align*}
\bar{h}^{T'}(q,\lambda):=&h^{T'}\left(q,\left\lceil\frac{\lambda}{\delta \lambda }\right\rceil\delta \lambda\right)-\left(\left\lceil\frac{\lambda}{\delta \lambda }\right\rceil\delta \lambda-\lambda\right)\left(h^{T'}\left(q,\left\lceil\frac{\lambda}{\delta \lambda }\right\rceil\delta \lambda\right)-h^{T'}\left(q,\left\lfloor\frac{\lambda}{\delta \lambda }\right\rfloor\delta \lambda\right)\right).
\end{align*}
Here, $\lfloor\cdot\rfloor$ and $\lceil\cdot\rceil$ denotes rounding to the closest integer below and above.
Similarly, we define 
\begin{align*}
&\mathcal{M}^{\delta T, \delta \lambda}(\lambda,q,h^{T'})
:=\hspace{-15pt}\sup\limits_{\Delta\in D^{\underline{Q},\overline{Q}}(q,\lambda)} \!\bigg\lbrace \bar{h}^{T'}\big((\lambda-|\Delta|)\vee(\lambda-\delta\lambda),q+\Delta\big)\cdot\\[-10pt]&\hspace{5cm}\big|U\big(\w_0(\Delta^\lambda,\lambda)-I(\Delta^\lambda,\lambda)(q+\Delta^\lambda)\big)\big|\bigg\rbrace
\end{align*}
as an operator that ensures state based trades will be executed optimally.

With this notation in place, we specify the following numerical scheme to compute an approximate solution $w$ to \eqref{eq:HJB_reduced}. First, initialize with the boundary values with time $T=0$ to go:\begin{align*}
v_0^0(q,\lambda)&:=\begin{cases}
    U\left(\w_0(q,\lambda)\right)\exp\left(\mathcal{L}_Y(\alpha \rr(q,\lambda))\right),&\text{ if }\lambda\geq \underline{\lambda}, \\\hfill U\left(\w_0(q,\underline{\lambda})\right)\exp\left(\mathcal{L}_Y(\alpha q)\right),&\text{ if }\lambda=\underline{\lambda}-\delta \lambda.
\end{cases}
\end{align*}
Next, after computing the value function up to a time horizon $T'$, we propagate it a step further to $T'+\delta T$ by calculating the value when one can play for an extra period to obtain
\begin{align*}    
\tilde{v_0}^{T'+\delta T}(q,\lambda)&:=\mathcal{L}^{\delta T, \delta \lambda}(q,\lambda,v_0^{T'}),\qquad\lambda\geq \underline{\lambda}.
\end{align*}
This value is then updated to account for possible state-based trades to obtain the next instance
\begin{align*}
v_0^{T'+\delta T}(q,\lambda)&:=\max(\tilde{v_0}^{T'+\delta T}(q,\lambda),\mathcal{M}^{\delta t, \delta \lambda}(q,\lambda,\tilde{v_0}^{T'+\delta T})),\qquad\lambda\geq \underline{\lambda},
\end{align*}
while maintaining the boundary condition
\begin{align*}
v_0^{T'+\delta T}(q,\underline{\lambda}-\delta\lambda)&:=
     U\left(\w_0(q,\underline{\lambda})\right)\exp\left(\mathcal{L}_Y(\alpha q)\right)\quad \text{for}\; T'\in \{0,\delta t,...,T-2\delta t, T-\delta t\}
\end{align*}
 when trading halts.
\appendix 

\section{Proofs for Section~\ref{chapter:model}}\label{chapter:proofs}

\paragraph{Model dynamics admit a unique solution.}\label{sec:uniquesol_model}
\begin{lemma}\label{lemma:uniquesol_model}
 Under condition~\eqref{asp:fandgLipschitzMonotone}, the system dynamics for $(\lambda,L,M)$ given by~\eqref{eq:dynamics_liquidity}--\eqref{eq:dynamics_MO} allow for a unique solution.
\end{lemma}
\begin{proof}
It suffices to rule out that the mutually-exciting dynamics lead to blow-ups. 
To this end, we prove that the expectation of the total variation $V_{[0,t]}(\lambda)$ of the liquidity process $(\lambda_s)$ over $[0,t]$ is bounded for each $t\in[0,T]$. For this we use that $f$ and $g$ have at most linear growth to estimate
\begin{align*}
    &\mathbb{E}[V_{[0,t]}(\lambda)]\\&= \mathbb{E}\left[\int_{ [0,t]\times E \times \mathbb{R}_+}\!\!\mathbbm{1}_{\{y\leq f(\lambda_{s-})\}}|\rho(e)|N(ds,de,dy)+\int_{ [0,t]\times E \times \mathbb{R}_+}\!\!\mathbbm{1}_{\{y\leq g(\lambda_{s-})\}}|\eta(e)|N(ds,de,dy)\right]\\
    &=\int_E|\rho(e)|\nu(de)\mathbb{E}\left[\int_0^tf(\lambda_s)ds\right]+\int_E|\eta(e)|\nu(de)\mathbb{E}\left[\int_0^t g(\lambda_s)ds\right]\\
    &\leq \left(\int_E|\rho(e)|\nu(de)+\int_E|\eta(e)|\nu(de)\right)cT\left(1+|\lambda_{0-}|+\mathbb{E}\left[\int_0^t V_{[0,s]}(\lambda)ds\right]\right).
\end{align*}
The claim follows by Gronwall's inequality.
\end{proof}
\paragraph{Link between price volatility and liquidity of the market --- Proof of Lemma \ref{lemma:link_volaliquidity}.}
The quadratic variation of the price process has dynamics
\begin{align}
d[P]_t=&\int_E\mathbbm{1}_{\{y\leq g(\lambda_{t-})\}}I(\eta(e),\lambda_{t-})^2N(dt,de,dy),\quad [P]_0=0.
\end{align}
Hence, the dynamics of its predictable compensator coincide with \eqref{eq:pred_quadvar_price}.
The derivative of the expression in \eqref{eq:function_price_vola} with respect to $\lambda$ is negative if condition \eqref{eq:elasticitycond} holds; this proves the monotonicity claim.

\paragraph{Decomposition of $C$ --- Proof of Lemma \ref{Lemma:decomposition_q}.}\label{sec:DecompositionStrategies}
It is easy to see that $C$ of the form \eqref{eq:representation_Q} is $\Lambda$-measurable and has bounded total variation. 

To prove the converse, we have to show that the left-jumps of $C$ are of the form~\eqref{eq:representation_deltaleft} with $\Gamma$ and $C^{l,p}$ as described in the lemma. Without loss of generality, we can assume that they are all negative and that their sum $C^l$ is bounded. Then $C^l$ is a $\Lambda$-supermartingale and, thus, can be decomposed into $C^l=M-A-B_-$ for some local $\Lambda$-martingale, some predictable, nondecreasing, rightcontinuous $A$ with $A_0=0$ and some $\Lambda$-measurable nondecreasing, right-continuous pure-jump process $B$ with $B_0=0$; cf.~\cite{Lenglart:80}, Theorem 4, p.527. By localization, $M$ is even a true martingale and, by the martingale representation theorem for Poisson point processes, there is a $\mathcal{P}\otimes \mathcal{E} \otimes \mathcal{B}([0,T])$-measurable $\xi \in L^1(\PP \otimes\, ds \otimes \nu \otimes dy)$ such that
\begin{align}
    M_T = \int_{[0,T] \times E \times [0,\infty)} \xi_s(e,y) \bar{N}(ds,de,dy),
\end{align}
where we adopt the notation from Section~\ref{section:HJB}.
Taking the $\Lambda$-projection gives us
\begin{align}
    M_t =(^{\Lambda}M)_t=\int_{[0,t) \times E \times [0,\infty)} \xi_s(e,y) \bar{N}(ds,de,dy)+\hat{\xi}_t(Z_t), \quad t \in [0,T],
\end{align}
where $\hat{\xi}_t(0):=0$ and $\hat{\xi}_t(\bar{z}):=\int_{\{z=\bar{z}\}} \xi_t(e,y) K(\bar{z};de, dy)$ for $\bar{z} \in z(E\times[0,\infty))\setminus\{0\}$, with $K$ as in~\eqref{eq:disintegration}. Since $A$ is predictable and $N$ places marks only at $\mathcal{P}$-totally inaccessible stopping times, the jumps $\Delta A=:-\Delta C^{l,p}$ almost surely do not coincide with mark times: $\{t \in [0,T]\,:\, \Delta_t A\not=0\} \subset \{t \in [0,T]\,:\, N(\{t\} \times E \times [0,\infty))=0\}$. It follows that $\Delta^l_t C=\Delta^l_t M-\Delta^l_t A-\Delta^l_t B_-=\hat{\xi}_t(Z_t)+\Delta_t C^{l,p}+0$. This is~\eqref{eq:representation_deltaleft} with $\Gamma :=\hat{\xi}$.

\paragraph{Dynamics of realizable portfolio value --- Proof of Proposition~\ref{prop:wealthDynamics}\label{proof:propWealthDynamics}.}
We use that Marcus-style dynamics~\eqref{eq:MarcusDynamicsLiquidityControlled}, \eqref{eq:MarcusDynamicsCashControlled}, \eqref{eq:MarcusDynamicsPositionControlled}, \eqref{eq:MarcusDynamicsPriceControlled} of $\lambda^C$, $X^C$, $Q^C$, and $P^C$ are governed by classical calculus to compute the dynamics of ${W}^C=X^C+Q^CP^C-\left(\zeta|Q^C|+\Xi(Q^C,\lambda^C)\right)$ as
\begin{align}
    {W}^C_t-{W}^C_{0-} 
    &=\;  {X}^C_t+{P}^C_t{Q}^C_t-(x+pq)-\zeta\left(|{Q}^C_t|-|q|\right)-\left(\Xi({Q}^C_t,{\lambda}^C_t)-\Xi(q,\lambda)\right)\\
    &= \int_0^t {Q}^C \diamond d{P}^C-\zeta \int_0^t \diamond\; |d{Q}^C|-\zeta \int_0^t \sgn({Q}^C) \diamond d{Q}^C\\
    & \qquad - \left(\int_0^t \partial_\Delta \Xi({Q}^C,{\lambda}^C) \diamond d{Q}^C+\int_0^t \partial_\lambda \Xi(-{Q}^C,{\lambda}^C)\diamond d{\lambda}^C\right).
\end{align}
With
\begin{align}
    \partial_\Delta \Xi(\Delta,\lambda) &= \sgn(\Delta) I(|\Delta|,\lambda), \\
    \partial_\lambda \Xi(\Delta,\lambda) &= \int_0^{|\Delta|}\partial_\lambda I(z,\lambda)dz = \int_0^{|\Delta|}\int_0^{z} \iota'(\lambda-{\tilde{z}})d{\tilde{z}} dz= \int_0^{|\Delta|} (\iota(\lambda)-\iota(\lambda-z))dz\\ &= |\Delta|\iota(\lambda)-I(|\Delta|,\lambda)
\end{align}
we find
\begin{align}
    {W}^C_t-{W}^C_{0-} 
    =\; & \int_0^t {Q}^C \iota({\lambda}^C) \diamond \left(d{Q}^C+(0\vec{+}d{M}^C\vec{+}0)\right)\\
    &-\zeta \int_0^t \left(\diamond \;|d{Q}^C|+ \sgn({Q}^C) \diamond d{Q}^C\right)\\
    & - \left(\int_0^t \sgn({Q}^C)I(|{Q}^C|,\lambda^C) \diamond d{Q}^C+\int_0^t \left(|{Q}^C|\iota({\lambda}^C)-I(|{Q}^C|,{\lambda}^C)\right)\diamond d{\lambda}^C
    \right)\\
    =\;& \int_0^t \left(|{Q}^C| \iota({\lambda}^C)-I(|{Q}^C|,{\lambda}^C)-\zeta\right)\left(\sgn({Q}^C) \diamond d{Q}^C+\diamond \;|d{Q}^C|\right)\\
    & + \int_0^t |{Q}^C|\iota({\lambda}^C)\sgn({Q}^C) \diamond \left(0\vec{+}d{M}^C\vec{+}0\right)\\&
    -\int_0^t \left(|{Q}^C|\iota({\lambda}^C)-I(|{Q}^C|,{\lambda}^C)\right)\diamond\left(0\vec{+}(dL^C-|d{M}^C|)\vec{+}0\right).
\end{align}
Decomposing 
\begin{align}
    d{Q}^C &= d{Q}^{C,+}-d{Q}^{C,-}=(dC^{l,+}-dC^{l,-})\vec{+}0\vec{+}(dC^{r,c,+}-dC^{r,c,-}),\\
    |d{Q}^C|&=d{Q}^{C,+}+d{Q}^{C,-}= (dC^{l,+}+dC^{l,-})\vec{+}0\vec{+}(dC^{r,c,+}+dC^{r,c,-}),
\end{align}
and decomposing similarly the $d{M}^C$-terms leads to 
\begin{align}
      {W}^C_t-{W}^C_{0-} 
    =\; & \int_0^t \left(|{Q}^C| \iota({\lambda}^C)-I(|{Q}^C|,{\lambda}^C)-\zeta\right)2\left(\mathbbm{1}_{\{{Q}^C>0\}} \diamond d{Q}^{C,+}+\mathbbm{1}_{\{{Q}^C<0\}}d{Q}^{C,-}) \right)\\
    & + \int_0^t \left(2\mathbbm{1}_{\{{Q}^C>0\}}|{Q}^C|\iota({\lambda}^C)-I(|{Q}^C|,{\lambda}^C)\right) \diamond \left(0\vec{+}d{M}^{C,+}\vec{+}0\right)\\&+\int_0^t\left(2\mathbbm{1}_{\{{Q}^C<0\}}|{Q}^C|\iota({\lambda}^C)-I(|{Q}^C|,{\lambda}^C)\right) \diamond \left(0\vec{+}d{M}^{C,-}\vec{+}0\right)\\
    &+
    \int_0^t \left(I(|{Q}^C|,{\lambda}^C)-|{Q}^C|\iota({\lambda}^C)\right) \diamond \left(0\vec{+}d{L}^C\vec{+}0\right).
\end{align}
Rearranging terms gives~\eqref{eq:WealthDynamics}.

\section{Regularity of the value function}\label{sec:regularity}

This section gives the proof of our regularity Theorem~\ref{thm:continuity_all} for the value function.

\paragraph{Total order activity bounded by provided liquidity.}

We can use that market orders and cancellations are executed only as long as liquidity remains greater than $\underline{\lambda}$ to estimate 
\begin{align}\label{eq:proofMQ_lowerbound}
\underline{\lambda}\leq \tilde{\lambda}^C_{t}=\lambda-V_{[0,t]}(\tilde{Q}^C)-V_{[0,t]}(\tilde{M}^C)-\tilde{L}^{C,-}_{t}+\tilde{L}^{C,+}_{t}, \quad t \in [0,T],
\end{align}
with $\tilde{M}^C$, $\tilde{L}^{C,+}$, $\tilde{L}^{C,-}$ as in \eqref{eq:MO_with_circuitbreaker}, \eqref{eq:LO_with_circuitbreaker} and \eqref{eq:cancel_with_circuitbreaker}.
By rearranging~\eqref{eq:proofMQ_lowerbound}, we find 
\begin{align*}
V_{[0, t]}(\tilde{Q}^C)+V_{[0, t]}(\tilde{M}^C)+\tilde{L}^{C,-}_t
\leq \lambda-\underline{\lambda}+\tilde{L}^{C,+}_t\quad t \in [0,T].
\end{align*}
Since the overall liquidity supplied in a market with minimum liquidity $\underline{\lambda}$ can be bounded by
\begin{align}\label{eq:defLiquidityBound}
\tilde{L}^{C,+}_t \leq \int_{[0,t]\times E \times [0,f(\underline{\lambda})]} \rho^+(e) N(ds,de,dy) =: \bar{L}^+_t,   
\end{align}
we find the total amount of market orders and cancellations controlled by
\begin{align}\label{eq:liquidityBound}
    V_{[0, t]}(\tilde{Q}^C)+V_{[0, t]}(\tilde{M}^C)+\tilde{L}^{C,-}_t \leq \lambda-\underline{\lambda}+\bar{L}^+_t, \quad t \in [0,T] .
\end{align}
\paragraph{Multiplicative structure and nondegeneracy of the value function.}
By Proposition~\ref{prop:wealthDynamics}, 
    \begin{align}
       \tilde{W}^C_{T+}  = \; & \w(x,q,p,\lambda)\\
    & + \int_0^{T+} \left(I(|\tilde{Q}^C|,\tilde{\lambda}^C)-|\tilde{Q}^C|\iota(\tilde{\lambda}^C)\right) \diamond \left(0\vec{+}d\tilde{L}^C\vec{+}0\right)\\
     & - \int_0^{T+} \left(I(|\tilde{Q}^C|,\tilde{\lambda}^C)-2\mathbbm{1}_{\{\tilde{Q}^C>0\}}|\tilde{Q}^C|\iota(\tilde{\lambda}^C)\right) \diamond \left(0\vec{+}d\tilde{M}^{C,+}\vec{+}0\right)\label{eq:TerminalWealthRepresentation}\\&-\int_0^{T+}\left(I(|\tilde{Q}^C|,\tilde{\lambda}^C)-2\mathbbm{1}_{\{\tilde{Q}^C<0\}}|\tilde{Q}^C|\iota(\tilde{\lambda}^C)\right) \diamond \left(0\vec{+}d\tilde{M}^{C,-}\vec{+}0\right)\\
    & - \int_0^{T+} 2\left(I(|\tilde{Q}^C|,\tilde{\lambda}^C)-|\tilde{Q}^C| \iota(\tilde{\lambda}^C)+\zeta\right) \diamond \left(\mathbbm{1}_{\{\tilde{Q}^C>0\}}d\tilde{Q}^{C,+}+\mathbbm{1}_{\{\tilde{Q}^C<0\}} d\tilde{Q}^{C,-}\right)\\
    &+Y\rr(\tilde{Q}^C_{T+},\tilde{\lambda}^C_{T+}).
    \end{align}
We notice that neither $x$ nor $p$ plays a role in the above integrals or the $Y$-term. As $\w(x,q,p,\lambda)=x+pq+\w(0,q,0,\lambda)$ the exponential utility structure allows us to separate $x+pq$ and we conclude that
\begin{align}
    v(T,\lambda,q,p,x)  &= -\exp\left(-\alpha(x+pq)\right) v(T,\lambda,q,0,0),
\end{align}
which is the claimed multiplicative form~\eqref{eq:multiplicativeValueFunction} of the value function.

As a further consequence of the above wealth representation, we can use the position limit $\bar{q}$ of~\eqref{eq:inventorybound} and the liquidity bound $\underline{\lambda}$ to estimate
\begin{align}\label{eq:wealthBoundAbove}
    \tilde{W}^C_{T+} \leq \; & \w(x,q,p,\lambda)
        +  \bar{q}\iota(\underline{\lambda})\left(\tilde{L}^{C,+}_T+2V_{[0,T]}(\tilde{M}^C)\right)
    +Y\rr(\tilde{Q}^C_{T+},\tilde{\lambda}^C_{T+}).
\end{align}
Therefore, we find with~\eqref{eq:liquidityBound} that
\begin{align}
    \E&[U_{\alpha}(\tilde{W}^C_{T+})] \\ &\leq -\exp\left(-\alpha \w(x,q,p,\lambda)\right)
\E\left[\exp\left(-3\alpha\bar{q}\iota(\underline{\lambda})(\lambda-\underline{\lambda}+\bar{L}^+_{T})\right)\left.\E\left[\exp(a Y)|\mathcal{F}_T\right]\right|_{a=-\alpha\rr(\tilde{Q}^C_{T+},\tilde{\lambda}^C_{T+})} \right].
\end{align}
Next, due to~\eqref{eq:YProperties}, $Y$ is centered and independent of $\mathcal{F}_T$, the conditional expectation is at least~$1$ by Jensen's inequality, and so
\begin{align}
    \E[U_{\alpha}(\tilde{W}^C_{T+})] & \leq -\exp\left(-\alpha\w(x,q,p,\lambda)\right)\E\left[\exp\left(-3\alpha\bar{q}\iota(\underline{\lambda})(\lambda-\underline{\lambda}+\bar{L}^+_{T})\right) \right]<0.
\end{align}
As this bound does not depend on the choice of $C \in \tilde{\mathcal{C}}(\lambda,q)$, the value function $v$ of~\eqref{eq:value_fct} is non-degenerate.

\paragraph{Continuity of the value function with respect to the time horizon.}

Let us first show that our value functions depends continuously and monotonously on the time horizon:
\begin{lemma}\label{lemma:continuity_time}
    For any $T'\leq T$, we have
    \begin{align}\label{eq:timeRegularityEstimate}
        v(T',q,p,x) \leq v(T,\lambda,q,p,x) \leq v(T',\lambda,q,p,x) \tau(T-T')<0
    \end{align}
 for some nonincreasing continuous function $\tau:[0,\infty) \to (0,1]$ with $\tau(0)=1$.
\end{lemma}
\begin{proof}
 As the investor can liquidate and wait until the time available for investment has elapsed, it is obvious that the value function is monotone in $T$.
     
 For the second inequality, let us consider a control $C \in \tilde{C}$ for time horizon $T$ and use the properties of $Y$ from~\eqref{eq:YProperties} to write its expected utility as
 \begin{align}\label{eq:expectedUtilityTime}
     \E[U(\tilde{W}^C_{T+})] = 
     \E\left[-\exp\left(-\alpha \w\left(\tilde{X}^C_{T+},\tilde{Q}^C_{T+},\tilde{P}^C_{T+},\tilde{\lambda}^C_{T+}\right)
     +\mathcal{L}_Y\left(\alpha \rr(\tilde{Q}^C_{T+},\tilde{\lambda}^C_{T+})\right)\right)
     \right].
 \end{align}
 Analogously to~\eqref{eq:wealthBoundAbove}, we conclude that
 \begin{align}
     \w&\left(\tilde{X}^C_{T+},\tilde{Q}^C_{T+},\tilde{P}^C_{+},\tilde{\lambda}^C_{T+}\right)-\w\left(\tilde{X}^C_{T'+},\tilde{Q}^C_{T'+},\tilde{P}^C_{T'+},\tilde{\lambda}^C_{T'+}\right)\\ &\leq \bar{q}\iota(\underline{\lambda})\left(\tilde{L}^{C,+}_T-\tilde{L}^{C,+}_{T'}+2V_{(T',T]}(\tilde{M}^C)\right)\\
     & \leq \bar{q}\iota(\underline{\lambda})\int_{(T',T]\times E \times [0,\infty)]} \left(\mathbbm{1}_{[0,f(\underline{\lambda})]}(y)\rho^+(e)+2\mathbbm{1}_{[0,g(\infty)]}(y)|\eta(e)|\right)N(ds,de,dy).
 \end{align}
 Similarly,
 \begin{align}
    \mathcal{L}_Y&\left(\alpha \rr(\tilde{Q}^C_{T+},\tilde{\lambda}^C_{T+})\right)-\mathcal{L}_Y\left(\alpha \rr(\tilde{Q}^C_{T'+},\tilde{\lambda}^C_{T'+})\right)\\
     &= \int_{T'}^{T+} \mathcal{L}_Y' (\alpha \rr(\tilde{Q}^C,\tilde{\lambda}^C))\alpha  \mathbbm{1}_{\{\rr(\tilde{Q}^C,\tilde{\lambda}^C)^C>0\}}\Big(\sgn(\tilde{Q}^C) \diamond d\tilde{Q}^C\\&
     \qquad\qquad\qquad\qquad\qquad\qquad\qquad\qquad-\diamond \left(-|d\tilde{Q}^C|+\left(0\vec{+}(d\tilde{L}^{C}-|d\tilde{M}^C|)\vec{+}0\right)\right)\Big)\\&= \int_{T'}^{T+} \mathcal{L}_Y' (\alpha \rr(\tilde{Q}^C,\tilde{\lambda}^C))\alpha  \mathbbm{1}_{\{\rr(\tilde{Q}^C,\tilde{\lambda}^C)>0\}}\Big(2\mathbbm{1}_{\{\tilde{Q}^C>0\}} \diamond d\tilde{Q}^{C,+}+2\mathbbm{1}_{\{\tilde{Q}^C<0\}} \diamond d\tilde{Q}^{C,-}\\&
     \qquad\qquad\qquad\qquad\qquad\qquad\qquad\qquad\qquad\qquad -\diamond \left(0\vec{+}(d\tilde{L}^{C}-|d\tilde{M}^C|)\vec{+}0\right)\Big)\\
     & \geq -\mathcal{L}_Y' (\alpha\bar{q})\alpha (\tilde{L}^{C,+}_{T}-\tilde{L}^{C,+}_{T'}) \\
     &=  
     - \mathcal{L}_Y' (\alpha\bar{q})\alpha\int_{(T',T]\times E \times [0,\infty)]} \mathbbm{1}_{[0,f(\underline{\lambda})]}(y)\rho^+(e)N(ds,de,dy),
 \end{align}
 where for the last estimate we used that $\mathcal{L}_Y'$ is increasing from $\mathcal{L}_Y'(0)=0$ because $Y$ is centered with symmetric, convex cumulant generating function $\mathcal{L}_Y$; cf.~\eqref{eq:YProperties}.
 
 With these estimates we get from~\eqref{eq:expectedUtilityTime}:
 \begin{align}
     \E&[U(\tilde{W}^C_{T+})]\\ &= 
     \E\Bigg[-\exp\left(-\alpha\w\left(\tilde{X}^C_{T'+},\tilde{Q}^C_{T'+},\tilde{P}^C_{+},\tilde{\lambda}^C_{T'+}\right)
     +\mathcal{L}_Y\left(\alpha \rr(\tilde{Q}^C_{T'+},\tilde{\lambda}^C_{T'+})\right)\right)\\
     &\qquad\qquad\cdot 
     \exp\left(-\alpha\left(\w\left(\tilde{X}^C_{T+},\tilde{Q}^C_{T+},\tilde{P}^C_{+},\tilde{\lambda}^C_{T+}\right)-\w\left(\tilde{X}^C_{T'+},\tilde{Q}^C_{T'+},\tilde{P}^C_{T'+},\tilde{\lambda}^C_{T'+}\right)\right)\right)\\
     &\qquad\qquad\cdot\exp\left(\mathcal{L}_Y\left(\alpha \rr(\tilde{Q}^C_{T+},\tilde{\lambda}^C_{T+})\right)-\mathcal{L}_Y\left(\alpha \rr(\tilde{Q}^C_{T'+},\tilde{\lambda}^C_{T'+})\right)\right)\Bigg]\\
     &\leq 
     \E\Bigg[-\exp\left(-\alpha \w\left(\tilde{X}^C_{T'+},\tilde{Q}^C_{T'+},\tilde{P}^C_{T'+},\tilde{\lambda}^C_{T'+}\right)
     +\mathcal{L}_Y\left(\alpha \rr(\tilde{Q}^C_{T'+},\tilde{\lambda}^C_{T'+})\right)\right)\\
     &\qquad\qquad\cdot 
     \exp\left(-\alpha\bar{q}\iota(\underline{\lambda})\int_{(T',T]\times E \times [0,\infty)]} \left(\mathbbm{1}_{[0,f(\underline{\lambda})]}(y)\rho^+(e)+2\mathbbm{1}_{[0,g(\infty)]}(y)|\eta(e)\right)N(ds,de,dy)\right)
     \\&\qquad\qquad\cdot\exp\left(-\mathcal{L}_Y' (\alpha\bar{q})\alpha \int_{(T',T]\times E \times [0,\infty)]} \mathbbm{1}_{[0,f(\underline{\lambda})]}(y)\rho^+(e)N(ds,de,dy)\right)\Bigg].
 \end{align}
  The above integrals with respect to $N$ are independent of $\mathcal{F}_{T'}$. Hence, their contribution can be worked out separately and, by the L\'evy-Khintchine formula, there is a constant $\theta>0$ such that it is of the form $\tau(T-T')=\exp(-\theta(T-T'))$. With another application of~\eqref{eq:expectedUtilityTime}, but now for $T'$ instead of $T$, it follows that
  \begin{align}
      \E[U(\tilde{W}^C_{T+})] \leq 
     \E\left[U(\tilde{W}^C_{T'+})\right]\tau(T-T').
  \end{align}
  With $C \in \tilde{C}$ chosen arbitrarily this yields the second estimate in~\eqref{eq:timeRegularityEstimate}.
\end{proof}

\paragraph{Continuity of the value function with respect to the initial liquidity level and inventory.}

Let us next investigate continuity of $v$ with respect to $(\lambda,q) \in [\underline{\lambda},\infty) \times [-\bar{q},\bar{q}]$. By the multiplicative structure of the value function~\eqref{eq:multiplicativeValueFunction}, it suffices to do this for $v_0$, i.e., for $x=0$, $p=0$. 

Our task is to construct from a given strategy $C \in \tilde{\mathcal{C}}(\lambda,q)$ a strategy $C'$ which is admissible when starting from $(\lambda',q') \in [\underline{\lambda},\infty) \times [-\bar{q},\bar{q}]$ that performs at least almost as well as $C$ does. 
As we will show, such a $C'$ can be obtained by executing the same orders as $C$, at least to the extent possible, i.e., without violating the inventory constraint and only using the available liquidity. Before $C$ is halted at time $\tau^C$ (cf.~\eqref{eq:tauC}), this strategy can formally be described via the solution to the SDE-system
\begin{align}
    \tilde{Q}'_{0-}&=q', & \diamond\; d\tilde{Q}'&= \mathbbm{1}_{\{\tilde{Q}'\in [-\bar{q},\bar{q}), \; \lambda^{C'}>\underline{\lambda}\}} \diamond dC^+-\mathbbm{1}_{\{\tilde{Q}'\in (-\bar{q},\bar{q}], \; \lambda^{C'}>\underline{\lambda}\}} \diamond dC^-, \\
    \lambda^{C'}_{0-}&=\lambda', & \diamond \;d\lambda^{C'} &= 
    \diamond\;  \left(-|d\tilde{Q}'|+\left(0\vec{+}(d\tilde{L}'-|d\tilde{M}'|)\vec{+}0\right)\right),\\
    \tilde{L}'_{0-}&=0, & d\tilde{L}'&=\mathbbm{1}_{\{y\leq f(\lambda^{C'}_{-}), \; \lambda^{C'}_- \geq \underline{\lambda}\}}\rho(e)N(dt,de,dy),\\
    \tilde{M}'_{0-}&=0, & d\tilde{M}'&=\mathbbm{1}_{\{y\leq g(\lambda^{C'}_{-}), \; \lambda^{C'}_-\geq\underline{\lambda}\}}\eta(e)N(dt,de,dy).
\end{align}
Indeed, we have 
\begin{align}\label{eq:DefC'}
    \diamond \; dC':=\diamond \left(\;dC'^l\vec{+}0\vec{+}dC'^{r,c}\right):= \diamond \; d\tilde{Q'} \text{ on } [0,\tau^C) \cap [0,T].
\end{align}
We note that, over $[0,\tau^C) \cap [0,T]$, $\lambda^{C'}$ can fall below $\underline{\lambda}$ only due to an exogenous order. If, first, $\tau^C \leq T$, the halt for $C$ can be triggered for three reasons: If it is due to a left-jump trade ($\Delta^l C>0$), $C'$ liquidates its present position in this moment, possibly triggering a halt in the process, and stops operating afterwards. If, second, $C$'s halt is triggered by an external order that has not halted $C'$, the latter still copies, to the extent allowed by liquidity- and position-constraints, any left-jump trade preceding it, lets any external orders take effect, but after that closes shop by liquidating its position in a state-based trade, again possibly triggering a halt itself. Finally, if, third, $C$'s state-based trade triggered the halt, $C'$ seizes operations after a final liquidating state-based trade, which again is allowed to trigger a halt. 

With $C'$ constructed this way, notice that the dynamics of $\lambda^{C'}$ can be written as $$\diamond \;d\lambda^{C'}=\diamond\;\left((-|dC'^{l}|)\vec{+}(d\tilde{L}'-|d\tilde{M}'|)\vec{+}(-|dC'^{r,c}|)\right),$$ in line with the dynamics~\eqref{eq:MarcusDynamicsLiquidityControlled} of $\lambda^C$. In particular, just like $\lambda^C$, also $\lambda^{C'}$ can drop below $\underline{\lambda}$, but, with the exception of a possible final state-based trade by $C'$ at time $\tau^C$, this halt can only be due to external orders from $\tilde{L}'$ and $\tilde{M}'$. These two and also $\tilde{Q}'$ will in turn not move when $\lambda^{C'}_-<\underline{\lambda}$, i.e., right after $C'$ has triggered this trading halt signal. The dynamics of $\tilde{Q}'$, $\tilde{L}'$, and $\tilde{M}'$ thus come to a halt as well. As a result, all these dynamics are in accordance with those of Section~\ref{section:circuitbreaker} in the sense that $\tilde{L}'=\tilde{L}^{C'}$, $\tilde{M}'=\tilde{M}^{C'}$, $\tilde{Q}'=\tilde{Q}^{C'}$.

\begin{lemma}\label{lem:strategyComparison}
    Assume that $\rho,\eta \in L^2(\nu)$ and $g(\infty)<\infty$. There is a constant $c$ with the properties~(i), (ii), and~(iii) below which does neither depend
    on $(\lambda,q), \, (\lambda',q') \in [\underline{\lambda},\bar{\lambda}] \times [-\bar{q},\bar{q}]$ nor on the choice of $C \in \tilde{\mathcal{C}}(\lambda,q)$ provided that 
\begin{align}
    \label{eq:TriggerOnlyWhenUnwinding}
    \text{$C$ only actively triggers a halt when reducing its position $|Q^C|$, not while expanding it.}
\end{align}
    \begin{itemize}
        \item[(i)] Until it runs into a trading halt, the realizable wealth from strategy $C'$ remains close to that of $C$ in the sense that 
        \begin{align}
         \label{eq:wealthClose}
        |\tilde{W}^C&-\tilde{W}^{C'}|\\&\leq c(|\lambda-\lambda'|+|q-q'|+V^{C,C'})(1+\lambda\vee\lambda'-\underline{\lambda}+\bar{L}^+ +|Y|)\text{ on }   [0,\tau^{C'} \wedge T],
        \end{align}
        where 
        \begin{align}
            V^{C,C'}_. :=V_{[0,.]}(\tilde{L}^C-\tilde{L}^{C'})+V_{[0,.]}(\tilde{M}^C-\tilde{M}^{C'})
        \end{align}
        denotes the cumulative external order volume that exclusively affects the trading environment of just one of the strategies.
        
        \item[(ii)] In expectation the exclusive order volume $V^{C,C'}$ is controlled in the sense that
        \begin{align}\label{eq:controlV}
        \E[V^{C,C'}_{T \wedge \tau^C \wedge \tau^{C'}}]\leq c(|\lambda-\lambda'|+|q-q'|)
        \end{align}
        and
        \begin{align}\label{eq:controlV2}
        \E[(V^{C,C'}_{T \wedge \tau^C \wedge \tau^{C'}})^2]^{1/2}\leq c(|\lambda-\lambda'|+|q-q'|)^{1/2}.
        \end{align}
        
        \item[(iii)] The probability of an early halt for $C'$ is \begin{align}\label{eq:controlTriggerProbability}
            \PP[\tau^{C'}<\tau^C \wedge T] \leq c\left(w_{\rho^-+|\eta|}\left(c(|\lambda-\lambda'|+|q-q'|\right)+ |\lambda-\lambda'|+|q-q'|)\right)
        \end{align}
        where $w_{\rho^-+|\eta|}$ is the concave envelope of 
        the modulus of continuity for $\nu(\rho^-+|\eta|>.)$ on~$(0,\infty)$.
    \end{itemize}
\end{lemma}
\begin{remark}
    The preceding lemma establishes a precise control about how close $C'$ constructed above manages to track $C$. Property~\eqref{eq:TriggerOnlyWhenUnwinding} will only be violated by clearly suboptimal strategies which are of no interest for the the value function; cf.~also the proof of Corollary~\ref{cor:continuityLiquidity}.
\end{remark}
\begin{proof}
    For ease of notation, let us introduce $\tau:=\tau^C$, $\tau':=\tau^{C'}$, $\tilde{\lambda}:=\tilde{\lambda}^{C}$, $\tilde{\lambda}':=\tilde{\lambda}^{C'}$,  $\tilde{W}:=\tilde{W}^{C}_+$, $\tilde{W}':=\tilde{W}^{C'}_+$ as well as $\tilde{Q}:=\tilde{Q}^C$, $\tilde{L}:=\tilde{L}^C$, $\tilde{M}:=\tilde{M}^C$.
    
    The key observation here is that on $[0,\tau'\wedge T]$, the distance $|\tilde{Q}-\tilde{Q}'|+|\tilde{\lambda}-\tilde{\lambda}'|$ can only grow when an external order affects just one of the strategies in the sense that $V:=V^{C,C'}$ grows. Indeed, with $d\hat{Q}:=d\tilde{Q}-d\tilde{Q}'$ denoting the extra trades of $C$ over $C'$, and similarly $d\hat{L}$, $d\hat{M}$ the differences in order flow for $C$ and $C'$, we compute
    \begin{align}
        \diamond \; d(|\tilde{Q}-\tilde{Q}'|+|\tilde{\lambda}-\tilde{\lambda}'|)=&\;\sgn(\tilde{Q}-\tilde{Q}')\diamond d\hat{Q}+\sgn(\tilde{\lambda}-\tilde{\lambda}') \diamond \left(-|d\hat{Q}|+\left(0\vec{+}(d\hat{L}-|d\hat{M}|)\vec{+}0\right)\right)\\
        =&\; (\sgn(\tilde{Q}-\tilde{Q}')-\sgn(\tilde{\lambda}-\tilde{\lambda}')) \diamond d\hat{Q}^+
        \\&+(-\sgn(\tilde{Q}-\tilde{Q}')-\sgn(\tilde{\lambda}-\tilde{\lambda}')) \diamond d\hat{Q}^-\\
        &+\sgn(\tilde{\lambda}-\tilde{\lambda}')\diamond\left(0\vec{+}(d\hat{L}-|d\hat{M}|)\vec{+}0\right).
    \end{align}
    Before time $\tau \wedge \tau'$, an extra buy order by $C$ where $C'$ cannot follow suit (``$d\hat{Q}^+>0$'') can only happen when $\tilde{Q}'=\bar{q}$ or when $\tilde{\lambda}'=\underline{\lambda}$. Therefore, the $d\hat{Q}^+$-integrand in the preceding expression will be $-1$ or even $-2$  in such a moment. Similarly, a sell order exclusive to $C$ (``$d\hat{Q}^->0$'') can only happen when $\tilde{Q}'=-\bar{q}$ or when $\tilde{\lambda}'=\underline{\lambda}$ and, then, the above $d\hat{Q}^-$-integrand will be $-1$ or $-2$. As a result, orders from $C$ or $C'$ can only decrease the distance $|\tilde{Q}-\tilde{Q}'|+|\tilde{\lambda}-\tilde{\lambda}'|$  and this distance increases by at most $dV=|d\hat{L}|+|d\hat{M}|$. It follows that
    \begin{align}\label{eq:distanceEvolution}
        |\tilde{Q}-\tilde{Q}'|+|\tilde{\lambda}-\tilde{\lambda}'|\leq|q-q'|+|\lambda-\lambda'|+V \text{ on } [0,\tau \wedge \tau' \wedge T].
    \end{align}
    Note that we can include $\tau \wedge \tau' \wedge T$ in~\eqref{eq:distanceEvolution} as, by~\eqref{eq:TriggerOnlyWhenUnwinding} and by construction of $C'$, $C$ and $C'$ will both be unwinding their position in this moment if $\tau \leq \tau' \wedge T$. In case $\tau'<\tau \wedge T$, we note that this can only happen due to external orders and so~\eqref{eq:distanceEvolution} holds in this moment because its left-hand side remains unchanged while the right-hand side can only increase.
    Moreover, we can use the above $\sgn$-observation to estimate the excess volume of trades by $C$ over those of $C'$ by 
    \begin{align}
        \int_0^. \diamond\; |d\hat{Q}|&\leq -\int_0^. (\sgn(\tilde{Q}-\tilde{Q}')-\sgn(\tilde{\lambda}-\tilde{\lambda}')) \diamond d\hat{Q}^+
        +(-\sgn(\tilde{Q}-\tilde{Q}')-\sgn(\tilde{\lambda}-\tilde{\lambda}')) \diamond d\hat{Q}^-
        \\&= -\int_0^. \diamond \;d|\tilde{Q}-\tilde{Q}'|+d|\tilde{\lambda}-\tilde{\lambda}'|-\int_0^.\sgn(\tilde{\lambda}-\tilde{\lambda}')\diamond\left(0\vec{+}(d\hat{L}-|d\hat{M}|)\vec{+}0\right)\\
        &\leq|q-q'|+|\lambda-\lambda'|+V \text{ on } [0,\tau \wedge \tau' \wedge T].\label{eq:excessOrderVolumeEstimate}
    \end{align}
    
    \begin{itemize}
    \item[(i)] 
    In the following, we denote by $c$ a generic constant which may grow from line to line, but never depends on $\lambda$, $\lambda'$, $q$, $q'$, $C$, $C'$. Let $c$ be a joint Lipschitz-constant for $\w(T,0,.,0,.)$, $\rr$, and for the functions
    \begin{align}
        a(q,\lambda):=I(|q|,\lambda)-|q|\iota(\lambda) \text{ and }  b^\pm(q,\lambda):=I(|q|,\lambda)-2q^\pm\iota(\lambda). 
    \end{align}
    Representing both $\tilde{W}$ and $\tilde{W}'$ as in~\eqref{eq:TerminalWealthRepresentation} allows us to write 
    \begin{align}
        \tilde{W}-\tilde{W}'
        =&\;\w(T,0,q,0,\lambda)-\w(T,0,q',0,\lambda')\\&+\int_0^.a(\tilde{Q},\tilde{\lambda}) \diamond \left(0\vec{+}d\tilde{L}^C\vec{+}0\right)
        - \int_0^.a(\tilde{Q}',\tilde{\lambda}') \diamond \left(0\vec{+}d\tilde{L}^{C'}\vec{+}0\right)\\
    &-\left(\int_0^{.}  b^+(\tilde{Q},\tilde{\lambda}) \diamond \left(0\vec{+}d\tilde{M}^{C,+}\vec{+}0\right)
     - \int_0^{.} b^+(\tilde{Q'},\tilde{\lambda}') \diamond \left(0\vec{+}d\tilde{M}^{C',+}\vec{+}0\right)\right)\\&
    -\left(\int_0^{.}b^-(\tilde{Q},\tilde{\lambda}) \diamond \left(0\vec{+}d\tilde{M}^{C,-}\vec{+}0\right)
    -\int_0^{.}b^-(\tilde{Q'},\tilde{\lambda}') \diamond \left(0\vec{+}d\tilde{M}^{C',-}\vec{+}0\right)\right)\\
    &-\left(\int_0^. 2b^+(\tilde{Q},\tilde{\lambda})\diamond d\tilde{Q}^+-\int_0^.2b^+(\tilde{Q}',\tilde{\lambda}')\diamond d\tilde{Q}'^+\right)\\
        &-\left(\int_0^. 2b^-(\tilde{Q},\tilde{\lambda})\diamond d\tilde{Q}^--\int_0^. 2b^-(\tilde{Q}',\tilde{\lambda}')\diamond d\tilde{Q}'^-\right)\\
        &-\zeta \left((V_{[0,.]}(C)+|\tilde{Q}|)-(V_{[0,.]}(C')+|\tilde{Q'}|)\right)
        \\&+Y\left(\rr(\tilde{Q},\tilde{\lambda})-\rr(\tilde{Q'},\tilde{\lambda}')\right).
    \end{align}
    Note that, contrary to~\eqref{eq:TerminalWealthRepresentation}, we opt here to express the difference in spread costs via the difference in position variation and the difference of terminal positions that need to be liquidated. The first amounts to $\int_0^. \diamond\;|d\hat{Q}|$ and is thus controlled by~\eqref{eq:excessOrderVolumeEstimate}; the second is controlled by~\eqref{eq:distanceEvolution}. Both controls are of the desired form. 
    
    Since $c$ is a Lipschitz-constant for the function $\w(T,0,.,0,.)$, the first term contributes at most
    \begin{align}
        \label{eq:contr5}
        |\w(T,0,q,0,\lambda)-\w(T,0,q',0,\lambda')| \leq c(|q-q'|+|\lambda-\lambda'|).
    \end{align}
    Also $\rr$ is Lipschitz-continuous with this constant, and so the last term contributes at most $|Y|c(|\tilde{Q}-\tilde{Q'}|+|\tilde{\lambda}-\tilde{\lambda}'|)$, which, by~\eqref{eq:distanceEvolution}, is not more than \begin{align}\label{eq:YContribution}
    |Y|c(|q-q'|+|\lambda-\lambda'|+V)\text{ on } [0,\tau \wedge \tau' \wedge T].
    \end{align}
    
    In fact, we can proceed similarly for the other terms. When a limit order or cancellation affects the environment of both $C$ and $C'$, the integrators for the $a$-terms coincide and the absolute value of the net contribution is found to be no more than
    \begin{align}
        |a(\tilde{Q},\tilde{\lambda}) - a(\tilde{Q'},\tilde{\lambda}')| \diamond \left(0\vec{+}|d\tilde{L}^{C}|\vec{+}0\right)&\leq c(|\tilde{Q}-\tilde{Q'}|+|\tilde{\lambda}-\tilde{\lambda}'|) \diamond \left(0\vec{+}d\tilde{L}^{C,+}\vec{+}0\right).
    \end{align}
    Using~\eqref{eq:distanceEvolution}  in the same way as we did for the $Y$-term, this gives a contribution
    \begin{align}
        |a(\tilde{Q},\tilde{\lambda}) - a(\tilde{Q'},\tilde{\lambda}')| \diamond \left(0\vec{+}|d\tilde{L}^{C}|\vec{+}0\right)&\leq c(|q-q'|+|\lambda-\lambda'|+V) \diamond \left(0\vec{+}|d\tilde{L}^{C}|\vec{+}0\right)
    \end{align}
    for limit orders and cancellations $d\tilde{L}^{C}$ common to $C$ and $C'$. We similarly find 
    \begin{align}
        |b^\pm(\tilde{Q},\tilde{\lambda}) - b^\pm(\tilde{Q'},\tilde{\lambda}')| \diamond \left(0\vec{+}d\tilde{M}^{C,\pm}\vec{+}0\right)&\leq c(|q-q'|+|\lambda-\lambda'|+V) \diamond \left(0\vec{+}d\tilde{M}^{C,\pm}\vec{+}0\right)
    \end{align}
    for common market buy and sell orders. Upon aggregation, we obtain a contribution from non-exclusive external orders which is no more than
    \begin{align}\label{eq:contr4}
        c(|q-q'|+|\lambda-\lambda'|+V)(V_{[0,.]}(2\tilde{M}^C)+V_{[0,.]}(\tilde{L}^C))\text{ on } [0,\tau \wedge \tau' \wedge T].
    \end{align}
    When a market buy or a sell order 
    affects only one of the environments, but not the other, the resulting absolute change in realizable wealth due to the $b^{\pm}$-integrals is no more than its volume times $\max |b^{\pm}| \leq 2\bar{q}\iota(\underline{\lambda})$. Similarly, from the $a$-integrals we collect no more than the exclusive volume times $\max |a| \leq \bar{q}\iota(\underline(\lambda))$. All of these contributions aggregate to no more than a multiple of \begin{align}\label{eq:contr3}
    \bar{q}\iota(\underline(\lambda))V\text{ on } [0,\tau \wedge \tau' \wedge T].
    \end{align}
    
     We can treat the $dQ^{\pm}$-integrals similarly as well: For common trades of $C$ and $C'$ we can net the integrands collecting a contribution of no more than
     \begin{align}\label{eq:contr2}
        c(|q-q'|+|\lambda-\lambda'|+V)V_{[0,.]}(\tilde{Q})\text{ on } [0,\tau \wedge \tau' \wedge T].
    \end{align}
    The contributions when only $C$ trades while $C'$ has to pause
    are bounded by $2\bar{q}\iota(\underline{\lambda})$ times the aggregate extra buy or sell volume $\int_0^. \diamond \; |d\hat{Q}|$ that $C$ does on top of $C'$. Due to~\eqref{eq:excessOrderVolumeEstimate}, we thus pick up a contribution of at most
    \begin{align}\label{eq:contr1}
     2\bar{q}\iota(\underline{\lambda}) ( |q-q'|+|\lambda-\lambda'|+V) \text{ on } [0,\tau \wedge \tau' \wedge T] 
    \end{align}
     from these $C$-exclusive orders.

     Adding the contribution estimates~\eqref{eq:contr5},~\eqref{eq:YContribution}, \eqref{eq:contr4}, \eqref{eq:contr3}, \eqref{eq:contr2}, \eqref{eq:contr1} to the spread costs controlled by~\eqref{eq:excessOrderVolumeEstimate} and~\eqref{eq:distanceEvolution}, we find that
     \begin{align}
         |\tilde{W}-\tilde{W'}|\leq&\; c(|q-q'|+|\lambda-\lambda'|+V)(1+|Y|+2V_{[0,.]}(\tilde{M}^C)+V_{[0,.]}(\tilde{L}^C)+2V_{[0,.]}(\tilde{Q}))\\
         &+2c\bar{q}\iota(\underline(\lambda))( |q-q'|+|\lambda-\lambda'|+V)+2\zeta( |q-q'|+|\lambda-\lambda'|+V)
     \end{align}
     on $[0,\tau \wedge \tau' \wedge T]$. 
     Due to~\eqref{eq:liquidityBound}, the above total variation terms are jointly dominated by~$c(\lambda\vee\lambda'-\underline{\lambda}+\tilde{L}^+$). Finally,
     yet again exchanging $c$ for a higher constant again denoted by $c$ and still independent of the initial data and $C$ and $C'$, we arrive at~\eqref{eq:wealthClose} over $[0, \tau \wedge \tau' \wedge T]$. Since in case $\tau \leq \tau' \wedge T$ both $C$ and $C'$ are unwinding their position in this moment, their realizable portfolio values do not change (cf.\ also~\eqref{eq:WealthDynamics}) and thus remain as close as before. Since $C'$ will also stop trading after $\tau$,~\eqref{eq:wealthClose} extends to $[0, \tau' \wedge T]$ in any case.

    \item[(ii)] 
    We can express $V$ in terms of the point process $N$ as
    \begin{align}\label{eq:VasNIntegral}
        V_{t \wedge \tau \wedge \tau'} = \int_{[0,t \wedge \tau\wedge \tau'] \times E \times [0,\infty)} &\left(\mathbbm{1}_{\{g(\tilde{\lambda}_{s-}) \wedge g(\tilde{\lambda}'_{s-}) \leq y<g(\tilde{\lambda}_{s-}) \vee g(\tilde{\lambda}'_{s-})\}}|\rho(e)|\right.\\
        &\quad\left. + \mathbbm{1}_{\{f(\tilde{\lambda}_{s-}) \wedge f(\tilde{\lambda}'_{s-}) \leq y<f(\tilde{\lambda}_{s-}) \vee f(\tilde{\lambda}'_{s-})\}}|\eta(e)|\right)
        N(ds,de,dy).
    \end{align}
    Upon taking expectation, we can pass to the compensator $dt \otimes \nu(de) \otimes dy$ to obtain
    \begin{align}
        \E[V_{t \wedge \tau \wedge \tau'}] &= \E\Bigg[ \int_{0}^{t \wedge \tau\wedge \tau'} \left(|g(\tilde{\lambda}_{s-})- g(\tilde{\lambda}'_{s-})|\int_E |\rho(e)|\nu(de)\right. \\
        & \qquad\qquad\qquad\left. + |f(\tilde{\lambda}_{s-})- f(\tilde{\lambda}'_{s-})|\int_E |\eta(e)|\nu(de)\right) ds\Bigg].
    \end{align}
    Taking $c$ to be a Lipschitz constant for $f$ and $g$ and using~\eqref{eq:distanceEvolution}, we get 
    \begin{align}
        \E[V_{t \wedge \tau \wedge \tau'}] & \leq \E\Bigg[ \int_{0}^{t \wedge \tau\wedge \tau'} c(|\lambda-\lambda'|+|q-q'|+V_{s})\left(\int_E |\rho(e)|\nu(de)+\int_E |\eta(e)|\nu(de)\right)  ds\Bigg]\\
        & \leq \int_E (|\rho(e)|+ |\eta(e)|)\nu(de) \, \left(c(|\lambda-\lambda'|+|q-q'|)t+c\int_{0}^{t} \E[V_{s \wedge \tau\wedge \tau'}] ds\right).
    \end{align}
    Our estimate for $\E[V_{t \wedge \tau \wedge \tau'}]$ now follows (with a modified $c$ taking into account the above $\nu$-integral) from Gronwall's lemma.

    To get the bound for $\E[V^2_{t \wedge \tau \wedge \tau'}]^{1/2}$, we consider again~\eqref{eq:VasNIntegral} and use Lemma~\ref{lemma:PoissonIntegralMomentEstimate} to conclude that for some constant $c$ depending only on $g(\infty)\vee f(\underline{\lambda})$
    \begin{align}
        \E[V^2_{T  \wedge \tau\wedge \tau'}]^{1/2} & \leq c\Bigg(\int_0^T \int_E \int_0^{g(\infty)\vee f(\underline{\lambda})} \E\Big[\left(\mathbbm{1}_{\{g(\tilde{\lambda}_{s-}) \wedge g(\tilde{\lambda}'_{s-}) \leq y<g(\tilde{\lambda}_{s-}) \vee g(\tilde{\lambda}'_{s-})\}}|\rho(e)|^2\right.\\
        &\qquad\left. + \mathbbm{1}_{\{f(\tilde{\lambda}_{s-}) \wedge f(\tilde{\lambda}'_{s-}) \leq y<f(\tilde{\lambda}_{s-}) \vee f(\tilde{\lambda}'_{s-})\}}|\eta(e)|^2\right)\mathbbm{1}_{\{s \leq T \wedge \tau \wedge \tau'\}}\Big] dy \nu(de) ds\Bigg)^{1/2}\\
        &=c\Bigg(\int_0^T \E\Big[\Big(|g(\tilde{\lambda}_{s-})-g(\tilde{\lambda}'_{s-})|\int_E |\rho(e)|^2\nu(de)\\
        &\qquad\qquad\qquad + |f(\tilde{\lambda}_{s-}) -f(\tilde{\lambda}'_{s-})|\int_E|\eta(e)|^2\nu(de)\Big)\mathbbm{1}_{\{s \leq T \wedge \tau \wedge \tau'\}}\Big]   ds\Bigg)^{1/2}.
    \end{align}
    From here we can again proceed by a Lipschitz-estimate and use~\eqref{eq:distanceEvolution} to arrive at
    \begin{align}
        \E[V^2_{t \wedge \tau \wedge \tau'}]^{1/2} & \leq c\Bigg(Tc(|\lambda-\lambda'|+|q-q'|+\E[V_{T  \wedge \tau \wedge \tau'}])\int_E (|\rho(e)|^2+ |\eta(e)|^2)\nu(de)\Big]   ds\Bigg)^{1/2}.
    \end{align}
    An application of~\eqref{eq:controlV} now allows us to conclude~\eqref{eq:controlV2} by passing to a higher constant $c$.
    
    \item[(iii)] Let us denote by $\Gamma_s(z(e,y))$, $\Gamma'_s(z(e,y))$ the signal-based trades of, respectively, $C$ and $C'$ from Lemma~\ref{Lemma:decomposition_q}. Observe that $\tau'<\tau \wedge T$ can only happen if at time $\tau'=s$ the point process $N(ds,de,dy)$ puts a mark $(e,y)$ which results in too high a market order or cancellation to be covered by the  liquidity $R'_s(e,y):=\tilde{\lambda}'_{s-}-|\Gamma'_s(z(e,y))|-\underline{\lambda}$ remaining after a possible signal-based trade by $C'$, while $C$ is either not exposed to this order or its remaining liquidity $R_s(e,y):=\tilde{\lambda}_{s-}-|\Gamma_s(z(e,y))|-\underline{\lambda}$ suffices to cover that order without triggering a halt. Recalling that $\rho(e) \eta(e) \equiv 0$, this means in particular that a mark $(e,y)$ has to be placed in one of the mark sets
    \begin{align}
        M^1_s&:=\{(e,y) \in E \times [0,g(\infty) \vee f(\underline{\lambda})]\;:\;
        R'_s(e,y)< \rho^-(e)+|\eta(e)| \leq R_s(e,y)\}\\
        M^2_s&:=\{(e,y)\in E \times [0,g(\infty) \vee f(\underline{\lambda})]\;:\;y \in (g(r),g(r')] \cup (f(r),f(r')] \\&\qquad\qquad\qquad\qquad\qquad\qquad\qquad\qquad\text{ where } r=R_s(e,y)+\underline{\lambda}, \;r'=R'_s(e,y)+\underline{\lambda}\}.
    \end{align}
    Notice that with $w$ denoting a modulus of continuity for $\nu(\rho^-+|\eta|>.)$ on $(0,\infty)$ we can estimate for any $y$-section $M^{1,y}_s$ of $M^1_s$ its $\nu$-measure
    \begin{align}
        \nu(M^{1,y}_s) \leq w(R_s(e,y)-R'_s(e,y)) \leq w(\tilde{\lambda}_{s-}-\tilde{\lambda}'_{s-})
    \end{align}
    where for the second estimate we used that $|\Gamma'(z(e,y))| \leq |\Gamma(z(e,y)|$ (recalling that $C'$ always trades at most as much as $C$). For the same reason, we can estimate the $dy$-measure of the $e$-sections $M^{2,e}_s$ of $M^2_s$ by
    \begin{align}
        dy(M^{2,e}_s) \leq 2\max_{y \in [0,g(\infty)\vee f(\underline{\lambda})]}|g(r')-g(r)|\vee|f(r')-f(r)|\leq c|\tilde{\lambda}_{s-}-\tilde{\lambda}'_{s-}|,
    \end{align}
    where as before $r=R_s(e,y)+\underline{\lambda}$, $r'=R'_s(e,y)+\underline{\lambda}$.
    It follows that
    \begin{align}
        \PP[\tau'<\tau \wedge T] & \leq \E\left[\int_{[0,\tau' \wedge T]\times E \times [0,g(\infty) \vee f(\underline{\lambda})]} (\mathbbm{1}_{M^1_s}(e,y)+\mathbbm{1}_{M^2_s}(e,y))
        N(ds,de,dy)\right]\\
        &\leq \E\left[\int_{0}^{\tau' \wedge T} 
        (g(\infty) \vee f(\underline{\lambda}))w(\tilde{\lambda}_{s-}-\tilde{\lambda}'_{s-})
        +\nu(E)c|\tilde{\lambda}_{s-}-\tilde{\lambda}'_{s-}|) ds\right]\\
        &\leq c\int_{0}^{T}\E\left[ 
        (w(\delta)+ \delta)|_{\delta=|\lambda-\lambda'|+|q-q'|+V_{\tau'\wedge T}}
)\right]  ds\\
&\leq cT (w_{\rho^-+|\eta|}(\delta)+ \delta)|_{\delta=|\lambda-\lambda'|+|q-q'|+\E[V_{\tau'\wedge T}]},
    \end{align}
    where we used~\eqref{eq:distanceEvolution} in the last but one step and Jensen's inequality for the concave envelope $w_{\rho^-+|\eta|}$ of $w$ in the last one. Assertion~(iii) now follows from the estimate in~(ii). 
\end{itemize}
\end{proof}

With this lemma at hand we can now establish the main result of this paragraph:

\begin{corollary}\label{cor:continuityLiquidity}
    Assume that $\nu(\rho^-+\eta>.)$ is continuous on $(0,\infty)$ and suppose that $g(\infty)<\infty$ and $\eta,\rho \in L^2(\nu)$ with $\rho^+$ even satisfying the exponential integrability condition~\eqref{eq:rhoExponentialIntegrable}. Then $v_0(T,.,.)=v(T,.,.,0,0)$ is continuous on $[\underline{\lambda},\infty)\times [-\bar{q},\bar{q}]$.
\end{corollary}
\begin{proof}
    Choose $\bar{\lambda}>\underline{\lambda}$. Let us start by proving the exponential moment estimate
    \begin{align}\label{eq:expMomentEstimateWealth} 
        \sup_{(\lambda,q) \in [\underline{\lambda},\bar{\lambda}]\times[-\bar{q},\bar{q}]} \sup_{C \in  \tilde{\mathcal{C}}(\lambda,q)}\E[\exp(-c W^C_{T+})]  <\infty \text{ for any } c>0.
    \end{align}
    For this, consider again the terminal wealth representation~\eqref{eq:TerminalWealthRepresentation} and observe that it yields the lower bound
    \begin{align}
    \tilde{W}^C_{T+} &\geq \w(0,q,0,\lambda)+Y\rr(\tilde{Q}^C_{T+},\tilde{\lambda}^C_{T+})  -  \int_0^{T+} I(|\tilde{Q}^C|,\tilde{\lambda}^C) \diamond \left(0\vec{+}(d\tilde{L}^{C,-}+|d\tilde{M}^C|)\vec{+}0\right)\\
    &\qquad-2\int_0^{T+} \left( I(|\tilde{Q}^C|,\tilde{\lambda}^C)-|\tilde{Q}^C|\iota(\tilde{\lambda}^C)+\zeta\right)\left(\mathbbm{1}_{\{\tilde{Q}^C>0\}} \diamond d\tilde{Q}^{C,+}+\mathbbm{1}_{\{\tilde{Q}^C<0\}} \diamond d\tilde{Q}^{C,-}) \right)\\
    &\geq \w(x,q,p,\lambda)+Y\rr(\tilde{Q}^C_{T+},\tilde{\lambda}^C_{T+})-\bar{q}\iota(\underline{\lambda})(\tilde{L}^{C,-}_T+V_{[0,T]}(\tilde{M}^C))-2(\bar{q}\iota(\underline{\lambda})+\zeta)V_{[0,T]}(\tilde{Q}^C)
\end{align}
With~\eqref{eq:liquidityBound} and using the independence of $Y$ from $\mathcal{F}_T$ and its symmetry (see~\eqref{eq:YProperties}), we thus get with the inventory bound from~\eqref{eq:inventorybound} for any order $c>0$ that
\begin{align}
    \E[\exp(-c\tilde{W}^C_T)] \leq \exp&\left(-c(\w(0,q,0,\lambda)-\bar{q}\iota(\underline{\lambda})(\lambda-\underline{\lambda}))+\mathcal{L}_Y(c\bar{q})\right)
    \E\left[\exp\left(c(3\bar{q}\iota(\underline{\lambda})+2\zeta)\bar{L}^+_T\right)\right].
\end{align}
By the L\'evy-Khintchine formula, the right-hand side is finite under condition~\eqref{eq:rhoExponentialIntegrable}. Since the above estimate is uniform in $C \in \tilde{\mathcal{C}}(\lambda,q)$ and $\w(0,\lambda,0,q)$ is bounded from below uniformly in $(\lambda,q) \in [\underline{\lambda},\bar{\lambda}]\times[-\bar{q},\bar{q}]$, this estimate yields~\eqref{eq:expMomentEstimateWealth} .
    
    Take now $(\lambda,q),(\lambda',q') \in [\underline{\lambda},\bar{\lambda}]\times[-\bar{q},\bar{q}]$.
    For any $\epsilon>0$, we can take an $\epsilon$-optimal $C \in \tilde{\mathcal{C}}(\lambda,q)$ and assume that it triggers a halt only when reducing its exposure $|Q^C|$, i.e.,  it satisfies~\eqref{eq:TriggerOnlyWhenUnwinding}. Indeed, triggering a halt while expanding the exposure means executing a loss-leading round trip because of the forced unwinding of any remaining position after a halt has been triggered.
    
    Now consider the associated $C'\in\tilde{\mathcal{C}}(\lambda',q')$ from the preceding lemma for the estimate
    \begin{align}
        &v_0(T,\lambda,q)-v_0(T,\lambda',q')\\&\leq \epsilon + \E[U(W^C_{T+})]-\E[U(\tilde{W}^{C'}_{T+}]\\
        &\leq \epsilon+\E[U'(\tilde{W}^C_{T+})(\tilde{W}^C_{T+}-\tilde{W}^{C'}_{T+})\mathbbm{1}_{\{\tau^{C'} \geq \tau^{C} \wedge T\}}]
        +\E[(U(\tilde{W}^C_{T+})-U(\tilde{W}^{C'}_{T+}) )\mathbbm{1}_{\{\tau^{C'} < \tau^{C} \wedge T\}}], \label{eq:UtilityEstimate}
    \end{align}
    where we used the concavity of the utility function $U$. 
    
    Let us denote in the following by $c$ again a generic constant that does not depend on the initial data and neither on $C$ or $C'$. 

    Note that on $\{\tau^C \leq \tau^{C'}<T\}$, both $C$ and $C'$ liquidate their positions at the same time and then stop operating. So $\tilde{W}^C_{T+}=\tilde{W}^C_{\tau^{C'}+}=\tilde{W}^C_{\tau^{C'}}$ and $\tilde{W}^{C'}_{T+}=\tilde{W}^{C'}_{\tau^{C'}+}=\tilde{W}^{C'}_{\tau^{C'}}$ on this set. 
    Using this, applying then Lemma~\ref{lem:strategyComparison}~(i) followed by H\"older's inequality,  we find that the first expectation in~\eqref{eq:UtilityEstimate} is
    \begin{align}
        \E[&U'(\tilde{W}^C_{T+})(\tilde{W}^C_{T+}-\tilde{W}^{C'}_{T+})\mathbbm{1}_{\{\tau^{C'} \geq \tau^{C} \wedge T\}}]\\
        & = \E[U'(\tilde{W}^C_{T+})(\tilde{W}^C_{\tau^{C'} \wedge T+}-\tilde{W}^{C'}_{\tau^{C'}\wedge T+})\mathbbm{1}_{\{\tau^{C'} \geq \tau^{C} \wedge T\}}]\\
        &\leq
        \|U'(\tilde{W}^C_{T+})\|_{L^4(\PP)}
        \|1+\bar{\lambda}-\underline{\lambda}+\bar{L}^+_T+|Y|\|_{L^4(\PP)} \|c(|\lambda-\lambda'|+|q-q'|+V^{C,C'}_{\tau^{C} \wedge \tau^{C'} \wedge T})\|_{L^2(\PP)} \\
        &\leq c(|\lambda-\lambda'|+|q-q'|+(|\lambda-\lambda'|+|q-q'|)^{1/2}),
    \end{align}
    where the final inequality is due to~\eqref{eq:expMomentEstimateWealth} and Lemma~\ref{lem:strategyComparison}~(ii). Clearly, there is $\delta(\epsilon)>0$ such that for initial data from $[\underline{\lambda},\bar{\lambda}]\times[-\bar{q},\bar{q}]$ with $|\lambda-\lambda'|+|q-q'|=:\delta<\delta(\epsilon)$ the above expression is less than $\epsilon$.

    Again by~\eqref{eq:expMomentEstimateWealth}, the utilities in the second expectation in~\eqref{eq:UtilityEstimate} are from a uniformly integrable family of random variables. They are integrated over the set $\{\tau^{C'}<\tau^{C} \wedge T\}$ whose probability, due to Lemma~\ref{lem:strategyComparison}~(iii), is uniformly controlled by $ c(w_{\rho^-+|\eta|}\left(\delta\right)+\delta+\delta^{1/2})$ where again $\delta=|\lambda-\lambda'|+|q-q'|$. So, we can decrease $\delta(\epsilon)>0$ even further and still in a way which does not dependent on $(\lambda,q)$, $(\lambda',q')$ or $C$, $C'$ such that for all initial data satisfying $|\lambda-\lambda'|+|q-q'|=\delta<\delta(\epsilon)$ the second expectation in~\eqref{eq:UtilityEstimate} is less than $\epsilon$.

    It follows that if $|\lambda-\lambda'|+|q-q'|<\delta(\epsilon)$ then
    \begin{align}
        v_0(T,\lambda,q)&-v_0(T,\lambda',q')
        \leq 
        3\epsilon.
    \end{align}
    Exchanging the roles of $(\lambda,q)$ and $(\lambda',q')$ in the above argument yields $\delta(\epsilon)>0$ such that
    \begin{align}
        |v_0(T,\lambda,q)&-v_0(T,\lambda',q')|
        \leq 3\epsilon
    \end{align}
    for all $(\lambda,q),(\lambda',q') \in [\underline{\lambda},\bar{\lambda}] \times [-\bar{q},\bar{q}]$ with $|\lambda-\lambda'|+|q-q'|<\delta(\epsilon)$. This proves continuity of $v_0(T,.,.)$ on this domain.
\end{proof}

\paragraph{Continuity of the value function.}

Continuity of the value function on $[\underline{\lambda},\infty) \times [-\bar{q},\bar{q}]$ follows by recalling its multiplicative structure~\eqref{eq:multiplicativeValueFunction} and by combining Lemma~\ref{lemma:continuity_time} with~Corollary~\ref{cor:continuityLiquidity}.

\section{A moment estimate for Poisson-integrals}

For the proof of continuity of the value function, we used the following moment estimate for the integral of a predictable mark-dependent process against a Poisson point process. We record it here for the sake of completeness as we could not find a result to this effect in the literature:

\begin{lemma}\label{lemma:PoissonIntegralMomentEstimate}
    On $(\Omega,\mathcal{F},\PP)$, let $N=N(ds,de)$ be a marked Poisson point process with finite intensity measure $ds \otimes \nu(de)$ and marks in $(E,\mathcal{E})$. Suppose that $(\mathcal{F}_t)$ is a filtration to which $N$ is adapted in the sense that for any $E' \in \mathcal{E}$ the process $N([0,t]\times E') \in \mathcal{F}_t$, $t \geq 0$.
    
    Then, for $n=0,1,\dots$, there is a constant $c_n$ depending only on $T\nu(E)$ such that for any $\mathcal{P}\otimes \mathcal{E}$-measurable $\alpha$ we have
    \begin{align}\label{eq:PoissonEstimate}
        \left\|\int_{[0,T] \times E}|\alpha_s(e)| N(ds,de)\right\|_{L^{2^n}(\PP)} &\leq c_n \|\alpha\|_{L^{2^n}(\PP\otimes ds|_{[0,T]} \otimes \nu)}
        .
    \end{align}
\end{lemma}
\begin{proof}
    As the claim is clearly true for $n=0$ with $c_0:=1$, we can proceed by induction. 
    
    It suffices to consider $\alpha \in L^1(\PP \otimes ds \otimes \nu(de))$. For such $\alpha$, we have that the compensated integral $\int_{[0,.] \times E}|\alpha_s(e)| \bar{N}(ds,de)$ is a martingale. This allows us to apply the Burkholder--Davis--Gundy inequality to obtain a universal constant $b_n$ such that
    \begin{align}
        &\left\|\int_{[0,T] \times E}|\alpha_s(e)| \bar{N}(ds,de)\right\|_{L^{2^n}(\PP)}\\ &\leq b_n\left\|\left[\int_{[0,.] \times E}|\alpha_s(e)| \bar{N}(ds,de)\right]_T^{1/2}\right\|_{L^{2^{n}}(\PP)}
        = b_n\left\|\int_{[0,T] \times E}|\alpha_s(e)|^2 N(ds,de)\right\|^{1/2}_{L^{2^{n-1}}(\PP)}\\
        &\leq b_n \left(c_{n-1}\||\alpha|^2\|_{L^{2^{n-1}}(\PP\otimes ds|_{[0,T]} \otimes \nu)}
        \right)^{1/2}
        =b_n c_{n-1}^{1/2}\|\alpha\|_{L^{2^n}(\PP\otimes ds|_{[0,T]} \otimes \nu)}
        ,
    \end{align}
    where the last estimate follows by applying the claimed result for $n':=n-1$ to $\alpha':=|\alpha|^2$.
    
    Moreover, the compensating integral satisfies
    \begin{align}
        &\left\|\int_{[0,T] \times E}|\alpha_s(e)| ds \otimes \nu(de)\right\|_{L^{2^n}(\PP)} =\left(\int_{([0,T] \times E)^{2n}}\E\left[\prod_{i=1}^{2^n}|\alpha_{s_i}(e_i)|\right] \otimes_{i=1}^{2^n}(ds_i \otimes \nu(de_i))\right)^{1/2^n}\\
        & \leq \left(\int_{([0,T] \times E)^{2n}}\prod_{i=1}^{2^n}\|\alpha_{s_i}(e)\|_{L^{2^n}(\PP)} \otimes_{i=1}^{2^n}(ds_i \otimes \nu(de_i))\right)^{1/2^n}
        =\int_{[0,T] \times E}\|\alpha_{s}(e)\|_{L^{2^n}(\PP)} ds \otimes \nu(de)
        \\
        &\leq (T\nu(E))^{1-1/2^n}\left(\int_{[0,T] \times E}\E[|\alpha_{s}(e)|^{2^n}] ds \otimes \nu(de)\right)^{1/2^n}
        =(T\nu(E))^{1-1/2^n}\|\alpha\|_{L^{2^n}(\PP\otimes ds|_{[0,T]} \otimes \nu)},
    \end{align}
    the last estimate following from  H\"older's inequality with $q=2^n$, $1/p=1-1/2^n$.
     By the triangle inequality the previous two estimates give
     \begin{align}
        \left\|\int_{[0,T] \times E}|\alpha_s(e)| N(ds,de)\right\|_{L^{2^n}(\PP)} 
        \leq \left(b_nc_{n-1}^{1/2}+(\nu(E)T)^{1-1/2^n}\right)\|\alpha\|_{L^{2^n}(\PP\otimes ds|_{[0,T]} \otimes \nu)}
        ,
     \end{align}
    and we conclude~\eqref{eq:PoissonEstimate} for $c_n:=b_nc_{n-1}^{1/2}+(\nu(E)T)^{1-1/2^n}$.
\end{proof}

\bibliography{Literature_Paper}

\begin{thebibliography}{29}
\providecommand{\natexlab}[1]{#1}
\providecommand{\url}[1]{\texttt{#1}}
\expandafter\ifx\csname urlstyle\endcsname\relax
  \providecommand{\doi}[1]{doi: #1}\else
  \providecommand{\doi}{doi: \begingroup \urlstyle{rm}\Url}\fi

\bibitem[Alfonsi and Blanc(2016)]{AlfonsiBlanc:2016}
A.~Alfonsi and P.~Blanc.
\newblock Dynamic optimal execution in a mixed-market-impact {H}awkes price model.
\newblock \emph{Finance and Stochastics}, 20:\penalty0 183–218, 2016.

\bibitem[Almgren and Chriss(2000)]{AlmgrenChriss:2000}
R.~Almgren and N.~Chriss.
\newblock Optimal execution of portfolio transactions.
\newblock \emph{Journal of Risk}, pages 5--39, 2000.

\bibitem[Back(1992)]{Back:92}
K.~Back.
\newblock Insider trading in continuous time.
\newblock \emph{Review of Financial Studies}, 5:\penalty0 387--409, 1992.

\bibitem[Bank and Besslich(2020)]{BankBesslich:2020}
P.~Bank and D.~Besslich.
\newblock Modelling information flows by {M}eyer-$\sigma$-fields in the singular stochastic control problem of irreversible investment.
\newblock \emph{The Annals of Applied Probability}, 2020.

\bibitem[Bank and K\"{o}rber(2022)]{BankKoerber:2022}
P.~Bank and L.~K\"{o}rber.
\newblock Merton's optimal investment problem with jump signals.
\newblock \emph{SIAM Journal on Financial Mathematics}, 13\penalty0 (4):\penalty0 1302--1325, 2022.

\bibitem[Belak et~al.(2018)Belak, Muhle-Karbe, and Ou]{Belaketal:18}
C.~Belak, J.~Muhle-Karbe, and K.~Ou.
\newblock Liquidation in target zone models.
\newblock 2018.
\newblock URL \url{https://doi.org/10.1142/S2382626619500102}.

\bibitem[Cartea and Jaimungal(2016)]{CarteaJaimungal:16}
{\'A}.~Cartea and S.~Jaimungal.
\newblock Incorporating order-flow into optimal execution.
\newblock \emph{Mathematics and Financial Economics}, 10\penalty0 (3):\penalty0 339–364, 2016.

\bibitem[Cartea and S\'{a}nchez-Betancourt(2022)]{CarteaSanchez:22}
{\'A}.~Cartea and L.~S\'{a}nchez-Betancourt.
\newblock Brokers and informed traders: Dealing with toxic flow and extracting trading signals.
\newblock 2022.
\newblock URL \url{https://ssrn.com/abstract=4265814}.

\bibitem[Cartea and Wang(2020)]{CarteaWang:19}
{\'A}.~Cartea and Y.~Wang.
\newblock Market making with alpha signals.
\newblock \emph{International Journal of Theoretical and Applied Finance}, 23\penalty0 (2):\penalty0 2050016, 2020.

\bibitem[Cartea et~al.(2018)Cartea, Jaimungal, and Ricci]{CarteaJaimungalRicci:2018}
{\'A}.~Cartea, S.~Jaimungal, and J.~Ricci.
\newblock Algorithmic trading, stochastic control, and mutually exciting processes.
\newblock \emph{SIAM Review}, 60\penalty0 (3):\penalty0 673--703, 2018.
\newblock \doi{10.1137/18M1176968}.
\newblock URL \url{https://doi.org/10.1137/18M1176968}.

\bibitem[Cartea et~al.(2020)Cartea, Jaimungal, and Wang]{CarteaJaimungalWang:19}
{\'A}.~Cartea, S.~Jaimungal, and Y.~Wang.
\newblock Spoofing and price manipulation in order-driven markets.
\newblock \emph{Applied Mathematical Finance}, 27\penalty0 (1-2):\penalty0 67--98, 2020.
\newblock \doi{10.1080/1350486X.2020.1726783}.

\bibitem[Cartea et~al.(2022)Cartea, Arribas, and S\'{a}nchez-Betancourt]{Carteaetal:22}
{\'A}.~Cartea, I.~P. Arribas, and L.~S\'{a}nchez-Betancourt.
\newblock Double-execution strategies using path signatures.
\newblock \emph{SIAM Journal on Financial Mathematics}, 13\penalty0 (4):\penalty0 1379--1417, 2022.
\newblock \doi{10.1137/21M1456467}.
\newblock URL \url{https://doi.org/10.1137/21M1456467}.

\bibitem[Casgrain and Jaimungal(2019)]{CasgrainJaimungal:19}
P.~Casgrain and S.~Jaimungal.
\newblock {Trading algorithms with learning in latent alpha models}.
\newblock \emph{Mathematical Finance}, 29\penalty0 (3):\penalty0 735--772, 2019.
\newblock \doi{10.1111/mafi.12194}.
\newblock URL \url{https://ideas.repec.org/a/bla/mathfi/v29y2019i3p735-772.html}.

\bibitem[Cayé and Muhle-Karbe(2014)]{CayeMuhlekarbe:2014}
T.~Cayé and J.~Muhle-Karbe.
\newblock {Liquidation with Self-Exciting Price Impact}.
\newblock Swiss Finance Institute Research Paper Series 14-74, Swiss Finance Institute, Dec. 2014.
\newblock URL \url{https://ideas.repec.org/p/chf/rpseri/rp1474.html}.

\bibitem[Chen et~al.(2023)Chen, Petukhov, Xing, and Wang]{ChenPetukhovXingWang}
H.~Chen, A.~Petukhov, H.~Xing, and J.~Wang.
\newblock The dark side of circuit breakers.
\newblock \emph{Journal of Finance, Forthcoming}, 2023.

\bibitem[Chevyrev et~al.(2024)Chevyrev, Korepanov, and Melbourne]{Chevyrev:24}
I.~Chevyrev, A.~Korepanov, and I.~Melbourne.
\newblock Superdiffusive limits beyond the {M}arcus regime for deterministic fast-slow systems, 2024.
\newblock URL \url{https://arxiv.org/abs/2312.15734}.

\bibitem[El~Karoui(1981)]{ElKaroui:81}
N.~El~Karoui.
\newblock Les aspects probabilistes du controle stochastique.
\newblock \emph{Hennequin P.L. (eds) Ecole d’Eté de Probabilités de Saint-Flour IX-1979. Lecture Notes in Mathematics}, 876, 1981.
\newblock Springer, Berlin, Heidelberg .

\bibitem[Horst et~al.(2020)Horst, Fu, and Xia]{FuHorstXia:2020}
U.~Horst, G.~Fu, and X.~Xia.
\newblock Portfolio liquidation games with self‐exciting order flow.
\newblock \emph{Mathematical Finance}, 32:\penalty0 1020 -- 1065, 2020.

\bibitem[Huberman and Stanzl(2004)]{HubermanStanzl:04}
G.~Huberman and W.~Stanzl.
\newblock Price manipulation and quasi-arbitrage.
\newblock \emph{Econometrica}, 72\penalty0 (4):\penalty0 1247--1275, 2004.
\newblock ISSN 00129682, 14680262.
\newblock URL \url{http://www.jstor.org/stable/3598784}.

\bibitem[Kyle(1985)]{Kyle:85}
A.~S. Kyle.
\newblock Continuous auctions and insider trading.
\newblock \emph{Econometrica}, 53:\penalty0 1315--1335, 1985.

\bibitem[Kyle(1988)]{Kyle:88}
A.~S. Kyle.
\newblock Trading halts and price limits.
\newblock \emph{The Review of Future Markets}, \penalty0 (7), 1988.

\bibitem[Lehalle and Mounjid(2017)]{LehalleMounjid:2017}
C.-A. Lehalle and O.~Mounjid.
\newblock Limit order strategic placement with adverse selection risk and the role of latency.
\newblock \emph{Market Microstructure and Liquidity}, 3\penalty0 (1):\penalty0 1750009, 2017.

\bibitem[Lehalle and Neuman(2019)]{LehalleNeumann:22}
C.-A. Lehalle and E.~Neuman.
\newblock Incorporating signals into optimal trading.
\newblock \emph{Finance and Stochastics}, 23\penalty0 (1):\penalty0 275--311, 2019.

\bibitem[Lenglart(1980)]{Lenglart:80}
E.~Lenglart.
\newblock Tribus de {M}eyer et th\'eorie des processus.
\newblock \emph{S\'eminaire de probabilit\'es de Strasbourg}, 14:\penalty0 500--546, 1980.

\bibitem[Lewis(2014)]{Lewis:14}
M.~Lewis.
\newblock \emph{Flash Boys: A Wall Street Revolt}.
\newblock W. W. Norton \& Company, 2014.

\bibitem[Marcus(1980/81)]{Marcus:80}
S.~I. Marcus.
\newblock Modeling and approximation of stochastic differential equations driven by semimartingales.
\newblock \emph{Stochastics}, 4\penalty0 (3):\penalty0 223--245, 1980/81.
\newblock ISSN 0090-9491.
\newblock \doi{10.1080/17442508108833165}.
\newblock URL \url{https://doi.org/10.1080/17442508108833165}.

\bibitem[Neuman and Vo\ss{}(2022)]{NeumannVoss:22}
E.~Neuman and M.~Vo\ss{}.
\newblock Optimal signal-adaptive trading with temporary and transient price impact.
\newblock \emph{SIAM Journal on Financial Mathematics}, 13\penalty0 (2):\penalty0 551--575, 2022.
\newblock \doi{10.1137/20M1375486}.
\newblock URL \url{https://doi.org/10.1137/20M1375486}.

\bibitem[Obizhaeva and Wang(2013)]{ObizhaevaWang:2013}
A.~A. Obizhaeva and J.~Wang.
\newblock Optimal trading strategy and supply/demand dynamics.
\newblock \emph{Journal of Financial Markets}, 16\penalty0 (1):\penalty0 1--32, 2013.
\newblock ISSN 1386-4181.
\newblock \doi{https://doi.org/10.1016/j.finmar.2012.09.001}.
\newblock URL \url{https://www.sciencedirect.com/science/article/pii/S1386418112000328}.

\bibitem[P\'{a}stor and Stambaugh(2003)]{PastorStambaugh:03}
v.~P\'{a}stor and R.~F. Stambaugh.
\newblock Liquidity risk and expected stock returns.
\newblock \emph{Journal of Political Economy}, 111\penalty0 (3):\penalty0 642--685, 2003.
\newblock ISSN 00223808, 1537534X.
\newblock URL \url{http://www.jstor.org/stable/10.1086/374184}.

\end{thebibliography}

\end{document}